\algrenewcommand\algorithmicwhile{\textbf{While}}
\algrenewcommand\algorithmicfor{\textbf{For}}
\algrenewcommand\algorithmicdo{\textbf{Do}}
\algrenewcommand\algorithmicif{\textbf{If}}
\algrenewcommand\algorithmicthen{\textbf{Then}}
\algrenewcommand\algorithmicelse{\textbf{Else}}
\algrenewcommand\algorithmicend{\textbf{End}}
\algrenewcommand\algorithmicreturn{\textbf{Return}}
\theoremstyle{plain}
\newtheorem{lemma}{Lemma}[section]
\newtheorem{proposition}[lemma]{\textbf{Proposition}}
\newtheorem{theorem}[lemma]{\textbf{Theorem}}
\newtheorem{cor}[lemma]{\textbf{Corollary}}
\theoremstyle{definition}
\newtheorem{definition}[lemma]{\textbf{Definition}}
\newtheorem{example}[lemma]{\textbf{Example}}
\newtheorem*{notation}{\textbf{Notation}}
\newtheorem{remark}[lemma]{Remark}
\numberwithin{equation}{section}
\newcommand{\N}{\mathbb{N}}
\newcommand{\R}{\mathbb{R}}
\newcommand{\C}{\mathbb{C}}
\newcommand{\K}{\mathbb{K}}
\newcommand{\p}{\mathbb{P}}
\newcommand{\SE}[1]{\mathrm{SE}_{#1}}
\newcommand{\ppt}{\operatorname{pp}}
\newcommand{\spt}{\operatorname{sp}}
\renewcommand\tilde{\widetilde}
\renewcommand\hat{\widehat}
\newcommand{\tth}{\thinspace}
\newcommand{\rpos}[2]{\operatorname{RP}\!\left(#1,#2\right)}
\newcommand{\vrpos}[2]{\operatorname{VRP}\!\left(#1,#2\right)}
\newcommand{\rposconf}[3]{\operatorname{RP}\!\left(#1,#2,#3\right)}
\newcommand{\conf}[1]{\operatorname{Conf}(#1)}
\newcommand{\midpt}[1]{\chi(#1)}
\newcommand{\Flip}{\operatorname{Flip}}
\newcommand{\inv}{\operatorname{inv}}
\newcommand{\FM}[2]{\operatorname{FM}\!\left( #1, #2 \right)}
\newcommand{\IFM}[2]{\operatorname{IFM}\!\big( #1, #2 \big)}
\newcommand{\albar}{\overline{\alpha}}
\newcommand{\z}{\mathbf{z}}
\newcommand{\zt}{\tilde{\z}}
\newcommand{\cl}{\mathcal{C}\!\ell}
\title{Planar Linkages Following a Prescribed Motion}
\author[Gallet]{Matteo Gallet$^{\ast,\circ}$}
\author[Koutschan]{Christoph Koutschan$^\ast$}
\author[Li]{Zijia Li$^\ast$}
\author[Regensburger]{Georg Regensburger$^\dagger$}
\author[Schicho]{Josef Schicho}
\author[Villamizar]{Nelly Villamizar}
\thanks{$^\ast$ Supported by the Austrian Science Fund (FWF): W1214.} 
\thanks{$^\circ$ Supported by the Austrian Science Fund (FWF): P26607 - 
``Algebraic Methods in Kinematics: Motion Factorisation and Bond Theory''.}
\thanks{$^\dagger$ Supported by the Austrian Science Fund (FWF): P27229.}
\address[MG, CK, ZL, GR, NV]{
         Johann Radon Institute for Computational and Applied Mathematics (RICAM),
         Austrian Academy of Sciences,
         Altenberger Stra\ss e 69,
         4040 Linz, Austria}
\address[JS]{Research Institute for Symbolic Computation (RISC),
         Johannes Kepler University,
         Altenberger Stra\ss e 69,
         4040 Linz, Austria \vspace{.5cm}}
\email{\{matteo.gallet, christoph.koutschan, zijia.li, 
georg.regensburger,
\newline \phantom{mmmmmmmmmmmmmmmmm} 
 josef.schicho, nelly.villamizar\}@ricam.oeaw.ac.at}
\subjclass[2010]{Primary 70B15, 68W30, 70G55, 20G20, 16Z05, 14P05, 12Y05.\\
\phantom{Fn}
First published in Mathematics of Computation in 2016, published by American Mathematical Society}
\begin{document}

\begin{abstract}
Designing mechanical devices, called linkages, that draw a given plane curve 
has been a topic that interested engineers and mathematicians for hundreds of 
years, and recently also computer scientists. Already in 1876, Kempe proposed a 
procedure for solving the problem in full generality, but his constructions 
tend to be extremely complicated. We provide a novel algorithm that produces 
much simpler linkages, but works only for parametric curves. Our 
approach is to transform the problem into a factorization task over some 
noncommutative algebra. We show how to compute such a factorization, and how to 
use it to construct a linkage tracing a given curve.
\end{abstract}

\maketitle

\section{Introduction}
\label{introduction}

Kempe's Universality Theorem \cite{Kempe}, stating that any plane algebraic
curve can be drawn by a mechanical linkage with only rotational joints, 
surprised his contemporaries: his work represented a major breakthrough in a
topic investigated by  mathematicians of the 19th century. In the
middle of the last century, it appeared in textbooks, for example, the ones by
Lebesgue~\cite{Lebesgue1950} and Blaschke and M\"uller~\cite{BlaschkeMueller1956}.  
Recently, there was a revived interest in the problem among mathematicians and 
computer scientists as we outline in the following. We would also like to 
mention~\cite{Malkevitch2002} for a historical overview and further references.

In modern terms, the procedure proposed by Kempe is a parsing algorithm. It
takes the defining polynomial of a plane curve as input and realizes arithmetic
operations via certain elementary linkages. In this work, we approach the question from
a different perspective. Instead of a polynomial, we start with a rational
parametrization of a curve. From this, we obtain a parametrized family of
elements of the group~$\SE{2}$ of direct isometries of the plane, namely a
motion, whose action on a point traces the given curve. In order to realize such
a motion by a linkage, we decompose it into a series of revolutions. By encoding
motions via polynomials over a noncommutative algebra, we reduce this task to a
factorization problem. Eventually, we design a linkage whose rotational joints
move according to the previously obtained revolutions, yielding a device drawing
the desired curve.

Requiring a rational parametrization, it is clear that our approach is less
general than Kempe's, and its subsequent generalizations
\cite{Abbott2008, JordanSteiner1999, KapovichMillson2002, King1999}, 
because it cannot be applied to planar curves of
positive genus. However, the parametric setup has advantages in design
processes: in the language of robotics/kinematics, we are not just prescribing
the position of the end effector but also its orientation, in a similar way as
in~\cite{PlecMcWamp}; moreover,  when one has to interpolate prescribed poses 
of the end effector, we can specify a tracing order and have control over its 
speed. 

Linkages obtained via Kempe's procedure do not, in general, trace only the 
curve they are designed for. More precisely, the curve is only drawn by a
component of the configuration space (also called workspace), that is, the set 
of all positions reachable by the linkage. This is due to the fact that already the 
elementary linkages may admit degenerate configurations, allowing the device to 
flip into an unwanted component. Several solutions to this problem have been
suggested~\cite{Abbott2008,DemaineORourke2007,JordanSteiner1999,
KapovichMillson2002, King1999}. The linkages produced by our method present 
the same issue, however, this can be treated using the techniques presented
in~\cite{Abbott2008,DemaineORourke2007}; see Remark~\ref{remark:bracing}. 

Generation of linkages drawing an arbitrary curve may involve many links and
joints. This was already observed by Kempe~\cite{Kempe}:
\emph{``$\ldots$ this method would not be practically useful
on account of the complexity of the link work employed $\ldots$''}. The problem
of finding upper bounds for the number of joints needed to trace a given planar
curve of degree $d$ was first addressed in~\cite{GaoZhu1999,GaoZhuChouGe2001};
see~\cite[Section~3.2.3]{DemaineORourke2007}. Their 
upper bound $\mathrm{O}(d^4)$ was later improved by Abbott and Barton to
$\mathrm{O}(d^2)$, who in~\cite{Abbott2008} showed that this bound
is optimal. In this paper, we provide an algorithm that applies
to rational planar curves: given a rational parametrization with denominator of
degree~$d$ and without real roots, our algorithm produces a linkage with $3d+2$ 
links and $\frac{9}{2}d+1$ joints (Proposition~\ref{prop:draw}; we are
grateful to Hans-Peter Schr\"{o}cker for discussions leading to this result). If
we apply our technique to the parametrization of an ellipse, we obtain a linkage
with $8$ links and $10$ joints; in Section~\ref{overview}, we illustrate our 
approach and the main ideas of our paper via this example. In contrast, an 
unoptimized linkage returned by Kempe's procedure has $158$ links and $235$
joints; we are grateful to Alexander Kobel, who gave a full 
implementation~\cite{Kobel} of Kempe's procedure, for assisting us in the
computation of these numbers. 

In Section~\ref{linkages}, we develop a mathematical model for linkages with
only rotational joints. In particular, we define their configuration space in
terms of isometries sending a fixed initial configuration to a reachable one.
This differs from the commonly used models~\cite{GoodmanORourke2004,
JordanSteiner1999,KapovichMillson2002,King1999,OwenPower2009} which, instead, 
define configurations in terms of the positions of links and joints.

In Section~\ref{motion_polynomials}, we recall that one can embed $\SE{2}$ as 
an open subset of a real projective space; see~\cite{HustySchroecker}. This 
allows us to introduce a non\-commutative
algebra~$\K$ whose multiplication corresponds to the group
operation in~$\SE{2}$, hence mimicking the role played by dual quaternions with
respect to~$\SE{3}$. A polynomial
with coefficients in~$\K$ therefore describes a family of direct isometries,
which we call a rational motion. Consequently, we refer to such polynomials as motion polynomials; they are
the two-dimensional analog of the motion polynomials introduced 
in~\cite{HegedusSchichoSchroecker2013a}, and they are the key for turning our 
geometric problem into an algebraic one. In fact, we show that a linear motion
polynomial represents a motion constrained by a revolute or prismatic joint. 
Hence a factorization of a motion polynomial into linear factors yields a
decomposition of the corresponding rational motion into simple ones.

In Section~\ref{factorization}, we give necessary and sufficient criteria for
the existence of a factorization of a motion polynomial. Not every motion
polynomial admits such a factorization. However, the correspondence between
motion polynomials and rational motions is not one-to-one: for every rational
motion, there is a whole equivalence class of motion polynomials. Moreover, the
motions we are dealing with are special, since they admit bounded orbits. Once
we restrict to such bounded rational motions, we can prove that the equivalence
class of every bounded rational motion contains a motion polynomial admitting a
factorization into linear polynomials of revolute type. We provide an algorithm
for computing the smallest factorizable polynomial in an equivalence class and 
one of its factorizations; see Theorem~\ref{theorem:bounded_factor}. In this 
context, we would like to refer to the preprint~\cite{LiSchichoSchroecker2015},
where this result is used to prove factorizability of motion polynomials giving
rational motions in~$\SE{3}$.

At the beginning of Section~\ref{construction}, we construct from a 
factorization a linkage, in the form of an open chain, such that the given 
rational motion can be realized as the relative motion of the last link of the 
chain with respect to the first one. Open chains have high mobility, so we have 
to constrain our linkage such that it performs only the motion we are interested
in. The technique we employ, called flip procedure,  is introduced in 
Section~\ref{flip}. In the remaining part of Section~\ref{construction}, we 
exploit the properties of flips and propose our main algorithm; see 
Theorem~\ref{theorem:strong_realization}.

In Section~\ref{collisions}, we address the problem of self-collisions. For 
arbitrary linkages, this is a challenging problem \cite[Section 
9.3]{GoodmanORourke2004}. Here, for the first time in the paper, we take into 
account how linkages are physically realized. We show that  self-collisions can 
be efficiently detected for linkages obtained by our algorithm when links are 
realized by bars. If, instead, we allow links of different shapes, we describe 
a construction showing that it is possible to realize these linkages without 
collisions. This addresses the open problems raised in \cite[Open 
Problem~3.2]{DemaineORourke2007} and~\cite[Section~2.3]{ORourke2011}.

A popular formulation of Kempe's Theorem states that \emph{``There is a linkage 
that signs your name''}. Inspired by this, in Section~\ref{examples}, we give 
an example of a linkage drawing a calligraphic letter.

All algorithms described in the paper have been implemented by the second-named
author in the computer algebra system \textit{Mathematica}. The source code,
an expository notebook, and animations for the main examples are available for
free as electronic supplementary material~\cite{Koutschan15a}.

This paper is the outcome of the joint work of the Symbolic Computation group
at RICAM (Linz): the problem of constructing linkages following a planar
rational motion was proposed in our research seminar and after some time became
its main topic. Each participant contributed to the final result according to
his/her own background and skills (algebraic geometry, combinatorics,
kinematics, etc).

\section{A first example}
\label{overview}

We want to illustrate the main ideas and techniques with an example.
Consider the ellipse $(x+1)^2 + 4y^2 = 1$ in the plane,
which admits the rational parametrization
\[
  \varphi(t) \; = \; \left( \frac{-2}{t^2+1}, \frac{t}{t^2+1} \right).
\]
Our goal is to construct a planar linkage with rotational joints drawing the
curve parametrized by~$\varphi$ that admits only one degree of freedom.  This
means that there is a specific link (on which we put the pen) that performs a
motion along the ellipse as the linkage moves. By construction, the motion of
this link is the composition of several rotations.  We encode motions by
univariate polynomials with coefficients in the algebra~$\K= \C[\eta] /
(\eta^2, \imath \eta + \eta \imath)$, where $\imath$ denotes the imaginary
unit.  Elements of~$\K$ are of the form $z + \eta w$ with $z, w \in \C$, and
they are multiplied as follows:
\[
 (z + \eta w) \cdot (z' + \eta w') \; = \; (z \tth z') + \eta \tth (\overline{z} w' + z' w).
\]
In our case, a translational motion along the ellipse is represented by the polynomial
\[
  P(t) \; = \; (t^2 + 1) + \eta \tth (\imath t - 2).
\]
This means that the orbit of
any point under this motion is a translate of the ellipse. To construct a linkage that
realizes this motion, we want to factor~$P$ into linear polynomials, which correspond to
revolute motions. However, one can check that this is not at all possible!
On the other hand, as pointed out in Remark~\ref{remark:multiply_real}, we
will see that for any $R\in\R[t]$, the motion polynomial $RP \in \K[t]$
describes the same motion as~$P$. In the example, we can take $R=t^2+1$ so that
$RP$ admits a factorization into linear polynomials (see
Theorem~\ref{theorem:bounded_factor}):
\[
  R(t)\cdot P(t) \; = \; \bigl(t + \imath - \eta \, \imath \bigr) \cdot
  \bigl(t - \imath + \tfrac12 \eta \, \imath\bigr) \cdot
  \bigl(t - \imath + \tfrac32 \eta \, \imath\bigr) \cdot (t + \imath).
\]
Such a factorization is computed by the algorithm
\texttt{FactorMotionPolynomial} (see the end of
Section~\ref{factorization}). It allows us to construct a linkage, in the form
of an open chain (left part of Figure~\ref{figure:rigidification}), whose links
move according to the rotations represented by the linear factors (this is
algorithm \texttt{ConstructWeakLinkage} at the beginning of
Section~\ref{construction}). Since such a linkage has a high degree of
freedom, we need to constrain its mobility by adding more links and joints
(right part of Figure~\ref{figure:rigidification}). This is achieved by an
iteration of the so-called flip procedure, described in Sections~\ref{flip}
and~\ref{construction}.

\begin{figure}[ht]
\begin{center}
\begin{tabular}{cc}
\includegraphics[width=0.47\textwidth]{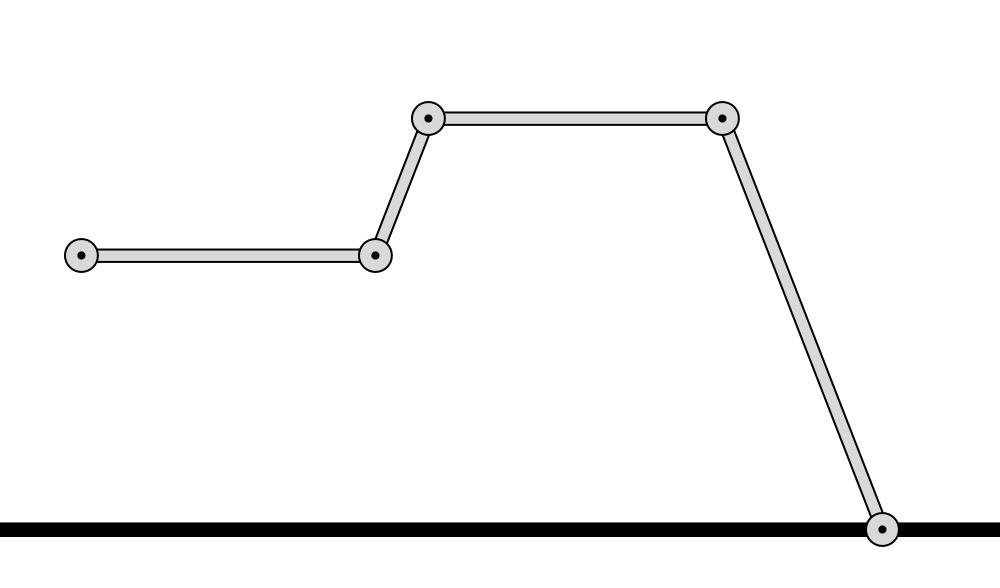} &
\includegraphics[width=0.47\textwidth]{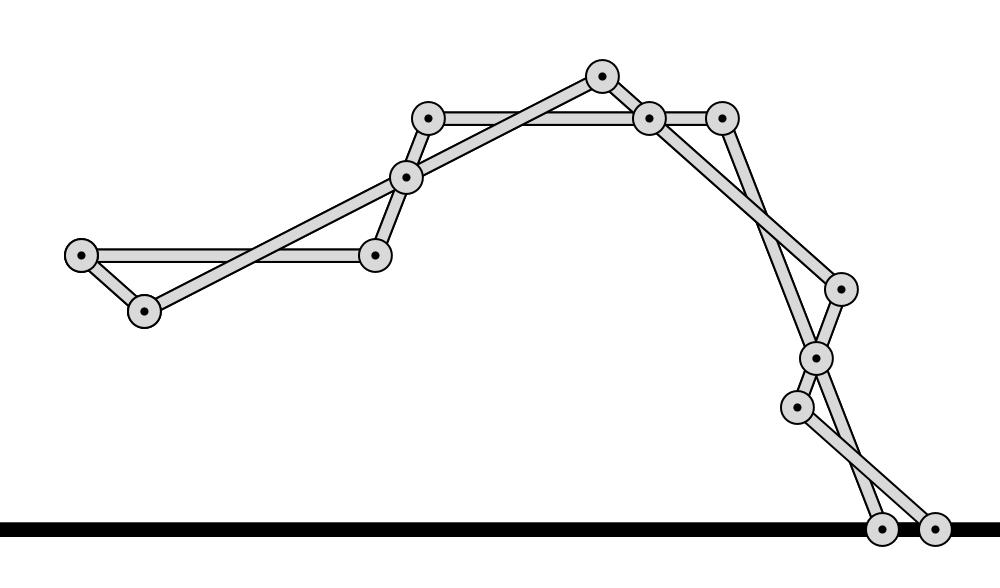}
\end{tabular}
\end{center}
\caption{An open chain and its extension to a linkage of mobility one,
realizing the translational motion given by $P(t)$.}
\label{figure:rigidification}
\end{figure}

However, if we just want to draw the ellipse, we need not realize exactly the
translational motion~$P(t)$: it is enough to find a motion for which the
orbit of one point is the ellipse. As pointed out in Remark~\ref{remark:multiply_complex},
multiplying with a polynomial $C\in\C[t]$ from the left does not change the
orbit of the origin. In our case, we find that the polynomial $CP$ with $C(t)=t-\imath$
factors completely as follows (see Proposition~\ref{prop:draw}):
\[
  C(t)\cdot P(t) \; = \; \bigl(t - \imath - \tfrac12 \eta\, \imath\bigr)\cdot
       \bigl(t - \imath + \tfrac12 \eta\, \imath\bigr)\cdot
       \bigl(t + \imath + \eta\, \imath\bigr).
\]
This factorization gives rise to a slightly simpler construction, see
Figure~\ref{figure:elliptic_linkage}. Further details concerning
this example will be provided in Examples~\ref{example:elliptic_motion},
\ref{example:elliptic_motion_reviewed}, and~\ref{example:elliptic_motion_flips}.

\begin{figure}[ht]
\includegraphics[width=0.70\textwidth]{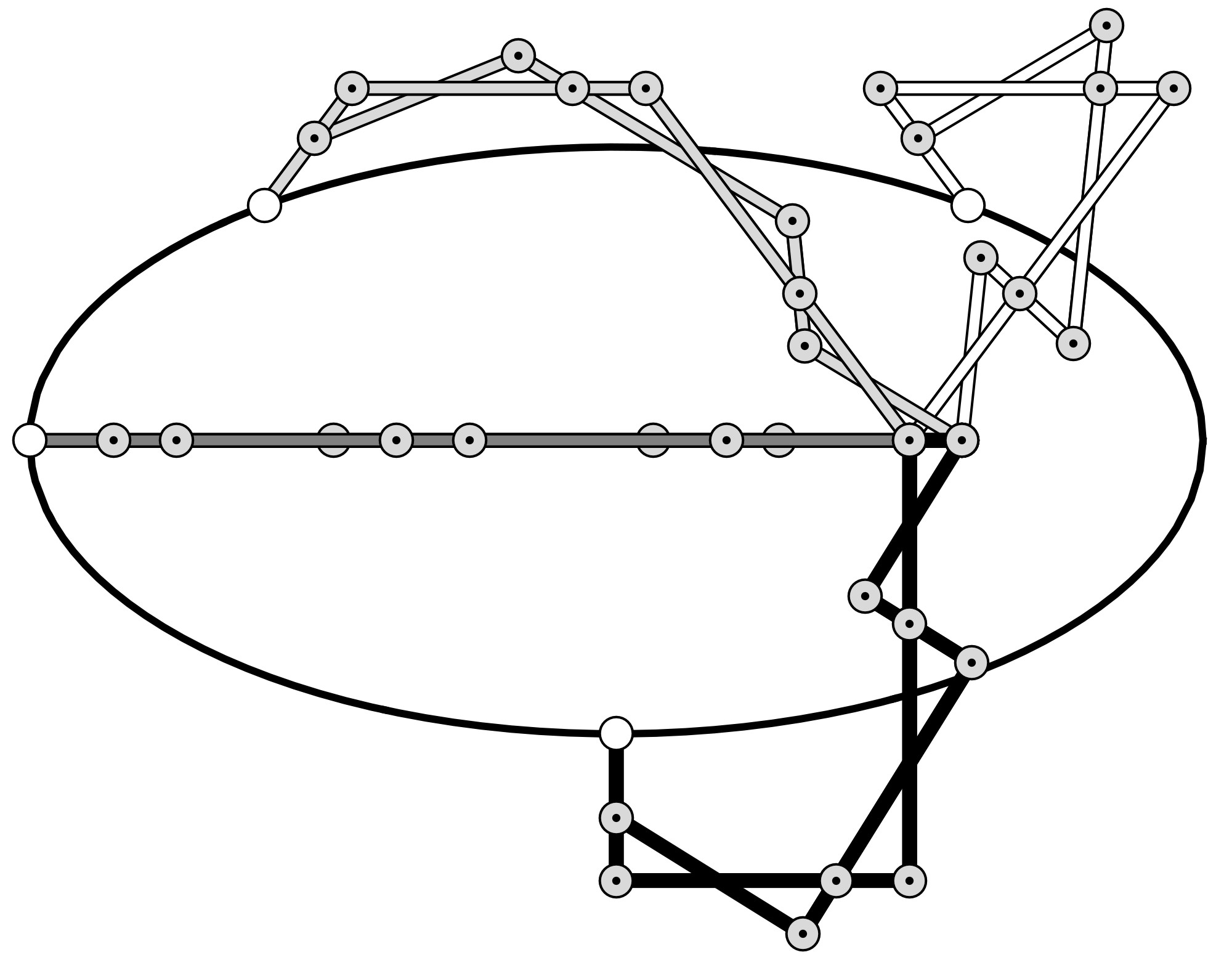}
\caption{The linkage that draws an ellipse. The same linkage is shown in
  different positions: $t=2$ (white), $t=\frac12$ (light gray), $t=0$ (dark
  gray), and $t=-1$ (black).}
\label{figure:elliptic_linkage}
\end{figure}

\section{Linkages}
\label{linkages}

In this section, we define a mathematical model for kinematic objects known as
\emph{linkages}. A linkage is a device constituted by rigid bodies, called
\emph{links}, connected by \emph{joints}, which restrict the relative position
of two neighboring links. In this paper we focus on \emph{planar} linkages
with \emph{revolute joints}, namely linkages for which all links move in
parallel planes and whose joints allow only rotations around a point. In our
model we will not be concerned with the shape of the links --- the only
exception will be Section~\ref{collisions} --- and we will suppose that joints
are the only constraints for the motion of the links.  Thus we can represent a
linkage by a graph whose vertices correspond to links, and where two vertices
are connected by an edge if and only if the corresponding links are connected
by a joint (for an example, see Figure~\ref{figure:graph_linkage}). But the
graph does not encode how the joints constrain the motion of the links, since
it does not specify their (initial) position. So we have to add this extra
information.

\begin{definition}
\label{definition:linkage}
A \emph{linkage} with revolute joints is a connected undirected graph $G = (V,
E)$, together with a map $\rho\colon E \longrightarrow \R^2$, such that $G$
does not have self-loops, i.e., all edges connect different vertices.  The
elements of~$V$ are called \emph{links}, while the elements of $E$ are called
\emph{joints}. We call two links \emph{neighboring} if they are connected by a
joint. For a joint $e \in E$, the point $\rho(e)$ is called the \emph{center
 of rotation} of~$e$.  In the following, we will always assume that $V$ is of
the form $\{1, \dotsc, n\}$ and that elements of~$E$ are given by unordered
pairs~$\{i,j\}$ of elements $i,j \in V$.
\end{definition}

\begin{remark}
An implementation of a linkage, where links are realized by line segments between
the joints, also looks like a graph. Note that this
graph is not the same as the one in Definition~\ref{definition:linkage}, rather it 
is its dual. The lengths of the links are then given implicitly by distances between
initial positions of joints.
\end{remark}
\begin{figure}[ht]
\begin{center}
  \includegraphics[width=0.5\textwidth]{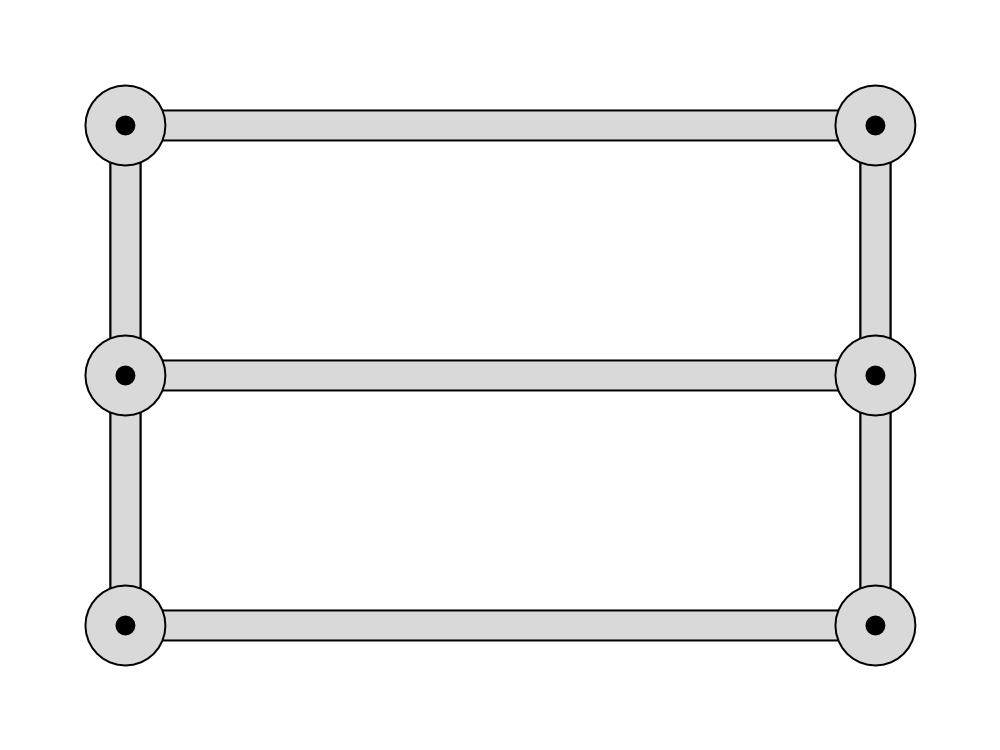}
  \hspace{1cm}
  \begin{minipage}[b]{0.4\textwidth}
  \begin{tikzpicture}
    \Vertex[x=0,y=1]{1}
    \Vertex[x=4,y=1]{2}
    \Vertex[x=2,y=2]{3}
    \Vertex[x=2,y=1]{4}
    \Vertex[x=2,y=0]{5}
    \Edge(1)(3)
    \Edge(1)(4)
    \Edge(1)(5)
    \Edge(2)(3)
    \Edge(2)(4)
    \Edge(2)(5)
  \end{tikzpicture}
  \vspace{1cm}
  \end{minipage}
\end{center}
\caption{A picture of a linkage (on the left) and the corresponding link graph
  (on the right).}
\label{figure:graph_linkage}
\end{figure}
Next we explain how we describe configurations of a linkage, namely, the
positions of all its links at a given moment. There are essentially two ways to 
indicate the position of a link: we can give either its absolute position with
respect to some frame of reference, or its relative position with
respect to some other link. Both these notions will be encoded by means of 
isometries.
\begin{notation}
We denote by~$\SE{2}$ the group of \emph{direct isometries} of~$\R^2$, i.e.,
maps that preserve distances and orientation (of the standard basis of~$\R^2$):
\[
  \SE{2} \; := \; \bigl\{ \sigma \colon \R^2 \longrightarrow \R^2 \, : \;
  \sigma \mathrm{\ is\ an\ isometry}, \; \det{\sigma} = 1 \bigr\}.
\]
\end{notation}

For each link~$i$ there is a unique isometry~$\sigma_i \in \SE2$ taking it
from its initial position to a current one; we say that $\sigma_i$ is the
\emph{absolute position} of the link~$i$. The \emph{relative position} of a
link~$i$ with respect to some other link~$j$ is defined to be~$\sigma_{i,j} =
\sigma_i \circ \sigma_j^{-1}$; hence we have the relation
$\sigma_{i,j}=\sigma_{i,k} \circ \sigma_{k,j}$. If $i$ and $j$ are neighboring
links, then $\sigma_{i,j}$ is a rotation around the point~$p=\rho(i,j)$, which
is the initial position of the joint $\{i,j\}$. This follows directly, since
$\sigma_i(p)=\sigma_j(p)$. All this motivates the following definition of
``virtual'' relative positions; the name indicates that they do not take into
account the constraints imposed by the linkage.

\begin{definition}
\label{definition:relative_position_neighboring}
	Let $L = (G, \rho)$ be a linkage, and let $i,j \in V$ be links connected by a 
joint~$e \in E$. The \emph{set of virtual relative positions} $\vrpos{i}{j}$ of 
the link~$i$ with respect to the link~$j$ is the subgroup of~$\SE{2}$ of 
rotations around the point~$\rho(e)$. Notice that $\vrpos{i}{j} = \vrpos{j}{i}$.
\end{definition}

We define a \emph{configuration} of a linkage~$L$ as a collection of virtual 
relative positions of links satisfying the following conditions: if $(i, h_1), 
(h_1, h_2), \dotsc, (h_{s}, i)$ is a \emph{directed cycle} in~$G$, namely a 
sequence of pairs of links connected by a joint, starting and ending at the same 
link, then  the composition $\sigma_{i, h_1} \circ \dotsb \circ \sigma_{h_s, 
i}$ --- which gives the relative position of~$i$ with respect to itself --- 
should be the identity. Loosely speaking, a configuration consists of the 
angles at all joints (counted twice), together with the (redundant) information 
$\rho(e)$ for all $e\in E$.

\begin{definition}
\label{definition:configuration}
The \emph{set of configurations} of a linkage~$L$ is defined to be
\begin{multline*} 
  \conf{L}  :=  \bigg\{ ( \sigma_{k,l} ) \in
 \!\!\! \prod_{\{i,j\} \in E} \vrpos{i}{j} \times \vrpos{j}{i} \, : \; \mathrm{for\ 
every\ directed\ cycle} \\
   (i, h_1), (h_1, h_2), \dotsc, (h_{s}, i) \mathrm{\ in\ } G, 
\mathrm{\ we\ have\ } \sigma_{i, h_1} \circ \dotsb \circ \sigma_{h_s, i} = 
\mathrm{id} \bigg\},
\end{multline*}
where $\prod$ denotes the Cartesian product so that $(\sigma_{k,l})$ is a tuple of size~$2|E|$.
\end{definition}

This definition of configuration space is the main difference between our model and the ones appearing in the literature~\cite{GoodmanORourke2004,
JordanSteiner1999,KapovichMillson2002,King1999,OwenPower2009}. 

\begin{remark}
\label{remark:projection_configuration}
	The cycle conditions imposed in the definition of $\conf{L}$ have the 
following consequence: suppose that $i,j \in V$ are two neighboring 
links; then $(i,j), (j,i)$ is a directed cycle from~$i$ to~$i$. Therefore, if 
$\Sigma = ( \sigma_{k,l} )$ is a configuration of~$L$, the 
cycle condition imposes that $\sigma_{i,j} \circ \sigma_{j,i} = \mathrm{id}$, 
implying that $\sigma_{i,j} = \sigma_{j,i}^{-1}$. Hence for every linkage $L = 
(G, \rho)$, the projection
\[ 
  \prod_{\{i,j\} \in E} \vrpos{i}{j} \times \vrpos{j}{i} \longrightarrow 
  \prod_{\substack{\{i,j\} \in E \\ i < j}} \vrpos{i}{j} 
\]
is a bijection when restricted to~$\conf{L}$ (and an isomorphism if we consider the projective structure
on~$\conf{L}$ we will define soon), and similarly for every projection that 
forgets exactly one among each pair~$(i,j)$ and~$(j,i)$. Still, we chose the 
above definition for $\conf{L}$ because it does not fix an orientation of the 
edges, allowing us to deal with arbitrary directed paths (see 
Definition~\ref{definition:relative_position}).
\end{remark}

Notice that for every pair  of neighboring links~$i$ and~$j$, the 
subgroup~$\vrpos{i}{j}$ can be set-theoretically identified with the real 
projective line~$\p^1_{\R}$. Under this identification, every cycle 
condition imposed in Definition~\ref{definition:configuration} becomes a closed 
condition in the Zariski topology, since it is given by multihomogeneous
polynomials. In this way $\conf{L}$ acquires the structure of a projective 
subvariety of~$\left( \p^1_{\R} \right)\!{}^{2 \left| E \right|}$.

\begin{definition}
\label{definition:mobility}
We define the \emph{mobility} of a linkage~$L$ to be the dimension 
of the configuration space~$\conf{L}$ as a projective subvariety of~$\left( 
\p^1_{\R}\right)\!{}^{2 \left| E \right|}$.
\end{definition}

So far we only took into account the relative position of two neighboring
links. For our purposes, namely to construct a linkage that follows a
prescribed motion, we need to take one link (the ``base'') as fixed, and
consider the relative positions\linebreak of all the other links with respect to the
base.

\begin{definition}
\label{definition:relative_position}
	Let $L = (G, \rho)$ be a linkage. Let $\Sigma \in \conf{L}$ and let $i, j \in
V$ be links. Let $(i, h_1), (h_1, h_2), \dotsc, (h_s, j)$ be a directed path 
in~$G$ from~$i$ to~$j$ --- which exists, since by 
Definition~\ref{definition:linkage} the graph~$G$ is connected. 
Then we define the \emph{relative position of~$j$ with respect to~$i$ in the 
configuration} $\Sigma=(\sigma_{k,l})$ as
\[
  \rposconf{i}{j}{\Sigma} \; := \; \sigma_{i, h_1} \circ \dotsb 
\circ \sigma_{h_s, j} \in \SE{2}.
\]
Notice that, because of the cycle condition, this definition is independent of 
the chosen path. We define the \emph{set of relative positions of~$j$ with 
respect to~$i$} to be
\[ \rpos{i}{j} \; := \; \bigl\{ \rposconf{i}{j}{\Sigma} \, : \; \Sigma \in
\conf{L} \bigr\} \; \subseteq \; \SE{2}. \]
\end{definition}

\section{Motion polynomials}
\label{motion_polynomials}
As mentioned in Section~\ref{introduction}, our goal is to reduce the main
problem to an algebraic one. We first introduce an algebraic setting for
manipulating isometries, consisting in a noncommutative $\R$-algebra~$\K$ whose 
multiplication corresponds to the group operation in~$\SE{2}$. This is an 
instance of a general construction associating a Clifford algebra 
to each group of isometries~$\SE{n}$; see for example
\cite[Section~9.2]{Selig2005}. Then we define the main object of our work,
motion polynomials, as polynomials over~$\K$. 

\smallskip
Following~\cite{HustySchroecker}, one can embed $\SE{2}$ in the real
projective space $\p^3_{\R}$ with coordinates $x_1,x_2,y_1,y_2$, as the open
subset
\begin{equation}
\label{equation:open_set}
  U \; := \; \p^3_{\R} \setminus \bigl\{ (x_1:x_2:y_1:y_2) \,:\; x_1^2 + x_2^2 =
0 \bigr\}.
\end{equation}
Hence $U$ is the complement of the projective line $x_1 = x_2 = 0$.  The 
(right) action of an element $(x_1:x_2:y_1:y_2)\in U$ on a point 
$(x,y)\in\R^2$ is given by:
\begin{equation}
\label{equation:action_SE2_R2}
  \begin{pmatrix} x \\ y \end{pmatrix}
  \;\; \mapsto \;\;
  \frac{1}{x_1^2 + x_2^2} \left[ 
  \begin{pmatrix} x_1^2 - x_2^2 & -2x_1 x_2 \\ 2x_1 x_2 & x_1^2 - x_2^2 
\end{pmatrix}
  \begin{pmatrix} x \\ y \end{pmatrix} +
  \begin{pmatrix} x_1 y_1 - x_2 y_2 \\ x_1 y_2 + x_2 y_1 \end{pmatrix} \right].
\end{equation}
Moreover, the group operation in $\SE{2}$ becomes a bilinear map:
\begin{equation}
\label{equation:mult_projective}
	\begin{aligned}
		(x_1: x_2: y_1: y_2) \cdot (x_1': x_2': y_1': y_2') \; = & \;\; \bigl( x_1 
x_1' - x_2 x_2' : \tth x_1 x_2' + x_2 x_1' : \\
		  & \;\; x_1 y_1' + x_2 y_2' + y_1 x_1' - y_2 x_2' : \\
		  & \;\; x_1 y_2' - x_2 y_1' + y_1 x_2' + y_2 x_1' \bigr),
	\end{aligned}
\end{equation}
where $(x_1: x_2: y_1: y_2)$ and $(x_1': x_2': y_1': y_2')$ represent
two direct isometries $\sigma, \sigma' \in \SE{2}$, respectively. 

For an easier handling of the multiplication in this
embedding of~$\SE{2}$, we introduce the following notation: we write a
representative $(x_1: x_2: y_1: y_2)$ of an element of $\SE{2}$ as a pair
$(z,w)\in\C^2$, where $z = x_1 + \imath \tth x_2$ and $w = y_1 + \imath \, y_2$ 
(here $\imath$ is the imaginary unit). Then
Equation~\eqref{equation:mult_projective} can be rewritten concisely as
\begin{equation}
\label{equation:mult_complex}
	(z,w) \cdot (z',w') \; = \; \bigl( z \tth z', \overline{z} \tth w' + z' \tth w
\bigr) 
\end{equation}
where $z, z', w$ and $w'$ are multiplied as complex numbers, and the bar 
$\overline{(\cdot)}$ is complex conjugation.
We can go further, by writing a pair $(z,w)$ in the form $z + \eta \tth w$, and
by postulating that $\eta$ satisfies the two relations:
\[
  z \tth \eta \; = \; \eta \tth \overline{z} \quad \text{for all } z \in \C
\quad 
  \mathrm{and} \quad \eta^2 \; = \; 0.
\]
Then the multiplication we obtain is exactly the one described in 
Equation~(\ref{equation:mult_complex}):
\[
	(z + \eta \tth w) \cdot (z' + \eta \tth w') \; = \; z \tth z' + \eta \tth
\bigl(\overline{z} \tth w' + z' \tth w \bigr).
\]
\begin{definition}
\label{def:algebraK}
We define the $\R$-algebra
\[
\K := \C[\eta]/(\eta^2,
\imath \tth \eta + \eta \tth \imath).
\]
Based on the previous discussions, we can identify elements of $\K$ with
elements of~$\SE{2}$. In this way the projective space
$\p^3_{\R}$ in which we embed $\SE{2}$ can be thought as the projectivization
$\p(\K)$ of $\K$, considered as an $\R$-vector space. 
\end{definition}

Notice that the algebra~$\K$ is constructed in such a way that its 
multiplication is a lift of the group operation of~$\SE{2}$: this means that 
if two isometries $\sigma_1,\sigma_2\in\SE{2}$ are represented by
$k_1,k_2 \in \K$, then $\sigma_2\circ\sigma_1$ is represented by~$k_1 \cdot 
k_2$ (remember we have a right action). More precisely, one can prove that 
$\K$ is isomorphic to the even subalgebra~$\cl^+(0,2,1)$ of the Clifford 
algebra~$\cl(0,2,1)$ of~$\SE{2}$; see~\cite[Section~9.2]{Selig2005}.

\begin{remark}
\label{remark:algebra_inverses}
  Looking at the construction of the algebra~$\K$, we notice that the condition
$x_1^2 + x_2^2 \neq 0$ for points in~$\p^3_{\R}$ representing isometries in
$\SE{2}$ becomes $z \neq 0$ when we consider elements of $\K$. Moreover, the
identity isometry is represented in~$\p^3_{\R}$ by the point $(1:0:0:0)$, hence
by any purely real element of $\K$, namely elements of the form $z + \eta \tth
w$ with $z \in \R$ and $w = 0$. From this we see that, given $k = z +
\eta \tth w \in \K$ representing an isometry $\sigma \in \SE{2}$, then $k' = 
\overline{z} - \eta \tth w$ represents~$\sigma^{-1}$, since the product~$k 
\tth k'$ equals~$|z|^2$, which is purely real. 
\end{remark}

The following lemma characterizes simple subgroups of~$\SE{2}$ in a 
geometric fashion: it shows that both translational and revolute motions,
i.e., families of translations resp.\ rotations, correspond to lines
in~$\p^3_{\R}$.
\begin{lemma}
\label{lemma:one_dim_motions}
Let $\ell \subseteq \p^3_{\R}$ be a projective line passing through the point
$(1:0:0:0)$, and define $\ell_{U} = \ell \cap U$ and
$X = \ell \setminus U$, where $U$ is defined 
in Equation~\ref{equation:open_set}. Then:
\begin{enumerate}
\item\label{lemma:one_dim_motions:case1}
  if $X$ has cardinality~$1$, then $\ell_{U}$ corresponds to a
  subgroup of~$\SE{2}$ that consists of all translations along a fixed common
  direction;
\item\label{lemma:one_dim_motions:case2}
  if $X$ is empty, then $\ell_{U}$ corresponds to a subgroup
  of~$\SE{2}$ that consists of all rotations around a fixed common point.
\end{enumerate}
\end{lemma}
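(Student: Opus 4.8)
The plan is to work directly with the parametrization of the line~$\ell$ in the $(z,w)$-coordinates introduced before the lemma and to read off the orbit of a point using the action formula~\eqref{equation:action_SE2_R2}. Since $\ell$ passes through $(1:0:0:0)$, which corresponds to $z=1$, $w=0$, we may parametrize $\ell$ by taking the point $(1:0:0:0)$ together with a second point, say $(a_1:a_2:b_1:b_2)$, and writing a general point of $\ell$ as $(s+\lambda a_1 : \lambda a_2 : \lambda b_1 : \lambda b_2)$ for $(s:\lambda)\in\p^1_\R$; in the complex notation this is $z = s + \lambda a$, $w = \lambda b$ with $a = a_1+\imath a_2$, $b = b_1+\imath b_2$. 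The condition defining $X = \ell\setminus U$ is $x_1^2+x_2^2 = |z|^2 = 0$, i.e.\ $z=0$, which as a multihomogeneous equation in $(s:\lambda)$ reads $(s+\lambda a_1)^2 + (\lambda a_2)^2 = 0$.

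First I would analyze this quadratic in $(s:\lambda)$. The point $(1:0)$ (i.e.\ $\lambda=0$, the base point $(1:0:0:0)$) is never a root, so $X$ is cut out by a genuine quadratic in the affine coordinate $\lambda/s$, namely $|1+\lambda a|^2 = 0$, equivalently $|\lambda|$-scaled: $(1+\lambda a_1)^2 + \lambda^2 a_2^2 = 0$. Completing the square or using the discriminant, this has a double real root exactly when $a_2 = 0$ and $a\neq 0$ (one root at $\lambda = -1/a_1$), no real root when $a_2\neq 0$ (the two roots are complex conjugates $\lambda = -\overline a/|a|^2$ and its conjugate, genuinely non-real), and if $a=0$ then $\ell$ would be the line $x_1=x_2=0\dots$ but that line does not pass through $(1:0:0:0)$, so $a\neq 0$ is automatic. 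Thus: $|X|=1 \iff a\in\R\setminus\{0\}$, and $X=\emptyset \iff a\notin\R$. This dichotomy is the backbone of the proof.

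Next, in each case I would compute the action on an arbitrary point $(x,y)\in\R^2$ and identify the resulting family. Writing $p = x+\imath y\in\C$, the action~\eqref{equation:action_SE2_R2} in complex form sends $p$ to $\bigl(z^2 p + z w\bigr)/|z|^2 = (z/\overline z)\,p + w/\overline z$ (using $z^2 = z\cdot z$, and $x_1 y_1 - x_2 y_2 + \imath(x_1 y_2 + x_2 y_1)$ being the real and imaginary parts of $z\tth w$; one should double-check the sign/conjugation conventions against~\eqref{equation:action_SE2_R2}, but it comes out to $p\mapsto (z^2 p + z\tth w)/|z|^2$). In Case~\eqref{lemma:one_dim_motions:case1}, $a\in\R$, so $z = s + \lambda a\in\R$ and the rotational part $z^2/|z|^2 = 1$: every element of $\ell_U$ acts as a pure translation $p\mapsto p + w/z = p + \lambda b/(s+\lambda a)$. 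As $(s:\lambda)$ ranges over $\p^1_\R$ (minus the one bad point), the scalar $\lambda b/(s+\lambda a)$ traces out $\R\cdot b = \R\cdot(b/a)\cdot a$, i.e.\ all real multiples of a fixed direction vector (the direction of $b$, unless $b=0$ in which case $\ell_U$ is the trivial group, which is also ``all translations along a fixed direction'' vacuously — or one notes $b\neq0$ since otherwise $\ell$ is a single point). So $\ell_U$ is exactly the one-parameter group of translations in the direction determined by~$b$. In Case~\eqref{lemma:one_dim_motions:case2}, $a\notin\R$; I would produce the fixed point $c\in\C$ of the family by solving $(z^2 c + z w)/|z|^2 = c$, i.e.\ $z\tth w = (|z|^2 - z^2)\,c = z(\overline z - z) c$, i.e.\ $w = (\overline z - z) c = -2\imath\,\mathrm{Im}(z)\cdot c$; substituting $z = s+\lambda a$, $w = \lambda b$ gives $\lambda b = -2\imath\lambda\,\mathrm{Im}(a)\,c$, so $c = b/(-2\imath\,\mathrm{Im}(a)) = \imath b/(2\,\mathrm{Im}(a))$, which is independent of $(s:\lambda)$ — a single common point. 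Hence every element of $\ell_U$ fixes $c$ and is therefore a rotation about $c$; conversely, as $(s:\lambda)$ varies, the rotation angle (the argument of $z^2/|z|^2$, i.e.\ twice the argument of $s+\lambda a$) takes all values, so $\ell_U$ is the full one-parameter group of rotations about~$c$.

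I expect the main obstacle to be purely bookkeeping: getting the conjugation conventions in~\eqref{equation:action_SE2_R2} right when translating to the complex form, and cleanly handling the degenerate sub-cases ($b=0$, or $\ell$ degenerating to a point, or the second defining point of $\ell$ chosen to lie in $U$ versus not). A cleaner alternative that sidesteps coordinates: in Case~\eqref{lemma:one_dim_motions:case2}, observe that $\ell_U$ is a connected $1$-dimensional subgroup of $\SE2$ with $\ell_U$ compact (it is a projective line minus nothing, hence all of $\p^1_\R\cong S^1$, sitting inside $U$), and the only compact connected $1$-dimensional subgroups of $\SE2$ are the rotation groups about a point; dually in Case~\eqref{lemma:one_dim_motions:case1}, removing one point makes it noncompact, and the translation subgroups are the noncompact connected $1$-dimensional subgroups. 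But for a self-contained argument I would carry out the explicit computation above, since it also pins down the direction/center.
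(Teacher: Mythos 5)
Your proposal is correct and takes essentially the same route as the paper: parametrize the line through $(1{:}0{:}0{:}0)$ and a second point, plug into the action formula, and read off either a trivial rotational part (translations) or a common fixed point (rotations); the paper merely chooses the second spanning point adapted to each case ($(0{:}0{:}a{:}b)\in X$ in case (i), a point with $x_1=0$ in case (ii)) instead of deriving the dichotomy from the roots of $|z|^2=0$ as you do. One small slip: your claim that $a=0$ forces $\ell$ to be the line $x_1=x_2=0$ is false --- $a=0$ just means the second chosen point lies on that line, which is exactly the paper's case (i) setup --- so the correct dichotomy is $|X|=1\iff a\in\R$ (including $a=0$); this does not damage your argument, since the case-(i) computation ($z=s+\lambda a$ real, translations by $\lambda b/(s+\lambda a)$) covers $a=0$ verbatim.
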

\begin{proof}
We analyze the two cases separately.\\
\textit{Ad \ref{lemma:one_dim_motions:case1}:}
By hypothesis we have that $(1:0:0:0) \in \ell$ and $(0:0:a:b) \in X\subset\ell$
for some $a,b \in \R$, not both zero. Hence the line~$\ell$ admits the 
parametrization:
\[ (\lambda: 0: a \mu: b \mu), \qquad \text{for\ } (\lambda: \mu) \in \p^1_{\R}.
\]
Plugging this parametrization into Equation~\eqref{equation:action_SE2_R2},
we see that the elements of~$\ell_{U}$ are translations by the 
vector $\nicefrac{\mu}{\lambda}\,(a,b)$.\\
\textit{Ad \ref{lemma:one_dim_motions:case2}:}
Again by hypothesis we have that $(1:0:0:0) \in \ell$; moreover there
exists a point in~$\ell$ which is of the form $(0:a:b:c)$. Since $X$ is
assumed to be empty, we have $a\neq0$, and hence the line~$\ell$ admits
the parametrization:
\[
  (\lambda : a \mu: b \mu: c \mu), \qquad \text{for\ } (\lambda: \mu) \in
\p^1_{\R}.
\]
We use Equation~\eqref{equation:action_SE2_R2} to compute the fixed points of 
an arbitrary element of~$\ell$. A direct calculation reveals that we always get 
the point $(-\nicefrac{c}{2a},\nicefrac{b}{2a})$, which is independent 
of~$\lambda$ and~$\mu$. Thus every point of~$\ell_{U}$ represents a 
rotation around~it.
\qedhere
\end{proof}

The description of revolutions around a point given by 
Lemma~\ref{lemma:one_dim_motions} enables us to compute effectively the configuration 
curve of a linkage, as shown in Example~\ref{example:mobility}.

\begin{example}
\label{example:mobility}
Let us consider the linkage $L$ whose graph is depicted in 
Figure~\ref{figure:mobility}. The linkage~$L$ is given by $4$ links and $4$ 
joints, and the map~$\rho$ is determined by:
\[
  \begin{array}{ccccc}
    \rho(1,3) & = & u_1 & = & \left( 0, -\nicefrac{3}{2} \right), \\
    \rho(3,4) & = & u_3 & = & \left( -\nicefrac{3}{4}, \nicefrac{9}{2} \right), 
\\
    \rho(2,4) & = & u_4 & = & \left( -1, \nicefrac{13}{2} \right), \\
    \rho(1,2) & = & u_2 & = & \left( -\nicefrac{1}{4}, \nicefrac{1}{2} \right).
  \end{array}
\]
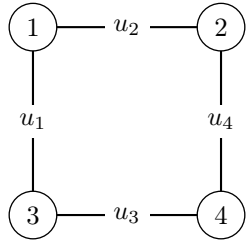
\begin{figure}
\begin{tikzpicture}
  \Vertex[x=0,y=2.5]{1}
  \Vertex[x=0,y=0]{3}
  \Vertex[x=2.5,y=0]{4}
  \Vertex[x=2.5,y=2.5]{2}
  \Edge[label=$u_1$](1)(3)
  \Edge[label=$u_3$](3)(4)
  \Edge[label=$u_4$](4)(2)
  \Edge[label=$u_2$](1)(2)
\end{tikzpicture}
\caption{Graph of a linkage constituted by $4$ links and $4$ joints, a 
so-called closed $4R$-linkage. The edges are labeled by points~$u_i$ in~$\R^2$ 
that are the images of the edges under the map~$\rho$.}
\label{figure:mobility}
\end{figure}
\noindent We compute the configuration set of~$L$ and its dimension as a 
projective variety. Since we have $4$ joints, it will be a projective 
subvariety of $\left( \p^1_{\R} \right)\!{}^4$ (here, rather than $\conf{L}$, 
we are considering one of its projections, but as noticed in 
Remark~\ref{remark:projection_configuration} we get an isomorphic object). 
Lemma~\ref{lemma:one_dim_motions} ensures that the $4$ subgroups of~$\SE{2}$ 
of rotations around the points~$u_i$ --- which constitute the
subgroups of virtual positions between two neighboring links as in
Definition~\ref{definition:relative_position_neighboring} --- correspond to $4$
lines $\ell_i \subseteq \p^3_{\R}$ with parametrizations:
\[
  \begin{array}{ccc}
  \textrm{center } u_i & \qquad & \textrm{parametrization of line } \ell_i \\
  \hline
  \left( 0, -\nicefrac{3}{2} \right) & & (\lambda_1: \mu_1: -3 \mu_1: 0) \\
  \left( -\nicefrac{3}{4}, \nicefrac{9}{2} \right) & & (\lambda_3: 2 \mu_3: 18 
\mu_3: 3 \mu_3)\\
  \left( -1, \nicefrac{13}{2} \right) & & (\lambda_4: \mu_4: 13 \mu_4: 2 \mu_4) 
 \\
  \left( -\nicefrac{1}{4}, \nicefrac{1}{2} \right) & & (\lambda_2: 2\mu_2: 
2\mu_2: \mu_2 )
  \end{array}
\]
where $(\lambda_i: \mu_i) \in \p^1_{\R}$ for all $i \in \{1, \dotsc, 
4\}$. Recall from Definition~\ref{definition:configuration} that, in this case, 
a configuration for~$L$ is a $4$-tuple $(\sigma_{1,2}, \sigma_{2,3}, 
\sigma_{3,4}, \sigma_{4,1})$ of direct isometries that satisfies the cycle 
condition $\sigma_{1,2} \circ \sigma_{2,3} \circ \sigma_{3,4} \circ \sigma_{4,1} 
= \mathrm{id}$. Each of the isometries~$\sigma_{k,l}$ gives a point on one
projective line~$\ell_i$, and in our projective model of~$\SE{2}$ composition
corresponds to multiplication according to
Equation~\eqref{equation:mult_projective}. Under 
these identifications, the composition $\sigma_{1,2} \circ \sigma_{2,3} \circ 
\sigma_{3,4} \circ \sigma_{4,1}$ corresponds to a point $(F_1: F_2: F_3: F_4) \in 
\p^3_{\R}$, where all $F_i$ are polynomials in the variables 
$(\lambda_1:\mu_1), \dotsc, (\lambda_4:\mu_4)$. Noticing that the identity 
element of~$\SE{2}$ is represented by the point $(1:0:0:0) \in \p^3_{\R}$, one 
realizes that the cycle condition is equivalent to
\[
  \operatorname{rk}\begin{pmatrix}
  F_1 & F_2 & F_3 & F_4 \\ 1 & 0 & 0 & 0
  \end{pmatrix} \; = \; 1.
\]
Hence, as a subvariety of $\left( \p^1_{\R} \right)\!{}^4$, the configuration 
set~$\conf{L}$ is the zero set of the polynomials $F_2, F_3, F_4$. A computer 
algebra computation shows that this is a one-dimensional variety with two 
components.
\end{example}

We introduce now one of the main concepts of this paper, namely 
\emph{motion polynomials}. Intuitively, a motion can be described as a curve in 
the space of direct isometries. In our case, we want such a curve to be defined 
by a rational parametrization.

\begin{definition}
\label{definition:rational_motion}
  Let $X_1, X_2, Y_1, Y_2 \in \R[t]$ be polynomials such that $X_1^2 +
X_2^2$ is not identically zero; denote by $V(X_1, X_2, Y_1, Y_2)$ the
set of their common zeros in~$\R$. The map $\phi \colon \R \setminus V(X_1, X_2,
Y_1, Y_2) \longrightarrow \p^3_{\R}$ defined by $X_1, X_2, Y_1, Y_2$ is called a
\emph{rational motion}. We will use the notation
$\phi \colon \R \dashrightarrow \p^3_{\R}$ to mean that $\phi$ is not
defined everywhere on~$\R$. Thinking of $\SE{2}$ as an open subset
of~$\p^3_{\R}$, the condition on $X_1^2 + X_2^2$ ensures that for all but
finitely many $t \in \R$ we have $\phi(t) \in \SE{2}$.
\end{definition}
\begin{definition}
\label{definition:motion_poly}
	Let $\phi$ be a rational motion given by $(X_1,X_2, Y_1, Y_2)$. The polynomial
$P(t) = Z(t) + \eta \tth W(t) \in \K[t]$, where $Z= X_1 + \imath X_2$ and $W =
Y_1 + \imath Y_2$, is called a \emph{motion polynomial} encoding the
motion~$\phi$. The polynomials~$Z$ and~$W$ are respectively called the
\emph{primal} and \emph{secondary part} of~$P$, denoted by~$\ppt(P)$
resp.~$\spt(P)$.
\end{definition}

Next we want to connect motion polynomials with rational plane curves.
For this purpose it is helpful to rephrase the action of~$\SE{2}$ on~$\R^2$
shown in Equation~\eqref{equation:action_SE2_R2} in the 
following way: we identify a point $(x,y) \in \R^2$ with the 
element $u = x + \imath y \in \C$; if $\sigma \in \SE{2}$ is represented by 
an element $z + \eta \tth w \in \K$, then $\sigma$ sends~$u$ to
\begin{equation}\label{equation:action_K_C}
  \frac{uz^2 + zw}{|z|^2}.
\end{equation}
Using this formulation of the (right) action, it is easy to prove the following 
result.
\begin{proposition}
\label{prop:curve_motion}
Let $\varphi\colon \R \longrightarrow \R^2$ be a rational parametrization of a 
real curve, which means that $\varphi$ is of the form
\[
  \varphi(t) \; = \; \left( \frac{f(t)}{h(t)}, \frac{g(t)}{h(t)} \right)
\]
for some real polynomials~$f,g$ and~$h$. Then the orbit of the origin
under the motion given by the motion polynomial 
$P \; = \; h + \eta \tth (f + \imath g)$
is exactly the image of~$\varphi$.
\end{proposition}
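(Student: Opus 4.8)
The plan is to apply the action formula~\eqref{equation:action_K_C} directly to the motion polynomial $P = h + \eta\,(f + \imath g)$, evaluated at a parameter value $t$, acting on the origin $u = 0 \in \C$. Writing $P(t) = Z(t) + \eta\, W(t)$ with $Z = h$ and $W = f + \imath g$, the element $P(t)$ represents an isometry in~$\SE{2}$ precisely when $h(t) \neq 0$ (by Remark~\ref{remark:algebra_inverses}, since the primal part must be nonzero), and for such $t$ we apply~\eqref{equation:action_K_C} with $u = 0$: the image of the origin is
\[
  \frac{0 \cdot Z(t)^2 + Z(t)\,W(t)}{|Z(t)|^2} \;=\; \frac{h(t)\bigl(f(t) + \imath g(t)\bigr)}{|h(t)|^2} \;=\; \frac{f(t) + \imath g(t)}{h(t)} \;=\; \frac{f(t)}{h(t)} + \imath\,\frac{g(t)}{h(t)}.
\]
Under the identification of $\C$ with $\R^2$ sending $x + \imath y$ to $(x,y)$, this is exactly $\varphi(t) = \bigl(f(t)/h(t),\, g(t)/h(t)\bigr)$.

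The only remaining point is to match the domains so that ``the orbit of the origin'' and ``the image of $\varphi$'' literally coincide. The orbit of the origin under the motion given by $P$ is, by Definition~\ref{definition:rational_motion}, the set $\{\,P(t)\cdot 0 : t \in \R \setminus V(X_1, X_2, Y_1, Y_2)\,\}$ where here $(X_1, X_2, Y_1, Y_2) = (h, 0, f, g)$; the excluded set is therefore the common zero set of $h$, $f$, and $g$. On the other hand $\varphi$ is defined on $\R \setminus V(h)$. For $t$ with $h(t) = 0$ but $(f(t), g(t)) \neq (0,0)$, the isometry $P(t)$ is undefined (it lies outside $\SE{2}$), so such $t$ contribute nothing to the orbit and are correctly excluded from $\varphi$'s domain only if one also regards $\varphi(t)$ as undefined there — which is consistent, since $f(t)/h(t)$ is then undefined. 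For $t$ with $h(t) = f(t) = g(t) = 0$, both $\varphi(t)$ and $P(t)\cdot 0$ are undefined. Hence the two point sets agree on the nose, and the computation above shows the orbit equals the image of $\varphi$.

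I do not expect any real obstacle here: the statement is essentially a one-line substitution into~\eqref{equation:action_K_C}, and the proposition is flagged in the text as ``easy to prove.'' The only mild subtlety is bookkeeping the parameter values where $h$ vanishes, as discussed above; one should state explicitly that at such values neither side is defined, so that the equality of images holds without qualification.
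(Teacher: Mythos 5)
Your computation is exactly the intended argument: the paper offers no separate proof beyond remarking that the claim follows easily from the action formula~\eqref{equation:action_K_C}, and your substitution $u=0$, $z=h(t)$, $w=f(t)+\imath g(t)$ is precisely that one-line verification, with the domain bookkeeping a harmless (and correct) extra. No gaps; this matches the paper's approach.
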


Let $P_1$ and $P_2$ be two motion polynomials. Because of the algebraic
properties of~$\K$, the polynomial~$P = P_1 P_2$ defines a motion that is the
composition of the motions determined by~$P_2$ and~$P_1$. More precisely, for 
every $t \in \R$, the isometry~$P(t)$ is the composition of the 
isometries~$P_2(t)$ and~$P_1(t)$. This shows how important the factorization of 
a motion polynomial is in our framework: it provides a decomposition of a 
motion into simpler ones.

We saw  in Lemma~\ref{lemma:one_dim_motions} that rational motions whose image 
is a line in~$\p^3_{\R}$ are translational motions or revolutions. Hence linear 
motion polynomials encode this kind of motions. It follows that, if we are 
able to factor a motion polynomial into linear ones, then we can decompose a rational 
motion into revolutions and translational motions. Later we will consider a 
suitable subclass of motion polynomials, so that only the first situation 
occurs.

\begin{remark}
\label{remark:multiply_real}
Let $\phi\colon \R \dashrightarrow \p^3_{\R}$ be a rational motion and
let $P(t) \in \K[t]$ be the corresponding motion polynomial, so $P = Z + \eta
\tth W$ with $Z, W \in \C[t]$. Notice that, although $Z$ and $W$ have 
coefficients in~$\C$, the polynomial~$P$ determines a real curve 
in~$\p^3_{\R}$, namely a curve described by real equations. Hence, if $R \in 
\R[t]$ is a nonzero real polynomial, one can check that $R P \in 
\K[t]$ provides the same curve in~$\p^3_{\R}$ as~$P$.
\end{remark}

\begin{remark}
\label{remark:multiply_complex}
Let again $\phi\colon \R \dashrightarrow \p^3_{\R}$ be a rational motion and
let $P(t) \in \K[t]$ be the corresponding motion polynomial. For any $C\in\C[t]$
the orbits of the origin under the motions given by $P$ and $C P$ are the 
same since $C$ represents a revolution around the origin --- recall that $CP$ 
acts on $\R^2$ by first applying $C$ and then~$P$.
\end{remark}

\begin{example}
The rational motion given by the polynomial $t + \eta$ is a vertical
translational motion. Indeed, its image in $\p^3_{\R}$ is parametrized by
$(\lambda:0:\mu:0)$, so that we are in case~\ref{lemma:one_dim_motions:case1}
of Lemma~\ref{lemma:one_dim_motions}. In contrast, the motion polynomial $P(t) =
t + \imath$ gives a revolution around the origin $(0,0)$. This is an instance
of case~\ref{lemma:one_dim_motions:case2} of Lemma~\ref{lemma:one_dim_motions},
using the parametrization $(\lambda:\mu:0:0)$.
\end{example}
\begin{example}
\label{example:circular_motion}
Consider the following product of two linear motion polynomials:
  \[ \underbrace{(t + \imath)}_{\substack{\mathrm{rotation\ around} \\
\mathrm{a\ point}}} \cdot \underbrace{(t - \imath +
\eta)}_{\substack{\mathrm{rotation\ in\ the\ opposite} \\ \mathrm{sense,\
around\ another\ point}}} \; = \; \underbrace{(t^2 + 1) + \eta \tth (t -
\imath).}_{\substack{\mathrm{this\ is\ a\ translation,\ since} \\ \mathrm{there\
is\ no\ imaginary\ part\ in\ } t^2 + 1}}\]
For a fixed $t$ we get a translation by the vector $\frac{1}{t^2 +
  1}(t,-1)$. As $t$ changes, this vector describes a circle, hence we get a 
translational motion along a circle (see
Figure~\ref{figure:circular_translation}).

\begin{figure}[ht]
	\resizebox{5cm}{!}{
	\includegraphics{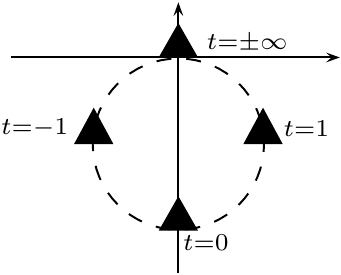}
	}
	\caption{The trace of the point $(0,0)$ under the circular translation given 
by the motion polynomial $(t^2 + 1) + \eta \tth (t - \imath)$ is depicted via a 
dashed line. The action of the motion on the black triangle highlights the fact 
that we have a purely translational motion.}
	\label{figure:circular_translation}
\end{figure}
\end{example}

In Example~\ref{example:circular_motion} we saw that the factorization of a 
motion polynomial into linear polynomials provides a decomposition of the 
described motion into revolutions. However, 
Example~\ref{example:elliptic_motion} shows that this is not always possible.

\begin{example}
\label{example:elliptic_motion}
Let us consider an \emph{elliptic translational motion}:
\[
  P(t) \; = \; (t^2 + 1) + \eta \tth (at - b\imath), \qquad a,b \in \R.
\]
We try to factor $P$ into two linear polynomials $P_1, P_2 \in \K[t]$. First
notice that the primal parts of $P_1$ and $P_2$ should be factors of $(t^2 +
1)$, so they have to be of the form $t \pm \imath$. Thus we have only two
possibilities:
	\[ \left\{ \begin{array}{rcl}
		\ppt(P_1) & = & t + \imath \\
		\ppt(P_2) & = & t - \imath
	\end{array} \right. \qquad \qquad
	\left\{ \begin{array}{rcl}
		\ppt(P_1) & = & t - \imath \\
		\ppt(P_2) & = & t + \imath
	\end{array} \right. \]
	By a direct computation one can prove that none of those two choices gives a
factorization if $a \neq b$, thus $P$ cannot be factored into linear
polynomials. 

	On the other hand, we show now that if we multiply~$P$ by a real 
polynomial~$R$ we can achieve a factorization. Recall that, by 
Remark~\ref{remark:multiply_real}, the polynomials~$P$ and~$RP$ describe 
the same motion. In this case we take $R = t^2 + 1$. Hence we need four linear
polynomials $P_1, \dotsc, P_4$ to factorize $RP$, and again $\ppt(P_i) =
t \pm \imath$. We make the following ansatz:
	\begin{equation}
        \label{equation:ansatz_elliptic}
		\begin{array}{lcl}
			\begin{array}{rcl}
				P_1 & = & (t - \imath) + \eta \tth w_1, \\
				P_2 & = & (t + \imath) + \eta \tth w_2, \\
			\end{array} & & 
			\begin{array}{rcl}
				P_3 & = & (t + \imath) + \eta \tth w_3, \\
				P_4 & = & (t - \imath) + \eta \tth w_4. \\
			\end{array}
		\end{array}
	\end{equation}
	Imposing $P_1 \cdots P_4 = RP$ gives a linear system in the $w_i$,
namely:
\begin{equation}
\label{equation:solution_elliptic}
	\left\{ \begin{array}{rcl}
		w_1 + w_2 + w_3 + w_4 & = & a, \\
		w_1 + w_2 - w_3 - w_4 & = & b.
	\end{array} \right.
\end{equation}
	Each solution of the system~\eqref{equation:solution_elliptic} gives rise to 
a factorization of the polynomial~$RP$. If, instead, we take the following 
ansatz:
\[
	\begin{array}{lcl}
		\begin{array}{rcl}
			P_1 & = & (t + \imath) + \eta \tth w_1, \\
			P_2 & = & (t - \imath) + \eta \tth w_2, \\
		\end{array} & & 
		\begin{array}{rcl}
			P_3 & = & (t + \imath) + \eta \tth w_3, \\
			P_4 & = & (t - \imath) + \eta \tth w_4. \\
		\end{array}
	\end{array}
\]
	one can check that the corresponding linear system does not admit any
solution. 
\end{example}

\noindent In the next section, we consider the factorization problem in a more
systematic way. We will see that, once we restrict to a certain class of motion 
polynomials, all situations can be treated as we did in 
Example~\ref{example:elliptic_motion}: in general a motion polynomial~$P$ 
cannot be factored into linear polynomials in~$\K[t]$, but it is possible to 
find a real polynomial~$R$ such that $R P$ can be factored.

\smallskip
We end this section by defining precisely what we mean by saying that a
linkage \emph{realizes} a given rational motion. Recall from 
Definition~\ref{definition:relative_position} that, given two links~$i$ and~$j$ 
of a linkage~$L$, we can define the set~$\rpos{i}{j}$ of relative positions of 
the link~$j$ with respect to the link~$i$, and this is a subset of~$\SE{2}$.
Furthermore, recall that the set~$\conf{L}$ of configurations of~$L$ has the
structure of a real projective variety. We are going to use some concepts and 
results coming from real algebraic geometry: 
\begin{enumerate}
  \item\label{ag1}
  every projective space~$\p^n_{\R}$ is a real affine variety, namely 
there exists an embedding $\xi \colon \p^n_{\R} \hookrightarrow \R^k$ for some 
$k \in \N$, and $\xi\bigl(\p^n_{\R}\bigr)$ is the zero set of a collection of 
real polynomials; see for example~\cite[Theorem~3.4.4]{Bochnak1998};
  \item\label{ag2}
  a subset $X \subseteq \R^k$ is called \emph{semialgebraic} if it can be 
described as the set of points satisfying a disjunction of conjunctions of 
polynomial equalities and inequalities; see for 
example~\cite[Definition~2.1.4]{Bochnak1998};
  \item\label{ag3}
  a subset $X \subseteq \p^n_{\R}$ is called semialgebraic if its image under 
any embedding $\xi \colon \p^n_{\R} \hookrightarrow \R^k$ as in~\ref{ag1} is a 
semialgebraic set in the sense of~\ref{ag2}.
\end{enumerate}

\begin{lemma}
	Let $L$ be a linkage and let $i,j$ be links of~$L$. Then, under the 
projective embedding of~$\SE{2}$, the set~$\rpos{i}{j}$ becomes a semialgebraic 
subset of~$\SE{2}$.
\end{lemma}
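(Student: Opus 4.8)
The plan is to exhibit $\rpos{i}{j}$ as the image of $\conf{L}$ under a semialgebraic map, and then invoke the Tarski--Seidenberg theorem to conclude that the image is semialgebraic. First I would fix a directed path $(i, h_1), (h_1, h_2), \dotsc, (h_s, j)$ in~$G$ from~$i$ to~$j$; this exists because $G$ is connected by Definition~\ref{definition:linkage}, and by the cycle condition the resulting relative position does not depend on the choice of path. Using the projection from Remark~\ref{remark:projection_configuration}, I regard $\conf{L}$ as a projective subvariety of $\left( \p^1_{\R} \right)^{|E|}$, one factor $\p^1_{\R}$ for each joint, each factor identified (via Lemma~\ref{lemma:one_dim_motions}) with the line $\ell_e \subseteq \p^3_{\R}$ of rotations around $\rho(e)$. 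In our projective model of~$\SE{2}$, composition along the path corresponds to iterated multiplication according to Equation~\eqref{equation:mult_projective}, which is bilinear; hence the map
\[
  \Psi \colon \conf{L} \longrightarrow \p^3_{\R}, \qquad
  \Sigma \longmapsto \rposconf{i}{j}{\Sigma} = \sigma_{i,h_1} \circ \dotsb \circ \sigma_{h_s,j}
\]
is given, in homogeneous coordinates, by multihomogeneous polynomials in the coordinates of the factors $\ell_e$. By definition $\rpos{i}{j} = \Psi(\conf{L})$.

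Next I would make this rigorous at the level of real affine varieties. By fact~\ref{ag1}, each $\p^1_{\R}$ and the target $\p^3_{\R}$ embed as real affine varieties in some~$\R^k$; under such embeddings $\conf{L}$ becomes a (closed, hence) semialgebraic subset of a Euclidean space, since it is cut out by the vanishing of real polynomials (the cycle conditions). The map $\Psi$, written in affine coordinates on source and target via these embeddings, becomes a map whose graph is a semialgebraic set: it is defined by polynomial equations relating the affine coordinates, possibly together with the inequalities $x_1^2 + x_2^2 \neq 0$ that carve out~$U$ (one only needs that $\rposconf{i}{j}{\Sigma}$ genuinely lands in~$\SE{2}$, which holds because each $\sigma_{h_{k},h_{k+1}}$ does and $\SE{2}$ is a group). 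In particular $\Psi$ is a semialgebraic map between semialgebraic sets.

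Finally I would apply the Tarski--Seidenberg principle (see~\cite[Proposition~2.2.7]{Bochnak1998} or~\cite[Theorem~2.2.1]{Bochnak1998}): the projection onto the target coordinates of the graph of $\Psi$ — which is exactly $\Psi(\conf{L}) = \rpos{i}{j}$ — is a semialgebraic subset of the Euclidean space containing the embedded $\p^3_{\R}$. Since this image lies inside $\xi(\SE{2}) \subseteq \xi(\p^3_{\R})$, by fact~\ref{ag3} it corresponds to a semialgebraic subset of~$\SE{2}$ under the projective embedding, which is the claim. The main point to be careful about — rather than a deep obstacle — is the bookkeeping that translates ``cut out by multihomogeneous polynomials on a product of projective lines'' into ``semialgebraic in an honest Euclidean space'': one must pass through the affine charts or the explicit embeddings of fact~\ref{ag1} and check that homogeneity causes no trouble, which it does not because the defining conditions (the cycle conditions and the rank condition describing the identity) are genuinely projective. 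Everything else is a direct consequence of the bilinearity of the group law~\eqref{equation:mult_projective} and the closure of semialgebraic sets under polynomial images.
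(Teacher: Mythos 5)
Your proposal is correct and follows essentially the same route as the paper: both express $\rpos{i}{j}$ as the image of $\conf{L}$ under the map $\Sigma \mapsto \sigma_{i,h_1}\circ\dotsb\circ\sigma_{h_s,j}$, observe that this map is polynomial in the projective coordinates, and conclude by the Tarski--Seidenberg-type result that images of semialgebraic sets under such maps are semialgebraic (the paper cites \cite[Theorem~1.4.2]{Bochnak1998} for exactly this). Your additional bookkeeping about affine embeddings of the projective spaces is just a more explicit version of the same argument.
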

\begin{proof}
Let $(i, h_1), \dotsc, (h_s, j)$ be any directed path from $i$ to~$j$. Consider 
the map $F \colon \conf{L} \longrightarrow \SE{2}$ sending a 
configuration $\Sigma = (\sigma_{k,l})$ to $\sigma_{i,h_1} \circ 
\dotsb \circ \sigma_{h_s, j}$. One can check that~$\rpos{i}{j}$ coincides with 
$F(\conf{L})$. Moreover, if we write the map~$F$ in terms of the 
projective coordinates of $\conf{L}$ and $\SE{2}$, then $F$ is given by real 
polynomials. Hence, because of a general result in real algebraic geometry (see 
for example~\cite[Theorem~1.4.2]{Bochnak1998}), the set~$\rpos{i}{j}$ is 
semialgebraic.
\end{proof}

In the following definition, and later in this paper, the word ``component'' 
will always stand for ``irreducible component'' with respect to the Zariski 
topology.

\begin{definition}
\label{definition:realizing}
Let $L$ be a linkage and let $\phi \colon \R \dashrightarrow \p^3_{\R}$ be a 
rational motion. We say that
\begin{enumerate}
\item\label{definition:realizing:case1}
  $L$ \emph{weakly realizes} the motion~$\phi$ if there exist links~$i$
  and~$j$ of~$L$ such that $\mathrm{image}(\phi) \subseteq \rpos{i}{j}$.
\item\label{definition:realizing:case2}
  $L$ \emph{strongly realizes} the motion~$\phi$ if there exist links~$i$ 
  and~$j$ of $L$ such that $\mathrm{image}(\phi) \subseteq \rpos{i}{j}$ and 
  $\overline{\mathrm{image}(\phi)}$ is a component of~$\rpos{i}{j}$, where
  $\overline{(\cdot)}$ denotes the Zariski closure in~$\rpos{i}{j}$.
\end{enumerate}
\end{definition}
\begin{figure}[ht]
\begin{center}
\includegraphics[width=0.4\textwidth]{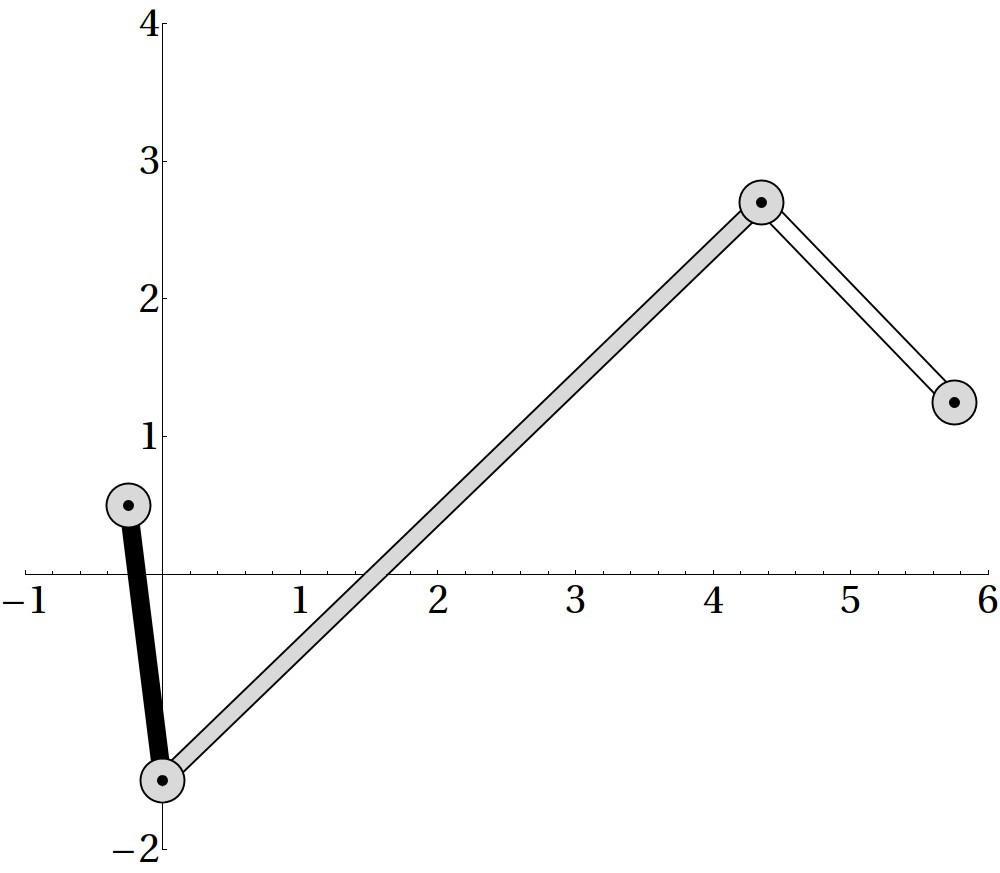}
\qquad
\includegraphics[width=0.4\textwidth]{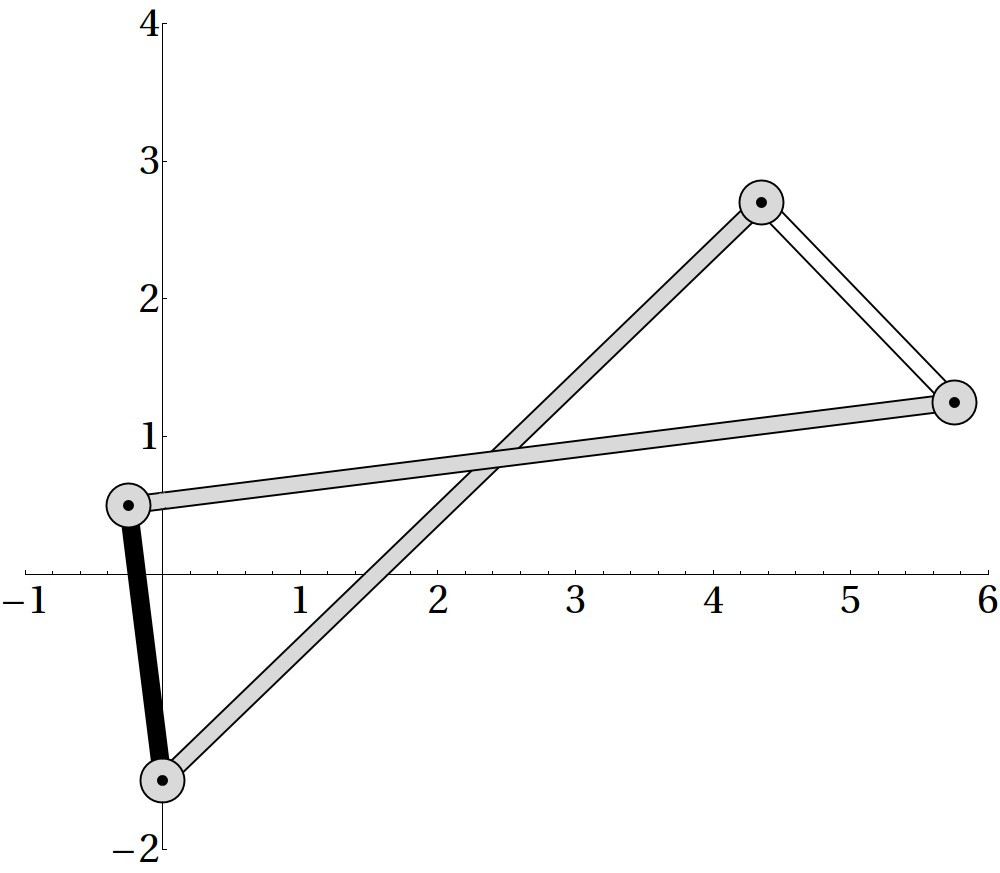}
\end{center}
\caption{Two linkages that both realize the motion given by
$\bigl(t+\imath-3\eta\bigr)\cdot\bigl(t+2\imath+\eta(18+3\imath)\bigr)$, discussed
in Example~\ref{example:mobility}. The depicted position is reached at~$t=-2$.  
The linkage on the left has two degrees of freedom and thus weakly realizes the 
given motion as the relative position between base (in black) and end effector 
(in white). The linkage on the right has mobility one and provides a strong 
realization of this motion.}
\label{figure:weak_strong}
\end{figure}
\begin{remark}
Notice that in case~\ref{definition:realizing:case2}
of Definition~\ref{definition:realizing} we allow the
given variety of relative positions to have several components, and we ask that
only one of them coincides with the Zariski closure of the image of the given
rational motion, which is irreducible by construction. This situation occurs in
the example given on the right of Figure~\ref{figure:weak_strong},
since the variety of relative positions between base and end effector has two
components (see Lemma~\ref{lemma:flip_mobility}).
\end{remark}

\section{Factorization of motion polynomials}
\label{factorization}

Motivated by the considerations from the previous section, we now address the
problem of factorizing a motion polynomial $P = Z + \eta \tth W \in \K[t]$
into linear factors.  The main results are
Theorem~\ref{theorem:bounded_factor} and the corresponding algorithm
\texttt{FactorMotionPolynomial}. To achieve them, we have to go through some
technicalities, which however will not be needed in the rest of the paper.

We restrict our attention to monic polynomials, i.e., to polynomials~$P$ whose
leading coefficient is~$1$, which implies that $\deg{W} < \deg{Z}$.  Our goal
is to write a monic $P \in \K[t]$ in the form $P = P_1 \cdots P_n$, where $n =
\deg{P}$ and $\deg{P_i} = 1$ for all~$i$.  If $P = Z + \eta \tth W$ and
$P_i=Z_i+\eta\,W_i$, then $Z = Z_1 \cdots Z_n$. Hence the primal part of
each~$P_i$ has to be one of the linear factors of $Z\in\C[t]$.  We start by a
few definitions and results characterizing monic motion polynomials that split
into linear factors.

\begin{definition}
\label{definition:Q_i}
Let $\mathbf{z} = (z_1, \dotsc, z_n)\in\C^n$. We define the following polynomials:
\[ Q_i(\mathbf{z}) \; := \; (t - \overline{z_1}) \cdots (t - \overline{z_{i-1}})
   (t - z_{i+1}) \cdots (t - z_{n}) \quad \mathrm{for\ } i \in \{1, \dotsc, n \}.
\]
Hence we have that $\deg{Q_i(\mathbf{z})} = n - 1$ for all~$i$.
\end{definition}
\begin{remark}
\label{remark:gcd_Q_i}
Let $Z\in\C[t]$ and let $\mathbf{z} = (z_1, \dotsc, z_n)$ be a fixed
permutation of the roots of~$Z$. Then the polynomials $Q_i(\mathbf{z})$ have a
non-trivial $\gcd$ if and only if the polynomial~$Z$ admits a pair of
complex-conjugate roots $\alpha$ and $\overline{\alpha}$ (this includes the
case of multiple real roots).
\end{remark}

\begin{lemma}
\label{lemma:characterization_factorization}
Let $P=Z+\eta\,W \in \K[t]$ be a monic motion polynomial, and let $\mathbf{z}
= (z_1, \dotsc, z_n)$ be a fixed permutation of the roots of $Z$ over~$\C$.
Then $P$ admits a factorization $P = P_1 \cdots P_n$, where $P_i(t) = (t -
z_i) + \eta \tth w_i$ with $w_i \in \C$, if and only if $W$ lies in
$\bigl\langle Q_1(\mathbf{z}), \dotsc, Q_n(\mathbf{z}) \bigr\rangle_{\C}$,
i.e., in the $\C$-linear span of $Q_1(\mathbf{z}), \dotsc,
Q_n(\mathbf{z})$.
\end{lemma}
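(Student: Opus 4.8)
The plan is to expand the product $P_1 \cdots P_n$ with $P_i = (t - z_i) + \eta\, w_i$ in the algebra $\K[t]$, using $\eta^2 = 0$ and the commutation rule $z\eta = \eta\overline z$, and read off the secondary part. Since $\eta^2 = 0$, at most one factor can contribute its $\eta w_i$ term to the product, while all the others contribute their primal linear parts. Concretely, in the expansion we get one ``primal'' term $(t-z_1)(t-z_2)\cdots(t-z_n) = Z$, plus, for each index $i$, a term in which we pick $\eta w_i$ from $P_i$ and the primal parts from the rest. The factors $P_1,\dots,P_{i-1}$ sit to the \emph{left} of $\eta w_i$, so each of their primal parts $(t - z_k)$ must be pushed past $\eta$, turning into $(t - \overline{z_k})$ by the rule $z\eta = \eta\overline z$ (applied coefficient-wise, with $t$ real hence commuting with $\eta$); the factors $P_{i+1},\dots,P_n$ sit to the \emph{right} of $\eta$ and are unaffected. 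Hence the $i$-th contribution is $(t-\overline{z_1})\cdots(t-\overline{z_{i-1}})\,\eta\, w_i\,(t-z_{i+1})\cdots(t-z_n) = \eta\, w_i\, Q_i(\z)$. So the full product is
\[
  P_1 \cdots P_n \; = \; Z \; + \; \eta \sum_{i=1}^n w_i\, Q_i(\z),
\]
and therefore $P_1 \cdots P_n = P = Z + \eta W$ if and only if $\sum_i w_i Q_i(\z) = W$.

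With this identity in hand the lemma is almost immediate. For the ``if'' direction: if $W \in \langle Q_1(\z), \dots, Q_n(\z)\rangle_{\C}$, choose $w_i \in \C$ with $W = \sum_i w_i Q_i(\z)$ and set $P_i = (t - z_i) + \eta w_i$; each $P_i$ is a motion polynomial (its primal part $t - z_i$ is not identically zero), and by the computation above $P_1 \cdots P_n = Z + \eta W = P$. For the ``only if'' direction: if such a factorization exists with $P_i = (t - z_i) + \eta w_i$, then the same computation shows $Z + \eta W = Z + \eta \sum_i w_i Q_i(\z)$; comparing secondary parts gives $W = \sum_i w_i Q_i(\z)$, which exhibits $W$ as a $\C$-linear combination of the $Q_i(\z)$. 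One should also note that the hypothesis that $\mathbf z$ is a permutation of the roots of $Z$ is exactly what makes $Z_1 \cdots Z_n = Z$, so that matching primal parts is automatic and the whole content sits in the secondary part.

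The only real point requiring care — and the step I would write out most explicitly — is the bookkeeping of the commutation rule when moving $\eta$ to the left past the primal parts of $P_1, \dots, P_{i-1}$: one must check that pushing $\eta$ past a product $(t-z_1)\cdots(t-z_{i-1})$ (a polynomial in $t$ with complex coefficients) conjugates \emph{all} the coefficients, i.e.\ turns it into $(t - \overline{z_1})\cdots(t - \overline{z_{i-1}})$, and that $t$ itself commutes with $\eta$ (which it does, being a real scalar variable). This is a routine induction on $i$ using $c\,\eta = \eta\,\overline c$ for $c \in \C$ and the fact that conjugation is a ring homomorphism of $\C$, but it is the one place where a careless sign or a missed conjugate would break the argument. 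Everything else — the vanishing of all higher-order $\eta$ terms, the irrelevance of the ordering among the ``right'' factors — follows directly from $\eta^2 = 0$ and needs only to be stated.
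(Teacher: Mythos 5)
Your proof is correct and follows essentially the same route as the paper: expand the product, use $\eta^2=0$ so that exactly one factor contributes its secondary part, and use the rule $(t-z_k)\eta=\eta(t-\overline{z_k})$ to see that the $i$-th contribution is $\eta\,w_i\,Q_i(\mathbf{z})$, so that $P_1\cdots P_n = Z + \eta\sum_i w_i Q_i(\mathbf{z})$ and the claim reduces to matching secondary parts. The paper's proof is just a more compressed version of the same expansion, so the only difference is that you spell out the commutation bookkeeping explicitly.
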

\begin{proof}
	Suppose that $P \in \K[t]$ admits a factorization into linear factors:
	\[
		P(t) = \prod_{i=1}^n (t-z_i+\eta w_i) = \prod_{i=1}^n(t-z_i) +
\eta\sum_{k=1}^n\underbrace{\bigg(\prod_{j=1}^{k-1}(t-\overline{z_j}
)\bigg)\bigg(\prod_{j=k+1}^n(t-z_j)\bigg)}_{\textstyle =Q_k(\mathbf{z})} w_k.
	\]
	Such a factorization exists if and only if we can choose
$w_1, \dotsc, w_n\in\C$ such that
	$\sum_{k=1}^n w_k Q_k(\mathbf{z})$ matches the prescribed $\eta$-part of 
$P(t)=Z(t)+\eta W(t)$, namely if and only if $W \in \bigl\langle 
Q_1(\mathbf{z}), \dotsc, Q_n(\mathbf{z}) \bigr\rangle_{\C}$.
\end{proof}

\begin{lemma}
\label{lemma:no_conj_roots}
Let $P \in \K[t]$ be a monic motion polynomial and suppose that $Z = \ppt(P)$ has
no pair of complex-conjugate roots.
Then for every permutation $\mathbf{z} = (z_1, \dotsc, z_n)$ of the roots of~$Z$,
the polynomial~$P$ admits a factorization $P = P_1 \cdots P_n$ with
$\ppt(P_i) = t - z_i$ for all $i$.
\end{lemma}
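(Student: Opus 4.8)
The plan is to reduce everything to Lemma~\ref{lemma:characterization_factorization}. That lemma says $P$ admits a factorization with $\ppt(P_i) = t - z_i$ precisely when the secondary part $W$ lies in the $\C$-linear span of $Q_1(\z), \dots, Q_n(\z)$, so it suffices to show that these $n$ polynomials span the $\C$-vector space in which $W$ lives. First I would record the dimension count: since $P$ is monic we have $\deg W < \deg Z = n$, so $W$ belongs to the space $\C[t]_{<n}$ of polynomials of degree $<n$, which is $n$-dimensional over $\C$; moreover each $Q_i(\z)$ is monic of degree $n-1$ by Definition~\ref{definition:Q_i}, hence also lies in $\C[t]_{<n}$. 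Therefore the spanning property is equivalent to the $\C$-linear independence of $Q_1(\z), \dots, Q_n(\z)$, and proving the lemma amounts to establishing this independence for an arbitrary permutation $\z = (z_1, \dots, z_n)$ of the roots of $Z$.

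To prove independence I would induct on $n$, the case $n = 1$ being immediate since $Q_1(\z) = 1$. For the inductive step, suppose $\sum_{i=1}^n c_i Q_i(\z) = 0$ with $c_i \in \C$. By Definition~\ref{definition:Q_i}, every $Q_i(\z)$ with $i \ge 2$ is divisible by $t - \overline{z_1}$, so evaluating the relation at $t = \overline{z_1}$ annihilates all terms except the first and leaves $c_1 \prod_{j=2}^n (\overline{z_1} - z_j) = 0$. This is where the hypothesis enters: if $\overline{z_1} = z_j$ for some $j \ge 2$, then $z_j$ and $\overline{z_j} = z_1$ would be a pair of complex-conjugate roots of $Z$ (covering also the case of a repeated real root), contradicting the assumption; hence the product is nonzero and $c_1 = 0$. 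Dividing the surviving relation $\sum_{i=2}^n c_i Q_i(\z) = 0$ by $t - \overline{z_1}$ turns it into $\sum_{i=2}^n c_i Q_{i-1}(\z') = 0$ with $\z' = (z_2, \dots, z_n)$, as one checks directly from the definition of the $Q$'s. Since $\z'$ is a permutation of the roots of $Z/(t - z_1)$, which again has no pair of complex-conjugate roots, the inductive hypothesis forces $c_2 = \dots = c_n = 0$, completing the induction.

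I expect the main obstacle to be resisting a tempting shortcut: Remark~\ref{remark:gcd_Q_i} shows the hypothesis is equivalent to $\gcd(Q_1(\z), \dots, Q_n(\z)) = 1$, but having trivial gcd does not by itself imply that $n$ polynomials lying in an $n$-dimensional space are linearly independent (for example $t$, $t-1$, $t+1$ in $\C[t]_{<3}$). One genuinely has to exploit the nested product structure of the $Q_i(\z)$, and the inductive ``peeling off the factor $t - \overline{z_1}$'' step above is the device that does so; once that is in place, the remaining verifications are routine bookkeeping.
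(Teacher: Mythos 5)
Your proof is correct, and it takes a genuinely different route from the paper at the key step. Both arguments first invoke Lemma~\ref{lemma:characterization_factorization} and the dimension count $\deg W < n$ to reduce the lemma to showing that $Q_1(\z),\dotsc,Q_n(\z)$ span (equivalently, are a basis of) $\C[t]_{<n}$. The paper then proves nonsingularity of the coefficient matrix $M_n$ by evaluating its determinant in closed form, $\det(M_n)=\prod_{1\leq i<j\leq n}\bigl(\overline{z_i}-z_j\bigr)$, citing a known determinant identity (Krattenthaler), and reads off nonvanishing from the hypothesis $\overline{z_i}\neq z_j$. You instead prove $\C$-linear independence directly by induction: evaluate a relation $\sum_i c_iQ_i(\z)=0$ at $t=\overline{z_1}$, use that every $Q_i(\z)$ with $i\geq 2$ carries the factor $t-\overline{z_1}$ while $Q_1(\overline{z_1})=\prod_{j\geq 2}(\overline{z_1}-z_j)\neq 0$ by the no-conjugate-pair hypothesis (correctly including the repeated-real-root case), conclude $c_1=0$, divide out $t-\overline{z_1}$, and observe that the quotients are exactly the $Q_{i-1}(\z')$ for the shortened tuple $\z'=(z_2,\dotsc,z_n)$, whose entries still contain no conjugate pair. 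This peeling argument is more elementary and self-contained --- it needs no external determinant evaluation and pinpoints where the hypothesis is used --- while the paper's route buys the explicit product formula for $\det(M_n)$, which identifies precisely the degenerate configurations $\overline{z_i}=z_j$ and fits the Vandermonde-style viewpoint the authors emphasize. Your closing caveat is also well taken: trivial gcd alone would not give independence, and your use of the nested product structure is exactly the extra ingredient needed.
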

\begin{proof}
We fix a permutation~$\mathbf{z}$ of the roots of~$Z$, and we 
denote $Q_k = Q_k(\mathbf{z})$ for all~$k$. From 
Lemma~\ref{lemma:characterization_factorization}, we know that
$P = Z + \eta \tth W$ admits a factorization if and only if $W \in \bigl\langle
Q_1, \dotsc, Q_n \bigr\rangle_{\C}$. Clearly, this is always possible (for
arbitrary~$W$) if the following matrix $M_n\in\C^{n\times n}$ is non-singular:
\[
  M_n = \begin{pmatrix}
   \langle t^0\rangle Q_1 & \cdots & \langle t^0\rangle Q_n \\
   \langle t^1\rangle Q_1 & \cdots & \langle t^1\rangle Q_n \\
   \vdots                 &       & \vdots \\
   \langle t^{n-1}\rangle Q_1 & \cdots & \langle t^{n-1}\rangle Q_n \\
  \end{pmatrix}.
\]
Here $\langle t^i\rangle Q_k$ denotes the coefficient of $t^i$ in the
polynomial~$Q_k$. Notice that the matrix entries are, up to sign, elementary
symmetric polynomials in the $z_i$ and $\overline{z_i}$. We now exhibit that
the determinant of~$M_n$ is nonzero under the assumptions on the roots
of~$Z$. Indeed, we find that
\begin{equation}\label{eq:det}
  \det(M_n) = \prod_{1\leq i<j\leq n}\bigl(\overline{z_i}-z_j\bigr),
\end{equation}
which follows, as a special case, from Lemma~3 in~\cite{Krattenthaler99}
(by substituting \mbox{$A_k=-z_k$}, $B_k=-\overline{z_{k-1}}$, and by extracting
the coefficient of $\prod_{i=1}^n X_i^{i-1}$).  A similar determinant
evaluation is given in~\cite{LascouxPragacz02} where the $z_i$ appear without
conjugation. Since $\det(M_n)$ is very much reminiscent of the Vandermonde
determinant, it is no surprise that~\eqref{eq:det} can also be proved in an
analogous fashion.
\end{proof}

\begin{cor}
\label{cor:span}
  Let $\z=(z_1,\dotsc,z_n)\in\C^n$ such that 
$\gcd\bigl(Q_1(\z),\dotsc,Q_n(\z)\bigr) = 1$. Then 
\[ 
  \bigl\langle Q_1(\z),\dotsc,Q_n(\z) \bigr\rangle_{\C} \; = \; \bigl\{ W \in 
\C[t] \, : \, \deg(W) < n \bigr\}.
\]
\end{cor}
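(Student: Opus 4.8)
The plan is to show the two inclusions. One inclusion is trivial: since each $Q_i(\z)$ has degree $n-1$, any $\C$-linear combination of them has degree strictly less than $n$, so $\bigl\langle Q_1(\z),\dotsc,Q_n(\z)\bigr\rangle_{\C}$ is contained in the space of polynomials of degree $<n$. For the reverse inclusion, the key observation is that the span in question is the column space of the matrix $M_n$ from the proof of Lemma~\ref{lemma:no_conj_roots}, whose $(i+1,k)$-entry is $\langle t^i\rangle Q_k(\z)$. Thus it suffices to prove that $M_n$ is nonsingular under the hypothesis $\gcd\bigl(Q_1(\z),\dotsc,Q_n(\z)\bigr)=1$, since then the columns of $M_n$ span all of $\C^n$, which corresponds to the space $\{W\in\C[t]:\deg(W)<n\}$ under the coefficient identification.

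To prove nonsingularity of $M_n$ from the coprimality hypothesis, I would argue by contraposition: suppose $M_n$ is singular. A vanishing linear combination $\sum_k c_k Q_k(\z) = 0$ with not all $c_k$ zero would exhibit a nontrivial $\C$-linear relation among the $Q_k(\z)$, but this is not quite the same as a nontrivial common divisor, so the cleaner route is to invoke the determinant formula~\eqref{eq:det} directly: $\det(M_n) = \prod_{1\le i<j\le n}(\overline{z_i}-z_j)$, which was established in the proof of Lemma~\ref{lemma:no_conj_roots} and holds for \emph{any} tuple $\z\in\C^n$, regardless of whether the $z_i$ are roots of a motion polynomial. Then $\det(M_n)\neq 0$ iff $\overline{z_i}\neq z_j$ for all $i<j$. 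It remains to connect this to the $\gcd$ condition. By Remark~\ref{remark:gcd_Q_i} (applied with $Z = \prod_i(t-z_i)$), the polynomials $Q_i(\z)$ have nontrivial $\gcd$ if and only if $Z$ has a pair of complex-conjugate roots $\alpha,\overline{\alpha}$ among the $z_i$ (including repeated real roots) — equivalently, iff there exist indices with $\overline{z_i}=z_j$. Hence $\gcd\bigl(Q_1(\z),\dotsc,Q_n(\z)\bigr)=1$ forces $\overline{z_i}\neq z_j$ for all $i<j$, so $\det(M_n)\neq 0$ and $M_n$ is invertible.

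Wrapping up: with $M_n$ invertible, for any target $W$ with $\deg(W)<n$ we can solve $M_n\,\mathbf{c} = (\text{coefficient vector of }W)$ for $\mathbf{c}\in\C^n$, giving $W = \sum_k c_k Q_k(\z)$, so $W\in\bigl\langle Q_1(\z),\dotsc,Q_n(\z)\bigr\rangle_{\C}$. Combined with the trivial inclusion, this yields the claimed equality. The main obstacle is really a bookkeeping point rather than a deep one: one must be careful that the indices in Remark~\ref{remark:gcd_Q_i} match — a conjugate pair in the root tuple is exactly what produces a linear factor $(t-\overline{z_i})=(t-z_j)$ dividing every $Q_k(\z)$ — and that the determinant identity~\eqref{eq:det} is quoted for a general tuple, not just for one arising from a motion polynomial; both are straightforward once stated carefully.
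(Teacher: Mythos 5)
Your proposal is correct and follows essentially the same route as the paper: the paper's proof likewise passes from the hypothesis $\gcd\bigl(Q_1(\z),\dotsc,Q_n(\z)\bigr)=1$ to the absence of conjugate pairs via Remark~\ref{remark:gcd_Q_i}, and then invokes Lemma~\ref{lemma:characterization_factorization} together with Lemma~\ref{lemma:no_conj_roots}, whose proof is exactly the nonsingularity of $M_n$ via the determinant evaluation~\eqref{eq:det} that you use directly. Your only deviation is to unpack those lemma citations into the explicit matrix argument, which is a presentational rather than a mathematical difference.
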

\begin{proof} 
  The gcd condition implies that $\z$ has no pair of complex-conjugate roots, as 
noticed in Remark~\ref{remark:gcd_Q_i}. Then the claim follows from 
Lemma~\ref{lemma:characterization_factorization} and 
Lemma~\ref{lemma:no_conj_roots}.
\end{proof}

Since the $\gcd$ of the polynomials $Q_i(\mathbf{z})$ introduced in
Definition~\ref{definition:Q_i} plays an important role for the factorization
of motion polynomials, we make some effort to describe it precisely
(Lemma~\ref{lemma:interlacement}, Corollary~\ref{cor:interlacement},
Proposition~\ref{proposition:gcd}); for this purpose the combinatorial notion
of \emph{matching} will be introduced. At the same time we aim at a nicer
characterization of those secondary parts in
Lemma~\ref{lemma:characterization_factorization} that ensure the existence of a
factorization (Proposition~\ref{proposition:factorization_ideal}).
To simplify the proofs of Propositions~\ref{proposition:gcd}
and~\ref{proposition:factorization_ideal}, we formulate the somewhat technical
Lemma~\ref{lemma:technical}.

\begin{lemma}
\label{lemma:interlacement}
Let $\alpha,\beta\in\C$ with $\alpha\neq\beta\neq\albar$, and let
$\z=(z_1,\dotsc,z_n)\in\C^n$ and
$\zt=(z_1,\dotsc,z_{j-1},\beta,z_j,\dotsc,z_n)\in\C^{n+1}$ for some
$j\in\{1,\dotsc,n+1\}$.  Then the multiplicities of the roots $\alpha$ and
$\albar$ in $\gcd\bigl(Q_1(\z),\dotsc,Q_n(\z)\bigr)$ and in
$\gcd\bigl(Q_1(\zt),\dotsc,Q_{n+1}(\zt)\bigr)$ are the same.
\end{lemma}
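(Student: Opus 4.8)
The plan is to understand precisely how the generating set $\{Q_i\}$ changes when one inserts a new root $\beta$ into the tuple $\z$, and to argue that this insertion affects the $\gcd$ only through the factor $(t-\beta)$ resp.\ $(t-\overline\beta)$, which are coprime to $\alpha$ and $\albar$ by the hypothesis $\alpha\neq\beta\neq\albar$. First I would write out the two families explicitly. For $i < j$ the polynomial $Q_i(\zt)$ is obtained from $Q_i(\z)$ by inserting one extra factor $(t-z_j)$ from the right block (since $\beta$ sits at position $j$, which is $>i$, it contributes an unconjugated factor); similarly for $i > j+1$, $Q_i(\zt)$ equals $Q_{i-1}(\z)$ times an extra factor $(t-\overline\beta)$ from the left block. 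Wait — more carefully: the shift of indices means $Q_i(\zt)$ for $i>j$ corresponds to $Q_{i-1}(\z)$ with an extra conjugated factor $(t-\overline\beta)$. The two genuinely new polynomials are $Q_j(\zt)$ and $Q_{j+1}(\zt)$, which differ from each other only in that one has factor $(t-\overline{z_{j-1}}\cdots)(t-z_j\cdots)$ split at position $j$ versus $j+1$; concretely $Q_{j+1}(\zt) = Q_j(\z)\cdot\frac{t-\overline\beta}{?}$-type relations hold. I would organize this as: $Q_i(\zt) = (t-z_j)\,Q_i(\z)$ for $i<j$ (\emph{wrong index — let me just say}) — the point is each $Q_i(\zt)$ is a monic polynomial that is divisible by the corresponding $Q$ in the $\z$-family (after the index shift) times one of the linear forms $(t-\beta)$ or $(t-\overline\beta)$.

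The key step is then the following: let $d = \gcd_i Q_i(\z)$ and $\widetilde d = \gcd_i Q_i(\zt)$. On one hand, every $Q_i(\zt)$ is divisible by $d$ (because it is a multiple of some $Q_k(\z)$ times a linear factor, and all $Q_k(\z)$ are divisible by $d$) — so $d \mid \widetilde d$ up to possibly the extra linear factors; but those extra factors $(t-\beta),(t-\overline\beta)$ are not common to all the $Q_i(\zt)$ (for instance $Q_j(\zt)$ does not contain $(t-\beta)$ while $Q_i(\zt)$ for $i\neq j,j+1$ does contain it, or vice versa), so in fact $d \mid \widetilde d$. Conversely $\widetilde d \mid \gcd$ of a subfamily that, after removing the harmless linear factors, equals a multiple of $d$; combined with a localization-at-$\alpha$ argument this gives that $\mathrm{mult}_\alpha(\widetilde d) = \mathrm{mult}_\alpha(d)$ and likewise for $\albar$. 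Since $\alpha,\albar\notin\{\beta,\overline\beta\}$, passing to $v_\alpha$ (the $(t-\alpha)$-adic valuation) kills the ambiguity: $v_\alpha(Q_i(\zt)) = v_\alpha(Q_{k}(\z))$ for the matching index $k$, because the inserted factor $(t-\beta)$ or $(t-\overline\beta)$ has $v_\alpha = 0$. Taking the minimum over $i$ on both sides and checking that the index-shift and the two new indices $j,j+1$ do not introduce a new minimum yields $v_\alpha(\widetilde d) = v_\alpha(d)$, and symmetrically $v_{\albar}(\widetilde d)=v_{\albar}(d)$.

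I expect the main obstacle to be purely bookkeeping: tracking the index shift between the $\z$-family (size $n$) and the $\zt$-family (size $n+1$) and verifying that the two "new" polynomials $Q_j(\zt)$, $Q_{j+1}(\zt)$ together with the shifted old ones still have the same $(t-\alpha)$-adic valuation profile — i.e.\ that inserting $\beta$ does not accidentally lower the minimum valuation at $\alpha$. The cleanest way to dispatch this is to note that $v_\alpha(Q_i(\z))$ depends only on how many of the $z_k$ with $k>i$ equal $\alpha$ plus how many of the $\overline{z_k}$ with $k<i$ equal $\alpha$; inserting $\beta\neq\alpha$ at position $j$ shifts which indices are "$>i$" versus "$<i$" for each fixed root equal to $\alpha$ but does not change the total count contributing to any given $Q$, and the new $Q_j(\zt)$, $Q_{j+1}(\zt)$ have valuations interpolating between neighbors. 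Hence the multiset of valuations $\{v_\alpha(Q_i(\zt))\}$ has the same minimum as $\{v_\alpha(Q_i(\z))\}$, which is exactly $v_\alpha(\widetilde d) = v_\alpha(d)$; the same argument with $\albar$ in place of $\alpha$ finishes the proof.
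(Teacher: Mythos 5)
Your argument is essentially the paper's own proof: after the index shift, each $Q_k(\z)$ matches $Q_k(\zt)$ (for $k<j$) or $Q_{k+1}(\zt)$ (for $k\geq j$) up to an extra factor $(t-\beta)$ resp.\ $(t-\overline{\beta})$, which is coprime to $(t-\alpha)(t-\albar)$, and the single genuinely new polynomial $Q_j(\zt)$ has $\alpha$- and $\albar$-multiplicities at least those of its neighbors, so the minimum multiplicity --- i.e.\ the multiplicity in the $\gcd$ --- is unchanged, exactly as in the paper. The bookkeeping in your middle paragraphs is partly off (e.g.\ $Q_{j+1}(\zt)$ is not ``new'' but equals $Q_j(\z)\cdot(t-\overline{\beta})$, and $v_\alpha\bigl(Q_j(\zt)\bigr)$ is at least, not ``interpolating between,'' the neighboring values), but the closing valuation argument is correct and coincides with the paper's reasoning.
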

\begin{proof}
First observe that $Q_k(\z)$ and $Q_\ell(\zt)$ have the same multiplicities 
of~$\alpha$ and~$\albar$, where $\ell=k$ if $k<j$ and $\ell=k+1$ if $k\geq j$.
On the other hand, the polynomial~$Q_j(\zt)$ has no such counterpart among the
$Q_k(\z)$, but its multiplicities of~$\alpha$ and~$\albar$ are greater than or
equal to those in its (one or two) neighbors $Q_{j-1}(\zt)$ and
$Q_{j+1}(\zt)$. The claim follows.
\end{proof}

\begin{cor}
\label{cor:interlacement}
Let $\mathbf{x}\in\C^m$ and $\mathbf{y}\in\C^n$ with $x_i\neq y_j\neq
\overline{x_i}$ for all $1\leq i\leq m$ and $1\leq j\leq n$. If
$\z\in\C^{m+n}$ is an arbitrary interlacement of $\mathbf{x}$ and
$\mathbf{y}$, i.e., there are $1\leq i_1\leq\dotsb\leq i_m\leq m+n$ and $1\leq
j_1\leq\dotsb\leq j_n\leq m+n$ with
$\{i_1,\dotsc,i_m\}\cap\{j_1,\dotsc,j_n\}=\emptyset$ such that $z_{i_k}=x_k$,
$1\leq k\leq m$, and $z_{j_k}=y_k$, $1\leq k\leq n$, then
\[
  \gcd\bigl(Q_1(\z),\dotsc,Q_{m+n}(\z)\bigr) =
  \gcd\bigl(Q_1(\mathbf{x}),\dotsc,Q_m(\mathbf{x})\bigr) \cdot
  \gcd\bigl(Q_1(\mathbf{y}),\dotsc,Q_n(\mathbf{y})\bigr).
\]
\end{cor}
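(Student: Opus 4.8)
The plan is to reduce the identity to a pointwise comparison of root multiplicities, with Lemma~\ref{lemma:interlacement} as the only real input. A monic polynomial over~$\C$ is determined by the function sending each $\alpha\in\C$ to its multiplicity as a root, and all three gcds occurring in the statement are monic; hence it suffices to prove, for each $\alpha\in\C$, that the multiplicity $\operatorname{mult}_\alpha$ of $\alpha$ in $\gcd\bigl(Q_1(\z),\dots,Q_{m+n}(\z)\bigr)$ equals $\operatorname{mult}_\alpha$ in $\gcd\bigl(Q_1(\mathbf{x}),\dots,Q_m(\mathbf{x})\bigr)$ plus $\operatorname{mult}_\alpha$ in $\gcd\bigl(Q_1(\mathbf{y}),\dots,Q_n(\mathbf{y})\bigr)$.

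Before that I would record two elementary facts. First, the sets $\{x_1,\dots,x_m,\overline{x_1},\dots,\overline{x_m}\}$ and $\{y_1,\dots,y_n,\overline{y_1},\dots,\overline{y_n}\}$ are disjoint: each of the coincidences $x_i=y_j$, $x_i=\overline{y_j}$, $\overline{x_i}=y_j$, $\overline{x_i}=\overline{y_j}$ is excluded by the hypothesis $x_i\neq y_j\neq\overline{x_i}$, after taking complex conjugates where needed. Second, by Definition~\ref{definition:Q_i} the roots of each $Q_k(\mathbf{w})$ (for any tuple $\mathbf{w}$) lie among the entries of $\mathbf{w}$ and their conjugates, so the same holds for $\gcd\bigl(Q_1(\mathbf{w}),\dots\bigr)$; in particular, if $\alpha$ is a root of this gcd, then $\alpha$ or $\overline{\alpha}$ is an entry of $\mathbf{w}$.

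The main step is an iterated application of Lemma~\ref{lemma:interlacement}. Writing $i_1<\dots<i_m$ and $j_1<\dots<j_n$ for the positions in $\z$ coming from $\mathbf{x}$ and $\mathbf{y}$, I would realize $\z$ as the result of inserting $y_1,y_2,\dots,y_n$ one at a time into $\mathbf{x}$, each at its prescribed position; every single step is then an instance of the single-element insertion treated in Lemma~\ref{lemma:interlacement}. Now fix $\alpha$. If $\alpha$ or $\overline{\alpha}$ is an entry of $\mathbf{x}$, then by the first fact none of the inserted $y_j$ equals $\alpha$ or $\overline{\alpha}$, so Lemma~\ref{lemma:interlacement} applies at every step and gives $\operatorname{mult}_\alpha\gcd\bigl(Q_\bullet(\z)\bigr)=\operatorname{mult}_\alpha\gcd\bigl(Q_\bullet(\mathbf{x})\bigr)$; moreover $\operatorname{mult}_\alpha\gcd\bigl(Q_\bullet(\mathbf{y})\bigr)=0$ by the two facts, so additivity holds. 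The case where $\alpha$ or $\overline{\alpha}$ is an entry of $\mathbf{y}$ is symmetric (now building $\z$ from $\mathbf{y}$ by inserting $x_1,\dots,x_m$, whose non-coincidence with $\alpha,\overline{\alpha}$ again follows from the hypothesis). If neither $\alpha$ nor $\overline{\alpha}$ occurs in $\mathbf{x}$ or $\mathbf{y}$, all three multiplicities vanish by the second fact. By the first fact these three cases are mutually exclusive and exhaust $\C$, so the pointwise identity holds for every $\alpha$, which proves the corollary.

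The only delicate point is the bookkeeping of inequalities: one has to check that the hypothesis $x_i\neq y_j\neq\overline{x_i}$ is exactly what is needed to license, in each of the two symmetric cases, all the insertions required by Lemma~\ref{lemma:interlacement} and to force the vanishing of the ``cross'' multiplicities. Beyond this I expect no real obstacle.
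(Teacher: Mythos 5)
Your proof is correct and follows essentially the same route as the paper: the paper disposes of the corollary in one line as ``a direct consequence of Lemma~\ref{lemma:interlacement}'', and your argument --- iterating the single-element insertion of that lemma and comparing root multiplicities, with the hypothesis $x_i\neq y_j\neq\overline{x_i}$ licensing each insertion and killing the cross terms --- is exactly the intended expansion of that remark.
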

\begin{proof}
This is a direct consequence of Lemma~\ref{lemma:interlacement}.
\end{proof}

\begin{definition}
\label{definition:matching}
Let $\z=(z_1, \dotsc, z_n)\in\C^n$. A set
\[
  M \; \subseteq \; \bigl\{(i,j): \, 1 \leq i < j \leq n \; \mathrm{\ and\ } \;
z_i = \overline{z_j} \bigr\}
\]
is called a \emph{matching} of~$\z$ if for all $(i_1,j_1),(i_2,j_2) \in M$ we
have $i_1\neq i_2$ and $j_1\neq j_2$. (Note that for each $\z\in\C^n$ there is
a bipartite graph with at most $2n-2$ vertices such that the matchings in that
graph and the matchings of~$\z$ are in bijection.)
\end{definition}

\begin{lemma}
\label{lemma:technical}
Let $\z=(z_1,\dotsc,z_n)\in\C^n$, let $G=\gcd\bigl(Q_1(\z),\dotsc,Q_n(\z)\bigr)$,
and let $g=\deg G$. For a matching~$M$ of $\z$ define $H(M):=\prod_{(i,j)\in M}(t-z_j)$.
Then the following statements hold:
\begin{enumerate}
\item\label{ass:1} For any matching $M$ of $\z$ we have $H(M)\mid G$.
\item\label{ass:2} There is a matching~$M$ of~$\z$ such that $H(M)=G$.
\item\label{ass:3} There are indices $1\leq i_1\leq\dotsb\leq i_{n-g}\leq n$ 
such that, setting $\zt = \bigl(z_{i_1},\dotsc, 
z_{i_{n-g}}\bigr)$, we have $Q_k(\zt)=Q_{i_k}(\z)/G$ for all $1\leq k\leq n-g$ 
and $\gcd\bigl( Q_1(\zt), \dotsc, Q_{n-g}(\zt)\bigr) = 1$.
\end{enumerate}
\end{lemma}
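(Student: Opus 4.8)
The plan is to analyze the polynomials $Q_k(\z)$ through the lens of matchings, building up the three claims more or less simultaneously since they are tightly linked. First I would fix notation: for each pair $(i,j)$ with $i<j$ and $z_i=\overline{z_j}$, note that the root $z_j$ (equivalently the conjugate root $z_i=\overline{z_j}$, counted with the right multiplicity) divides many of the $Q_k(\z)$. The key observation is that in $Q_k(\z)=(t-\overline{z_1})\cdots(t-\overline{z_{k-1}})(t-z_{k+1})\cdots(t-z_n)$, a given linear factor $(t-\overline{z_j})$ with $z_i=\overline{z_j}$, $i<j$, appears whenever $k>i$ (it picks up $(t-\overline{z_i})$? no — it picks up $(t-\overline{z_j})$ precisely when $j\le k-1$, i.e. $k\ge j+1$, OR it picks up $(t-z_i)$ when $i\ge k+1$, i.e. $k\le i-1$). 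So the factor corresponding to the conjugate pair $\{i,j\}$ is \emph{absent} from $Q_k(\z)$ only for $k\in\{i,\dots,j\}$, wait — let me restate cleanly for the write-up: the value $\alpha:=z_j$ occurs as a root of $Q_k(\z)$ for every $k$ unless $k=j$ (where $(t-z_j)$ is omitted from the second product) and is \emph{replaced} by $\overline{z_j}=z_i$ only after position $j$; combining the contributions from both indices $i$ and $j$, one checks that $\gcd$ over all $k$ loses exactly one factor's worth of contribution per conjugate pair. This bookkeeping is the technical heart.

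\textbf{Ad \ref{ass:1}.} Given a matching $M$, I would show $H(M)=\prod_{(i,j)\in M}(t-z_j)$ divides every $Q_k(\z)$, hence divides $G$. For a single pair $(i,j)\in M$, the factor $(t-z_j)$ divides $Q_k(\z)$ because $Q_k(\z)$ contains $(t-\overline{z_j})=(t-z_i)$ from its first product when $k>i$, or contains $(t-z_j)$ from its second product when $k<j$; since $i<j$, at least one of these holds for every $k$ — indeed the only way to miss both is $k\le i$ and $k\ge j$, impossible. For distinct pairs in $M$, the indices $j$ are all distinct (by the matching condition), so the linear factors $(t-z_j)$ are "charged" to distinct positions and the divisibility accumulates; one has to be slightly careful with multiplicities when several pairs share the same complex value, but the distinctness of the $j$-indices is exactly what makes $H(M)$ a genuine divisor rather than an overcount. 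Summing (multiplying) over $(i,j)\in M$ gives $H(M)\mid Q_k(\z)$ for all $k$, so $H(M)\mid G$.

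\textbf{Ad \ref{ass:2} and \ref{ass:3}.} For \ref{ass:2} I would argue that $G$ itself is supported only on values $\alpha$ such that both $\alpha$ and $\overline{\alpha}$ occur among the $z_i$ (this is Remark~\ref{remark:gcd_Q_i}), and then, processing the conjugate pairs greedily, construct a matching $M$ so that $H(M)$ accounts for every factor of $G$ with the correct multiplicity — the point being that each factor of $G$ can be "blamed" on a conjugate pair, and a maximal such assignment is a matching whose $H(M)$ equals $G$; combined with \ref{ass:1} this forces equality. For \ref{ass:3}: having fixed such an optimal matching $M$ with $H(M)=G$ and $\deg G=g$, the indices $j$ appearing as second coordinates in $M$ form a set $S$ of size $g$; I would let $i_1<\dots<i_{n-g}$ enumerate $\{1,\dots,n\}\setminus S$ and set $\zt=(z_{i_1},\dots,z_{i_{n-g}})$. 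The claim $Q_k(\zt)=Q_{i_k}(\z)/G$ should follow by matching up factors: removing the positions in $S$ removes precisely the factors that $G$ contributed, and one verifies the first-product / second-product split is respected because $M$ was built to pair each removed $z_j$ with an earlier $\overline{z_j}$, so the "cut point" $i_k$ in $\z$ lands between the surviving factors in exactly the way $k$ does in $\zt$. Finally $\gcd(Q_1(\zt),\dots,Q_{n-g}(\zt))=1$ because if it had a nontrivial factor, that factor would have to correspond (again by Remark~\ref{remark:gcd_Q_i}) to a conjugate pair still present in $\zt$, which we could use to enlarge $M$ — contradicting that $H(M)=G$ was already of full degree $g$.

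\textbf{Main obstacle.} The genuinely delicate step is the interplay of \emph{multiplicities} when the tuple $\z$ contains repeated values or several conjugate pairs sharing a value (e.g. repeated real roots, or $\alpha,\alpha,\overline{\alpha},\overline{\alpha}$). The clean statement "$H(M)\mid G$ and some $H(M)$ equals $G$" hides a counting argument that I expect to carry out by localizing at a single value $\alpha$ (and its conjugate) and reducing to a purely combinatorial claim about intervals on $\{1,\dots,n\}$ — essentially: the multiplicity of $\alpha$ in $G$ equals the maximum size of a set of disjoint "nested or crossing" index-pairs realizing conjugacy, which is a small matching-theoretic lemma. Once that local statement is in hand, Corollary~\ref{cor:interlacement} lets me assemble the global answer, and \ref{ass:3} drops out by subtracting the accounted-for factors. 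I would isolate this local multiplicity computation as the one place to be fully rigorous; everything else is bookkeeping with the explicit formula for $Q_k(\z)$.
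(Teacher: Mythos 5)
Your part~(i) is essentially the paper's argument and is fine (one small point: you invoke only the distinctness of the second coordinates of the matching, but the multiplicity count also needs distinctness of the first coordinates, since for $j\le k$ the pair $(i,j)$ must be ``charged'' to the factor $(t-\overline{z_i})$ at position~$i$; the paper makes this transparent by splitting $H(M)$ according to $j>k$ versus $j\le k$). The real problem is that for (ii) and (iii) you have only a plan, and its key step is deferred: the ``local multiplicity computation'' you flag as the main obstacle \emph{is} the content of the lemma, not bookkeeping. The paper proves (ii) and (iii) in parallel by induction on $\deg G$, after using Corollary~\ref{cor:interlacement} to reduce to $\z\in\{\alpha,\albar\}^n$, and the delicate case is precisely when every prefix and every suffix of $\z$ contains strictly more $\alpha$'s than $\albar$'s: then no single entry can be deleted to lower $\deg G$, and one must delete a conjugate pair of entries simultaneously. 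A greedy ``blame each factor of $G$ on a conjugate pair'' argument does not address this case.

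Moreover, your concrete recipe for (iii) --- take a matching $M$ with $H(M)=G$ and let the surviving indices be the complement of the second coordinates of $M$ --- is false. Take $\z=(\albar,\alpha,\alpha,\albar)$ with $\alpha\notin\R$. Then $Q_1(\z)=Q_2(\z)=(t-\alpha)^2(t-\albar)$ and $Q_3(\z)=Q_4(\z)=(t-\alpha)(t-\albar)^2$, so $G=(t-\alpha)(t-\albar)$. Every matching with $H(M)=G$ consists of one pair from $\{(1,2),(1,3)\}$ and one from $\{(2,4),(3,4)\}$, so its second-coordinate set is $\{2,4\}$ or $\{3,4\}$; deleting either set leaves $\zt=(\albar,\alpha)$, for which $Q_1(\zt)=Q_2(\zt)=t-\alpha$, so the gcd is not $1$ (and for $M=\{(1,2),(3,4)\}$ the identity $Q_k(\zt)=Q_{i_k}(\z)/G$ already fails at $k=2$, since $Q_3(\z)/G=t-\albar$). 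The only admissible surviving set is $\{2,3\}$, i.e.\ one must delete position~$1$, which occurs as a \emph{first} coordinate of the matching, together with position~$4$. This also breaks your closing argument that a nontrivial $\gcd$ of the $Q_k(\zt)$ would let you enlarge~$M$: here the leftover factor $t-\alpha$ corresponds to a conjugate pair of $\z$ whose addition would violate the distinct-first-coordinate condition, while $M$ is already of maximal size $\deg G$. This is exactly why the paper does not read the surviving indices off a fixed maximal matching, but chooses at each inductive step which entry (or pair of entries) to delete according to the prefix/suffix counts.
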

\begin{proof}
For \ref{ass:1} we show that $H(M)$ divides~$Q_k(\mathbf{z})$ for any 
matching~$M$ of~$\z$ and for any $k\in\{1,\dotsc,n\}$. Indeed, by rewriting
\[
  H(M) = \prod_{(i,j)\in M}(t-z_j) =
  \bigg(\prod_{\genfrac{}{}{0pt}{1}{(i,j)\in M}{j>k}}(t-z_j)\bigg)
  \bigg(\prod_{\genfrac{}{}{0pt}{1}{(i,j)\in M}{j\leq
k}}(t-\overline{z_i})\bigg)
\]
the claim follows, since $Q_k(\mathbf{z})$ has precisely the roots
$\overline{z_1}, \dotsc, \overline{z_{k-1}}$, $z_{k+1}, \dotsc, z_n$.

For \ref{ass:2} and \ref{ass:3} note that the statements trivially hold
for $G=1$, so we assume now that $G\neq1$.  Using
Corollary~\ref{cor:interlacement} the proof is reduced to the case
where $\z\in\{\alpha,\albar\}^n$ for some $\alpha\in\C$. If $\alpha\in\R$ then
we have $Q_1(\z)=\dotsb=Q_n(\z)=(t-\alpha)^{n-1}=G$ and we can choose
$M:=\{(1,2),(2,3),\dotsc,(n-1,n)\}$ and $i_k=k$ for $1\leq k<n$. Hence from now
on we assume $\alpha\neq\albar$.

Both assertions \ref{ass:2} and \ref{ass:3} are proved in parallel by an
inductive argument on the degree of~$G$, so in the following we show how to
reduce the given scenario to one with smaller~$g$. The main difficulty here is
to pick an index~$j$ such that deleting~$z_j$ from~$\z$ yields a $\gcd$ of
smaller degree. We distinguish three cases and assume w.l.o.g. that
$z_n=\alpha$ (case A) resp. $z_1=\alpha$ (cases B and C).

\smallskip
\emph{Case A:} There exists an index~$j$ such that
$\#_{\alpha}(z_j,\dotsc,z_n)\leq\#_{\albar}(z_j,\dotsc,z_n)$, where
$\#_{\alpha}$ denotes the number of occurrences of $\alpha$. In this situation
we set $\zt=(z_1,\dotsc,z_{n-1})$ and
$\tilde{G}=\gcd\bigl(Q_1(\zt),\dotsc,Q_{n-1}(\zt)\bigr)$. By the induction
hypothesis there exists a matching $\tilde{M}$ for $\zt$ such that
$H(\tilde{M})=\tilde{G}$. Because of the assumption on the number of
$\alpha$'s and $\albar$'s, there must be an index $i\in\{j,\dotsc,n-1\}$ with
$z_i=\albar$ that does not occur as the first entry of any pair
in~$\tilde{M}$. Hence $M:=\tilde{M}\cup\{(i,n)\}$ is a matching for~$\z$ and
we have $H(M)=(t-\alpha)H(\tilde{M})$. Since by \ref{ass:1} we have
$H(M)\mid G$, we obtain $(t-\alpha)\tilde{G}\mid G$. Next observe that
$Q_k(\zt)=Q_k(\z)/(t-z_n)$ for all $1\leq k<n$ and hence
\[
  \tilde{G} = \gcd\bigl(Q_1(\zt),\dotsc,Q_{n-1}(\zt)\bigr) =
  \frac1{t-\alpha} \gcd\bigl(Q_1(\z),\dotsc,Q_{n-1}(\z)\bigr).
\]
From the definition of~$G$ it follows that $G\mid(t-\alpha)\tilde{G}$,
and thus $G=(t-\alpha)\tilde{G}=H(M)$.

\smallskip
\emph{Case B:} There exists an index~$j$ such
$\#_{\alpha}(z_1,\dotsc,z_j)\leq\#_{\albar}(z_1,\dotsc,z_j)$. The reasoning
in this case is completely analogous to case A, setting $\zt=(z_2,\dotsc,z_n)$.

\smallskip
\emph{Case C:} Neither case A nor case B applies, which means that for all
indices~$j$ we have $\#_{\alpha}(z_1,\dotsc,z_j)>\#_{\albar}(z_1,\dotsc,z_j)$
and $\#_{\alpha}(z_j,\dotsc,z_n)>\#_{\albar}(z_j,\dotsc,z_n)$. Note that in this
situation we automatically have $z_1=z_n$, which w.l.o.g. we assumed to
be~$\alpha$.  Now there exists no entry in~$\z$ whose removal would yield a
$\gcd$ of degree $\deg G-1$. We circumvent this problem by setting
$\zt=(z_1,\dotsc,z_{j-1},z_{j+2},\dotsc,z_n)$ where $j$ is the largest index
such that $z_j=\albar$; in particular $j<n$. The rest is analogous to the
previous reasoning. As before, we may assume that there exists a
matching~$\tilde{M}$ of $\zt$ with $H(\tilde{M})=\tilde{G}=
\gcd\bigl(Q_1(\zt),\dotsc,Q_{n-2}(\zt)\bigr)$. Since $\#_{\alpha}(z_1,\dotsc,z_j)
> \#_{\albar}(z_1,\dotsc,z_j)$ there exists an index $i<j$ with $z_i=\alpha$
such that $i$ does not appear as the first entry of any pair in~$\tilde{M}$.
In order to extend $\tilde{M}$ to a matching for~$\z$ we need some relabeling;
for this purpose we define $\tau(a)=a$ if $a<j$ and $\tau(a)=a+2$ otherwise.
Then 
\[ 
  M:=\bigl\{\bigl(\tau(a),\tau(b)\bigr)\mid(a,b)\in\tilde{M}\bigr\}\cup 
     \bigl\{(i,j),(j,j+1)\bigr\}
\]
is a matching for~$\z$ and $H(M)=(t-\alpha)(t-\albar)H(\tilde{M})$;
hence $(t-\alpha)(t-\albar)\tilde{G}\mid G$. A simple calculation shows that
$Q_k(\zt)=Q_k(\z)/\bigl((t-\alpha)(t-\albar)\bigr)$ for $k<j$ and that
$Q_k(\zt)=Q_{k+2}(\z)/\bigl((t-\alpha)(t-\albar)\bigr)$ for $k\geq j$. Thus
we get
\[
  \tilde{G} = \frac1{(t-\alpha)(t-\albar)}
  \gcd\bigl(Q_1(\z),\dotsc,Q_{j-1}(\z),Q_{j+2}(\z),\dotsc,Q_n(\z)\bigr).
\]
Again, this implies $G=(t-\alpha)(t-\albar)\tilde{G}=H(M)$,
which concludes the proof.
\end{proof}

\begin{proposition}
\label{proposition:gcd}
Let $\z=(z_1,\dotsc,z_n)\in\C^n$ and let $M$ be a matching of~$\mathbf{z}$ of
maximal size, and let $Q_1(\z),\dotsc,Q_n(\z)$ be as in
Definition~\ref{definition:Q_i}. Then we have
\[
  G := \gcd\bigl(Q_1(\z), \dotsc, Q_n(\z)\bigr) = \prod_{(i,j)\in M}(t-z_j)
\]
(where the $\gcd$ is assumed to be a monic polynomial).
\end{proposition}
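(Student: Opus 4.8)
The plan is to read off the statement directly from Lemma~\ref{lemma:technical}, whose parts~\ref{ass:1} and~\ref{ass:2} already contain all the substantive work; what remains is a short degree count that upgrades ``\emph{some} matching realizes $G$'' to ``\emph{every maximal} matching realizes $G$''.

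First I would record the elementary observation that for any matching $N$ of $\z$ the polynomial $H(N)=\prod_{(i,j)\in N}(t-z_j)$ is monic of degree exactly $|N|$, being a product of $|N|$ monic linear factors (the values $z_j$ occurring may repeat, but this does not affect the degree, which is why it is cleaner to count the cardinality $|N|$ of the matching rather than distinct roots). By part~\ref{ass:2} of Lemma~\ref{lemma:technical} there is a matching $M_0$ of $\z$ with $H(M_0)=G$; hence $\deg G=|M_0|$. Since $M$ is a matching of maximal size, $|M|\geq|M_0|$. On the other hand, part~\ref{ass:1} of Lemma~\ref{lemma:technical} gives $H(M)\mid G$, so $|M|=\deg H(M)\leq\deg G=|M_0|$. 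Combining the two inequalities yields $|M|=|M_0|=\deg G$.

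Finally, from $H(M)\mid G$ together with $\deg H(M)=\deg G$ I conclude that $G$ and $H(M)$ agree up to a multiplicative constant; comparing leading coefficients (both polynomials are monic, $G$ by assumption and $H(M)$ by construction) gives $G=H(M)=\prod_{(i,j)\in M}(t-z_j)$, which is the claim. The degenerate case in which no pair $(i,j)$ with $z_i=\overline{z_j}$ exists is covered automatically: then the only matching is the empty one, $H(\emptyset)=1$, and $G=1$ by Remark~\ref{remark:gcd_Q_i}.

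I do not expect any real obstacle here: the combinatorial heart of the argument---the inductive, three-case construction of a matching attaining the $\gcd$, reduced via Corollary~\ref{cor:interlacement} to the case $\z\in\{\alpha,\albar\}^n$---is already packaged inside Lemma~\ref{lemma:technical}. The only point deserving a line of care is the one flagged above, namely that distinct pairs of a matching may carry equal values $z_j$, so the degree bookkeeping should be phrased in terms of $|M|$ and $|M_0|$; since $H(N)$ is by definition a product indexed by the pairs of $N$, the identity $\deg H(N)=|N|$ holds unconditionally and the argument goes through verbatim.
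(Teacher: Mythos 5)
Your argument is correct and is essentially the paper's own proof: both directions rest on parts \ref{ass:1} and \ref{ass:2} of Lemma~\ref{lemma:technical}, with the same degree comparison $|M|=\deg H(M)\leq\deg G=|M'|$ forced by maximality (the paper phrases it contrapositively, you phrase it as two inequalities, but the content is identical). No gaps; your extra remark about repeated values $z_j$ not affecting the degree count is a harmless clarification.
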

\begin{proof}
In Lemma~\ref{lemma:technical} part~\ref{ass:1} it was shown that the
product on the right-hand side, which we denoted by $H(M)$, divides~$G$ for
any matching~$M$. For the other direction we have to argue that the maximality
of~$M$ implies that $G\mid H(M)$, or equivalently, that $G\nmid H(M)$ implies
that $M$ is not maximal. From Lemma~\ref{lemma:technical} part~\ref{ass:2} it
follows that there exists a matching $M'$ of $\z$ such that $H(M')=G$.  Note
that $|M'|=\deg(G)$. Thus, if $G\nmid H(M)$, which means that $\deg(H(M))<\deg(G)$,
then $M$ is not maximal because $|M|<|M'|$.
\end{proof}

\begin{example}
Let $Z=(t-\alpha)^r(t-\overline{\alpha})^{r+1}$ for some
$\alpha\in\C\setminus\R$. In the following table we consider different
permutations $\z\in\{\alpha,\albar\}^{2r+1}$ of the roots of~$Z$; to each
permutation we give the gcd $G = \gcd\bigl(Q_1(\z), \dotsc, Q_n(\z)\bigr)$ and a 
matching~$M$ of maximal size that witnesses~$G$:
\medskip
\begin{center}
\begin{tabular}{c c c}
\hline permutation & \rule{0pt}{10pt} $G$ & $M$ \\ \hline 
\rule{0pt}{12pt} 
$(\alpha,\dotsc,\alpha,\overline{\alpha},\dotsc,\overline{\alpha })$ &
$(t-\overline{\alpha})^r$ &
  $\{(1,r+1),(2,r+2),\dotsc,(r,2r)\}$ \\
$(\overline{\alpha},\dotsc,\overline{\alpha},\alpha,\dotsc,\alpha)$ &
$(t-\alpha)^r$ &
  $\{(1,r+2),(2,r+3),\dotsc,(r,2r+1)\}$ \\
$(\overline{\alpha},\alpha,\overline{\alpha},\alpha,\dotsc,\alpha,\overline{
\alpha})$ &
  $(t-\alpha)^r(t-\overline{\alpha})^r$ & $\{(1,2),(2,3),\dotsc,(2r,2r+1)\}$ \\
\end{tabular}
\end{center}
\medskip
The cases displayed above are the extreme ones: indeed, it is easy to see that
$r\leq\deg(G)\leq 2r$. Moreover, for any $G=(t-\alpha)^i(t-\overline{\alpha})^j$
with $0\leq i,j\leq r$ and $i+j\geq r$ there exists a permutation~$\mathbf{z}$
that produces exactly this gcd~$G$. These considerations lead to interesting
combinatorial questions --- e.g., how many permutations of a given set of roots
are there that produce a prescribed gcd --- which, however, are irrelevant for
our purposes.
\end{example}

\begin{proposition}
\label{proposition:factorization_ideal}
Let $\mathbf{z} = (z_1, \dotsc, z_n)\in\C^n$ and $W\in\C[t]$ with $\deg W<n$. Then
\[
  W \in \bigl\langle Q_1(\mathbf{z}), \dotsc, Q_n(\mathbf{z}) \bigr\rangle_{\C} 
\quad \iff \quad W \in \bigl( Q_1(\mathbf{z}), \dotsc, Q_n(\mathbf{z}) \bigr) 
\cdot \C[t],
\]
where $\bigl( Q_1(\mathbf{z}), \dotsc, Q_n(\mathbf{z}) \bigr) \cdot \C[t]$ is the 
ideal of $\C[t]$ generated by $Q_1(\mathbf{z}), \dotsc, Q_n(\mathbf{z})$.
\end{proposition}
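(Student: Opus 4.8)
The plan is to prove the equivalence by showing both directions, where only the implication ``$\Leftarrow$'' requires genuine work, the other being trivial since each $Q_i(\z)$ already lies in the $\C$-linear span on the left. So suppose $W$ belongs to the ideal $\bigl(Q_1(\z),\dotsc,Q_n(\z)\bigr)\cdot\C[t]$; writing $G=\gcd\bigl(Q_1(\z),\dotsc,Q_n(\z)\bigr)$, this ideal is exactly the principal ideal $(G)\cdot\C[t]$, so $G\mid W$. Since $\deg W<n$ and $\deg G=g$, the quotient $W/G$ has degree strictly less than $n-g$. The goal is then to show that $W/G$, after multiplication by $G$, can be written as a $\C$-linear combination of the $Q_i(\z)$.

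The key tool is part~\ref{ass:3} of Lemma~\ref{lemma:technical}: there are indices $1\leq i_1\leq\dotsb\leq i_{n-g}\leq n$ such that, with $\zt=(z_{i_1},\dotsc,z_{i_{n-g}})$, we have $Q_k(\zt)=Q_{i_k}(\z)/G$ for all $1\leq k\leq n-g$ and $\gcd\bigl(Q_1(\zt),\dotsc,Q_{n-g}(\zt)\bigr)=1$. By Corollary~\ref{cor:span} applied to $\zt$, the $\C$-linear span $\bigl\langle Q_1(\zt),\dotsc,Q_{n-g}(\zt)\bigr\rangle_\C$ equals the full space of polynomials of degree less than $n-g$. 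Since $W/G$ is such a polynomial, we may write $W/G=\sum_{k=1}^{n-g}c_k\,Q_k(\zt)$ for suitable $c_k\in\C$. Multiplying through by $G$ gives $W=\sum_{k=1}^{n-g}c_k\,G\,Q_k(\zt)=\sum_{k=1}^{n-g}c_k\,Q_{i_k}(\z)$, which exhibits $W$ as an element of $\bigl\langle Q_1(\z),\dotsc,Q_n(\z)\bigr\rangle_\C$. This completes the nontrivial direction.

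The only subtlety I anticipate is confirming that the ideal $\bigl(Q_1(\z),\dotsc,Q_n(\z)\bigr)\cdot\C[t]$ really is the principal ideal generated by their $\gcd$ --- this is standard since $\C[t]$ is a principal ideal domain, so $G$ generates the ideal generated by any finite set whose greatest common divisor is $G$. A second point to check carefully is the degree bookkeeping: one needs $\deg(W/G)<n-g$, which follows from $\deg W<n$ together with $\deg G=g$ (and the convention, harmless here, that if $W=0$ the statement is vacuous). With Lemma~\ref{lemma:technical}\ref{ass:3} and Corollary~\ref{cor:span} in hand, no further obstacle remains; the proof is essentially a two-line reduction once those ingredients are invoked. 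If one wishes, one can also remark that the forward direction ``$\Rightarrow$'' is immediate because $\bigl\langle Q_1(\z),\dotsc,Q_n(\z)\bigr\rangle_\C\subseteq\bigl(Q_1(\z),\dotsc,Q_n(\z)\bigr)\cdot\C[t]$ trivially, and this containment together with the reverse inclusion just established yields the claimed equality of the two sets of admissible secondary parts $W$.
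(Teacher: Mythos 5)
Your proof is correct and follows essentially the same route as the paper: both arguments divide out $G=\gcd\bigl(Q_1(\z),\dotsc,Q_n(\z)\bigr)$, invoke Lemma~\ref{lemma:technical}\ref{ass:3} to identify the quotients $Q_{i_k}(\z)/G$ with the polynomials $Q_k(\zt)$ for a subtuple $\zt$ with trivial gcd, and then apply Corollary~\ref{cor:span} before multiplying back by $G$. The only difference is presentational (you argue element-wise for a given $W$ and make the principal-ideal step explicit, while the paper phrases it as an equality of sets), so no further comment is needed.
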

\begin{proof}
The claim is equivalent to the following statement:
\begin{equation} 
\label{equation:restatement_ideal}
  \bigl\langle Q_1(\z), \dotsc, Q_n(\z) \bigr\rangle_{\C} \; = \; \bigl( Q_1(\z),
  \dotsc, Q_n(\z) \bigr) \cdot \C[t] \, \cap \, \C[t]_{<n},
\end{equation}
where $\C[t]_{<n}$ denotes the set of complex polynomials of degree less
than~$n$. Note that the containment ``$\subseteq$'' is trivial. Let $G = 
\gcd{\left( Q_1(\z), \dotsc, Q_n(\z) \right)}$. If \mbox{$G=1$}, then by 
Corollary~\ref{cor:span} we have $\bigl\langle Q_1(\z), \dotsc, Q_n(\z) 
\bigr\rangle_{\C} = \C[t]_{<n}$, which implies 
Equation~\eqref{equation:restatement_ideal}. On the other hand, if $G$ is 
non-trivial, Equation~(\ref{equation:restatement_ideal}) is equivalent~to
\begin{equation}
\label{equation:second_restatement_ideal}
  \bigl\langle \tilde{Q}_1, \dotsc, \tilde{Q}_n \bigr\rangle_{\C} \; = \;
  \bigl( \tilde{Q}_1, \dotsc, \tilde{Q}_n \bigr) \cdot \C[t] \, \cap \, 
\C[t]_{<n-g},
\end{equation}
where $\tilde{Q}_i = Q_i(\z)/G$ and $g = \deg{G}$. By 
Lemma~\ref{lemma:technical} part~\ref{ass:3}, there exists 
$\zt=(z_{i_1},\dotsc,z_{i_{n-g}})$ 
such that $\gcd\bigl(Q_1(\zt), \dotsc, Q_{n-g}(\zt)\bigr) = 1$, and $Q_k(\zt) 
= \tilde{Q}_{i_k}$ for all~$k$. Hence, by Corollary~\ref{cor:span} we have 
$\bigl\langle Q_1(\zt), \dotsc, Q_{n-g}(\zt) \bigr\rangle_{\C} = 
\C[t]_{<n-g}$. This implies 
Equation~\eqref{equation:second_restatement_ideal}, since $\bigl\langle 
Q_1(\zt), \dotsc, Q_{n-g}(\zt) \bigr\rangle_{\C} \; \subseteq \;
\bigl\langle \tilde{Q}_1, \dotsc, \tilde{Q}_n \bigr\rangle_{\C}$.
\end{proof}

Recall that our goal is to decompose a given motion into a sequence of
revolutions, by factorizing a corresponding motion polynomial. For this
purpose we have to restrict the domain of motion polynomials we are working
with: for example, there is no hope to write a motion whose orbits are
unbounded as the composition of revolutions, since the orbits of revolutions
are always bounded.  In particular, if we consider a rational bounded curve,
then it can be given by a parametrization $\varphi=(f/h,g/h)$ for some real
polynomials $f,g$ and~$h$ with $\deg{h} \ge \max \{ \deg{f}, \deg{g} \}$. The
boundedness of the curve implies that $h$ has no real roots. The limit 
of~$\varphi(t)$ for $t \rightarrow \infty$ gives a point in~$\R^2$, which can be 
translated to the origin. Then we even have $\deg{h} > \max \{ \deg{f}, \deg{g} 
\}$.  If we further assume that $h$ is monic, then also the motion polynomial $P 
= h + \eta \tth (f + \imath g)$ is monic, and by 
Proposition~\ref{prop:curve_motion} it traces the image of~$\varphi$. This 
motivates the following definition.

\begin{definition}
\label{definition:bounded_poly}
Let $P = Z + \eta \tth W$ be a motion polynomial in~$\K[t]$. We say that
$P$ is \emph{bounded} if it is monic and if $Z$ does not have real roots. 
Notice that the motion~$\phi$ given by a bounded motion polynomial is a map
defined for all $t \in \R$; hence for such a motion we can use the notation
$\phi \colon \R \longrightarrow \p^3_{\R}$ instead of $\phi \colon \R
\dashrightarrow \p^3_{\R}$.
\end{definition}

\begin{remark}
 A bounded linear motion polynomial~$P$ represents a revolution around a point: 
if $P(t) = t - k$, then $t - k \in \K$ represents a rotation around a point~$Q$ 
that depends only on~$k$. This follows from Lemma~\ref{lemma:one_dim_motions}.
\end{remark}

\noindent We are now ready to state and prove the main result of this section:
\begin{theorem}
\label{theorem:bounded_factor}
Let $P \in \K[t]$ be a bounded motion polynomial. Then there exists a real polynomial
$R \in \R[t]$ such that $RP$ can be factored into linear polynomials.

If $P=Z+\eta W$ 
and $\tilde{R} = \gcd(Z,\overline{Z})$, then the smallest such 
real polynomial is $R = \tilde{R} / \gcd(\tilde{R}, W\overline{W})$.
\end{theorem}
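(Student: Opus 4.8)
The plan is to combine two ideas: first, use the factorization criterion of Lemma~\ref{lemma:characterization_factorization} together with the gcd computation of Proposition~\ref{proposition:gcd} to understand exactly which real multipliers $R$ make $RP$ factorizable; second, optimize over $R$ to identify the minimal one. For existence, I would proceed as follows. Since $P$ is bounded, $Z$ has no real roots, so its complex roots come in conjugate pairs; write $Z = \prod_{k}(t-\alpha_k)(t-\overline{\alpha_k})$ up to reordering. The candidate is $R = \overline{Z}$ (or a divisor thereof), so that $RP = \overline{Z}Z + \eta\,\overline{Z}W$ has primal part $\overline{Z}Z$, whose multiset of roots is closed under conjugation with each $\alpha$ appearing with the same multiplicity as $\overline{\alpha}$. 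For such a balanced multiset one can choose a permutation $\z$ of the roots so that $\gcd(Q_1(\z),\dots,Q_N(\z)) = 1$: indeed, interleave each root with its conjugate so that no matching pair $z_i = \overline{z_j}$ with $i<j$ exists of positive length --- concretely, arrange the roots so that in every prefix the number of occurrences of $\alpha$ strictly exceeds that of $\overline{\alpha}$ for one choice in each conjugate class; by Proposition~\ref{proposition:gcd} the maximal matching is then empty and the gcd is $1$. By Corollary~\ref{cor:span}, $\langle Q_1(\z),\dots,Q_N(\z)\rangle_{\C} = \C[t]_{<N}$, hence the secondary part $\overline{Z}W$ (which has degree $<N = \deg(\overline{Z}Z)$ because $\deg W<\deg Z$) lies in this span, and Lemma~\ref{lemma:characterization_factorization} yields a factorization of $RP$ into linear factors. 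This proves existence with $R=\overline{Z}$.

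For minimality, I would first reduce the search to divisors of $\tilde R := \gcd(Z,\overline{Z})$. Suppose $R$ is real and $RP$ factors into linear factors. The primal part of $RP$ is $RZ$; since each linear factor contributes a root of $RZ$, and the $\eta$-part $RW$ lies in the span of the corresponding $Q_i$'s, one argues (using Remark~\ref{remark:gcd_Q_i} and Proposition~\ref{proposition:factorization_ideal}) that $RW$ must lie in the ideal generated by the $Q_i(\z)$ for some permutation $\z$ of the roots of $RZ$; the gcd $G$ of these $Q_i$ divides $RW$. One shows $G$ is forced to be divisible by $\gcd(RZ, \overline{RZ})/(\text{something})$, and reading off the divisibility constraints on the multiplicities of each conjugate pair $\alpha,\overline{\alpha}$ of $Z$ shows that $R$ must supply enough extra conjugate roots to ``balance'' $Z$ modulo the roots already cancelled by $W\overline W$. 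Since $R$ is real, the roots it adds are conjugation-symmetric, so the minimal $R$ is obtained by adding, for each conjugate pair, exactly the conjugate roots of $Z$ not already matched against a root shared with $W\overline W$. This bookkeeping yields $R = \tilde R/\gcd(\tilde R, W\overline W)$.

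The last step is to verify that this specific $R$ actually works, i.e., that $RP$ factors. By the existence argument applied with the appropriate permutation, it suffices to check that $RW$ lies in the span $\langle Q_1(\z),\dots\rangle_{\C}$ for a suitable $\z$; equivalently, by Proposition~\ref{proposition:factorization_ideal}, that $G = \gcd(Q_i(\z))$ divides $RW$, where $G$ can be realized (Proposition~\ref{proposition:gcd}) as the product over a maximal matching. Choosing $\z$ to maximize cancellation against $W\overline{W}$, one computes that $G$ divides $\gcd(RZ,\overline{RZ})$ and that, with $R = \tilde R/\gcd(\tilde R,W\overline W)$, the polynomial $RW$ is divisible by $G$; the degree bound $\deg(RW) < \deg(RZ)$ holds since $\deg W < \deg Z$. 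Then Lemma~\ref{lemma:characterization_factorization} gives the factorization.

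The main obstacle I anticipate is the minimality bookkeeping: translating ``$RP$ factors into linear factors'' into a clean divisibility statement about $R$, $Z$, $\overline Z$, and $W\overline W$ at the level of multiplicities of each conjugate pair. The subtlety is that $W\overline W$, being real, only ``sees'' conjugate pairs symmetrically, yet $W$ itself can share an unbalanced set of roots with $Z$; controlling how much of $Z$ is already ``pre-balanced'' by $W$ --- so that $R$ need not re-supply it --- requires carefully combining the matching description of the gcd (Proposition~\ref{proposition:gcd}) with the ideal membership criterion (Proposition~\ref{proposition:factorization_ideal}) and then arguing that no smaller real divisor of $\overline Z$ suffices.
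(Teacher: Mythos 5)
Your existence argument contains a genuine error. First, $R=\overline{Z}$ is not an admissible choice: the theorem requires $R\in\R[t]$, and $\overline{Z}$ is real only in the special case where $Z$ is conjugation-invariant (i.e., each $\alpha$ and $\overline{\alpha}$ occur with equal multiplicity); moreover, in the noncommutative algebra one has $\overline{Z}\cdot(Z+\eta W)=Z\overline{Z}+\eta\,ZW$, not $Z\overline{Z}+\eta\,\overline{Z}W$, so even the bookkeeping of the secondary part is off unless the multiplier is real (real polynomials are central, complex ones are not). Second, and more fundamentally, the claim that one can permute the roots of $Z\overline{Z}$ so that $\gcd\bigl(Q_1(\z),\dotsc,Q_N(\z)\bigr)=1$ contradicts Remark~\ref{remark:gcd_Q_i}: the gcd is nontrivial for \emph{every} permutation as soon as the multiset contains a conjugate pair, which $Z\overline{Z}$ always does. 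The reason is that any two positions $i<j$ carrying conjugate values form a matchable pair regardless of order ($z_i=\alpha$, $z_j=\overline{\alpha}$ gives $z_i=\overline{z_j}$, and $z_i=\overline{\alpha}$, $z_j=\alpha$ gives the same relation), so your prefix-majorization arrangement does not make the maximal matching empty, and Corollary~\ref{cor:span} cannot be invoked. The workable strategy --- the one the paper follows --- is not to force $G=1$ but to choose, for each conjugate class, an explicit block arrangement $(\overline{\alpha_i},\dotsc,\alpha_i,\dotsc,\overline{\alpha_i},\dotsc)$ whose resulting $G$ (computed via Proposition~\ref{proposition:gcd} from a maximal matching) divides the new secondary part $RW$, after first pulling out any common real factor of $Z$ and $W$; that reduction is what guarantees $u_i=0$ or $v_i=0$ whenever $s_i>0$ and makes the multiplicity count close.

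Your second and third paragraphs (minimality and verification of the specific $R=\tilde{R}/\gcd(\tilde{R},W\overline{W})$) point in the right direction --- reduce via Lemma~\ref{lemma:characterization_factorization} and Proposition~\ref{proposition:factorization_ideal} to the divisibility $G\mid RW$ and control $G$ through matchings --- but they remain a plan rather than a proof: the decisive step, namely exhibiting a permutation of the roots of $RZ$ and computing the resulting $G$ so that the multiplicity inequalities $s_i-\min\{s_i,v_i\}\leq u_i+s_i-m_i$ and $s_i-\min\{s_i,u_i\}\leq v_i+s_i-m_i$ come out, is exactly the bookkeeping you defer, and without the preliminary removal of real common factors it can fail. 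As it stands, the proposal does not establish either the existence or the minimality claim.
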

\begin{proof}
If $P$ is of the form $P=S\tilde{P}$ with $S\in\R[t]\setminus\R$ and
$\tilde{P}\in\K[t]$, then we can apply the theorem to $\tilde{P}$ obtaining
$R\in\R[t]$ such that $R\tilde{P}$ factors; thus also $RS\tilde{P}=RP$
factors.  Hence we may assume that $P$ does not have non-constant real
factors.

Using Lemma~\ref{lemma:characterization_factorization} and
Proposition~\ref{proposition:factorization_ideal}, the proof is reduced to
showing that there exists a permutation of the $n$ roots of $RZ$ (counted with
multiplicities) such that the corresponding $G=\gcd(Q_1,\dotsc,Q_n)$
divides~$RW$. To this end write
\[
  Z = \prod_{i=1}^h \Bigl((t-\alpha_i)^{r_i} (t-\overline{\alpha_i})^{s_i} 
\Bigr),\quad
  r_i\geq s_i\geq0,\quad r_i>0,
\]
where $\alpha_i \in \C$ are pairwise conjugate-distinct, i.e.,
$\alpha_i\neq\alpha_j$ and $\alpha_i\neq\overline{\alpha_j}$ for $i\neq j$.
Note that in the special case $s_1=\dots=s_h=0$ we get $R=1$, in agreement
with the fact that in this case already $P$ itself factors by
Lemma~\ref{lemma:no_conj_roots}. On the other hand, we have always
$\deg(R)\leq\deg(P)$, and equality is attained when $r_i=s_i$ for all~$i$ and
$\gcd(Z,W)=1$.

Define $R_i:=(t-\alpha_i)(t-\overline{\alpha_i})$ and write
$\tilde{R}=\prod_{i=1}^h R_i^{s_i}$. Next let $u_i$ and $v_i$ be the
multiplicities of $\alpha_i$ and $\overline{\alpha_i}$ in~$W$, i.e.,
\[
  W = 
\tilde{W}\cdot\prod_{i=1}^h\Bigl((t-\alpha_i)^{u_i}(t-\overline{\alpha_i})^{v_i}
\Bigr)
\]
for some polynomial $\tilde{W}\in\C[t]$ such that $\gcd(\tilde{W},R_1\cdots
R_h)=1$. By introducing the quantity $m_i:=\min\{s_i,u_i+v_i\}$, the
polynomial~$R$ can be written as
\[
  R=\frac{\tilde{R}}{\gcd(\tilde{R}, W\overline{W})} = \prod_{i=1}^h 
R_i^{s_i-m_i},
\]
and hence
\begin{align*}
  RZ &= \prod_{i=1}^h\Bigl((t-\alpha_i)^{r_i+s_i-m_i} 
(t-\overline{\alpha_i})^{2s_i-m_i} \Bigr),\\
  RW &= \tilde{W}\cdot\prod_{i=1}^h\Bigl((t-\alpha_i)^{u_i+s_i-m_i} 
(t-\overline{\alpha_i})^{v_i+s_i-m_i} \Bigr).
\end{align*}
To construct an admissible permutation, the roots 
$\alpha_i,\overline{\alpha_i}$ are arranged as follows:
\[
\bigl(\underbrace{\overline{\alpha_i},\>\dotsc,\>\overline{\alpha_i}}_{s_i-\min\{
s_i,v_i\}\!\!},\>\underbrace{\alpha_i,\>\dotsc,\>\alpha_i}_{r_i+s_i-m_i},
\>\underbrace{\overline{ 
\alpha_i},\>\dotsc,\>\overline{\alpha_i}}_{s_i-\min\{s_i, u_i\}\!\!}\bigr).
\]
The assumption on~$P$ ensures that if $s_i>0$ then $u_i=0$ or $v_i=0$; indeed, if
$s_i,u_i,v_i$ were all strictly positive, then $R_i$ would be a common real
factor of $Z$ and~$W$.\linebreak Then $m_i=\min\{s_i,u_i\}+\min\{s_i,v_i\}$,
and hence the previous multiplicities agree with those in $RZ$.

The final permutation is obtained as an arbitrary interlacement of all
arrangements for $1\leq i\leq h$. Proposition~\ref{proposition:gcd} then implies that
\[
  G=\prod_{i=1}^h \Bigl( (t-\alpha_i)^{s_i-\min\{s_i,v_i\}} 
(t-\overline{\alpha_i})^{s_i-\min\{s_i,u_i\}} \Bigr).
\]
Since $s_i-\min\{s_i,v_i\}\leq u_i+s_i-m_i$ and $s_i-\min\{s_i,u_i\}\leq 
v_i+s_i-m_i$, we conclude that $G$ divides~$RW$.
\end{proof}
\begin{algorithm}
\caption{$\mathtt{FactorMotionPolynomial}$}
\begin{algorithmic}[1]
  \Require $P = Z + \eta \tth W \in \K[t]$ a bounded motion polynomial
    such that $Z$ and $W$ have no common factor in $\R[t]\setminus\R$.
  \Ensure a polynomial $R \in \R[t]$ and a tuple $(k_1, \dotsc, k_n)$ of 
    elements of $\K$ such that $(t-k_1) \cdots (t-k_n) = R(t) \cdot P(t)$.
  \Statex
  \State {\bfseries Factor} $Z(t)$ over $\C$, obtaining
    $Z = \prod_{i=1}^{h} (t - \alpha_i)^{r_i} (t - \overline{\alpha_i})^{s_i}$
    with $r_i\geq s_i\geq 0$.
  \State {\bfseries Initialize} $q = \mathtt{empty}$ (the empty tuple).
  \For {$i = 1, \dots, h$}
    \State {\bfseries Set} $u_i=\max_j\bigl((t-\alpha_i)^j\mid W\bigr)$ and
      $v_i=\max_j\bigl((t-\overline{\alpha_i})^j\mid W\bigr)$.
    \State {\bfseries Set} $m_i = \min \{ s_i, u_i + v_i \}$
    \State {\bfseries Set} $\omega =
\bigl(\underbrace{\overline{\alpha_i},\>\dotsc,\>\overline{\alpha_i}}_{s_i-\min\{
s_i,v_i\}\!\!},\>
      \underbrace{\alpha_i,\>\dotsc,\>\alpha_i}_{r_i+s_i-m_i},\>
\underbrace{\overline{\alpha_i},\>\dotsc,\>\overline{\alpha_i}}_{s_i-\min\{s_i,
u_i\}\!\!}\bigr).$
    \State {\bfseries Set} $q = \operatorname{\texttt{concatenate}}(q, \omega)$.
  \EndFor
  \State {\bfseries Set} $R = \prod_{i = 1}^{h} \bigl((t - \alpha_i)(t - 
\overline{\alpha_i})\bigr)^{s_i - m_i}$.
  \State {\bfseries Set} $n = \operatorname{\texttt{length}}(q)$.
  \State {\bfseries Set} $Q_j = \prod_{l = 1}^{j-1} (t - q_{l}) \prod_{l =
    j+1}^{n} (t - \overline{q_l})$ for all $j \in \{ 1, \dotsc, n \}$.
  \State {\bfseries Compute} $\{ w_j \}_{j = 1}^{n}$ such that
    $R \tth W = \sum_{j=1}^{n} w_j Q_j$ using linear algebra.
  \State {\bfseries Set} $k_j = q_j - \eta \tth w_j$ for all 
$j\in\{1,\dotsc,n\}$.
  \State \Return $\bigl( R, (k_1, \ldots, k_n) \bigr)$.
\end{algorithmic}
\end{algorithm}

The proof of Theorem~\ref{theorem:bounded_factor} immediately gives
rise to a factorization algorithm for motion polynomials, called 
\texttt{FactorMotionPolynomial}. It produces one single factorization 
for a given polynomial~$P$, although there could be many for the 
following reasons: first, there may be several admissible permutations of the
roots of~$RZ$, and second, the choice of the coefficients $\{ w_j \}$ in 
Step~$12$ may not be unique, since the polynomials~$\{ Q_j \}$ need not be 
$\C$-linearly independent.

\begin{example}
\label{example:elliptic_motion_reviewed}
We can review Example~\ref{example:elliptic_motion} in the light of 
Theorem~\ref{theorem:bounded_factor} and the algorithm 
\texttt{FactorMotionPolynomial}. We consider a particular instance of the 
collection of motion polynomials
\[
  P(t) \; = \; (t^2 + 1) + \eta \tth (at - b\imath),
\]
namely we choose $a = \imath$ and $b = 2\imath$, obtaining the motion we 
presented in Section~\ref{overview}. If we run 
\texttt{FactorMotionPolynomial} with $P$ as input, we obtain $R = t^2 + 1$, and 
if we fix the permutation $(\imath,-\imath,-\imath,\imath)$ of the roots 
of~$\ppt(RP)$, as in~\eqref{equation:ansatz_elliptic}, we obtain a 
two-dimensional family of factorizations into linear factors, i.e., the 
solutions of the linear system~\eqref{equation:solution_elliptic}, out of which 
we choose the factorization used in Section~\ref{overview}.
\end{example}

\section{The flip procedure}
\label{flip}

The procedure described in this section is crucial for the construction
of linkages with mobility one. It is inspired by similar techniques --- involving 
the interchange of factors in a factorization of a quadratic motion 
polynomial --- used in~\cite{HegedusSchichoSchroecker2012} 
and~\cite{HegedusSchichoSchroecker2013a}. 

\begin{lemma}
\label{lemma:flip}
Let $k_1, k_2 \in \K$ be such that $\ppt(k_1) \neq \overline{\ppt(k_2)}$. 
Then there exists a unique pair $(k_3, k_4) \in \K^2$ such that:
\begin{enumerate}
  \item\label{lemma:flip:case1} $\ppt(k_3) = \ppt(k_2)$ and $\ppt(k_4) = \ppt(k_1)$;
  \item\label{lemma:flip:case2} $(t - k_1)(t - k_2) = (t - k_3)(t - k_4)$ as polynomials in $\K[t]$.
\end{enumerate}
\end{lemma}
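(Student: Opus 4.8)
The plan is to reduce the statement to a direct computation by comparing the two sides coefficient by coefficient. Write $k_i = z_i + \eta\tth w_i$ with $z_i, w_i \in \C$. Expanding the left-hand side using the multiplication rule in $\K$,
\[
  (t - k_1)(t - k_2) \; = \; \bigl(t - z_1 - \eta\tth w_1\bigr)\bigl(t - z_2 - \eta\tth w_2\bigr)
  \; = \; \bigl(t^2 - (z_1 + z_2)t + z_1 z_2\bigr) + \eta\tth\bigl(-(w_1 + w_2)t + \overline{z_1}\tth w_2 + z_2\tth w_1\bigr),
\]
where I have used that $z\tth\eta = \eta\tth\overline{z}$ and $\eta^2 = 0$. (Here I rely on the fact that $t$ is a central variable, so in a product of linear factors the primal parts simply multiply as complex polynomials and the secondary part is the expression from Lemma~\ref{lemma:characterization_factorization} for $n = 2$.) Requiring $\ppt(k_3) = z_2$ and $\ppt(k_4) = z_1$ forces the primal part of $(t - k_3)(t - k_4)$ to be $t^2 - (z_1 + z_2)t + z_2 z_1$, which already matches the primal part of the left-hand side; so condition~\ref{lemma:flip:case1} is consistent with condition~\ref{lemma:flip:case2} at the level of primal parts, and the whole problem collapses to the $\eta$-part.

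Thus I must find $w_3, w_4 \in \C$, uniquely, such that
\[
  -(w_3 + w_4)t + \overline{z_2}\tth w_4 + z_1\tth w_3 \; = \; -(w_1 + w_2)t + \overline{z_1}\tth w_2 + z_2\tth w_1 .
\]
Comparing the coefficient of $t$ gives $w_3 + w_4 = w_1 + w_2 =: s$, and comparing the constant terms gives $z_1\tth w_3 + \overline{z_2}\tth w_4 = \overline{z_1}\tth w_2 + z_2\tth w_1$. Substituting $w_4 = s - w_3$ turns this into a single linear equation in $w_3$ whose coefficient is $z_1 - \overline{z_2}$. By hypothesis $\ppt(k_1) \neq \overline{\ppt(k_2)}$, i.e.\ $z_1 \neq \overline{z_2}$, so this coefficient is nonzero and the equation has a unique solution $w_3$, whence $w_4 = s - w_3$ is determined as well. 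This yields the unique pair $(k_3, k_4)$.

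The step I expect to require the most care is simply bookkeeping the conjugations correctly when expanding the products: the secondary part of $(t-k_1)(t-k_2)$ involves $\overline{z_1}$ (because $w_1$ picks up a conjugation when it moves past the $\eta$ coming from the second factor's primal part), whereas the secondary part of $(t-k_3)(t-k_4)$ involves $\overline{z_2}$ — and it is precisely the mismatch between $z_1$ and $\overline{z_2}$ that produces the nonvanishing coefficient guaranteeing existence and uniqueness. Once the expansions are written down correctly, there is no real obstacle: the claim is an immediate consequence of the invertibility of multiplication by a nonzero complex scalar. I would also remark that the hypothesis $z_1 \neq \overline{z_2}$ is exactly the condition under which $z_1 - \overline{z_2} \neq 0$, so it is both necessary and sufficient for the conclusion as stated.
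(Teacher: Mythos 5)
Your proof is correct and follows essentially the same route as the paper: fix the primal parts via condition (i), expand both products over $\K[t]$ (taking care of the conjugations coming from $z\tth\eta = \eta\tth\overline{z}$), and solve the resulting $2\times 2$ linear system in $w_3, w_4$, whose determinant is nonzero precisely because $z_1 \neq \overline{z_2}$. The only difference is cosmetic — you eliminate $w_4$ by substitution while the paper invokes the determinant directly.
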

\begin{proof}
Let us suppose that $k_i = z_i + \eta \tth w_i$ for $i \in \{1,2\}$. We make 
the ansatz $k_3 = z_2 + \eta \tth w_3$ and $k_4 = z_1 + \eta \tth w_4$,
where $w_3$ and $w_4$ are elements of $\C$ to be determined. Then
condition~\ref{lemma:flip:case2} is equivalent to the linear system 
\[
  \begin{pmatrix} 1 & 1 \\ z_1 & \overline{z}_2 \end{pmatrix}
  \begin{pmatrix}	w_3 \\ w_4 \end{pmatrix} =
  \begin{pmatrix}	w_1 + w_2 \\ \overline{z}_1 w_2 + z_2 w_1 \end{pmatrix},
\]
which has a unique solution, since by hypothesis the determinant of the
matrix is different from zero.
\end{proof}

\begin{definition}
\label{definition:flip}
	Let $k_1, k_2 \in \K$ be such that $\ppt(k_1) \neq \overline{\ppt(k_2)}$. 
Then we define $\Flip(k_1, k_2) := (k_3, k_4)$, where $(k_3, k_4) \in 
\K^2$ is the pair from Lemma~\ref{lemma:flip}.
\end{definition}

\begin{definition}
Let $k = z + \eta \tth w \in \K$ with $z\in\C\setminus\R$. Then $k$
represents a rotation around a point that we denote by~$\midpt{k}$. Using 
Equation~\eqref{equation:action_K_C} one obtains that
\[ 
  \midpt{k} \; = \; \frac{w}{\overline{z} - z}.
\]
\end{definition}

\begin{definition}
\label{definition:flip_linkage}
Let $k_1, k_2 \in \K$ be such that $\ppt(k_1) \neq \overline{\ppt(k_2)}$ and
$\ppt(k_1),\ppt(k_2)\in\C\setminus\R$.  We define the \emph{flip linkage} 
of $k_1,k_2$ via the algorithm \texttt{FlipLinkage} below.
The graph of this flip linkage is shown in Figure~\ref{figure:mobility}.
	\begin{algorithm}
	\caption{$\mathtt{FlipLinkage}$}
		\begin{algorithmic}[1]
		\Require $k_1, k_2 \in \K$ such that $\ppt(k_1) \neq \overline{\ppt(k_2)}$
                  and $\ppt(k_1),\ppt(k_2)\in\C\setminus\R$.
		\Ensure $(G,\rho)$ a flip linkage.
		\Statex
		\State {\bfseries Define} $V = \{ 1, 2, 3, 4 \}$ and $E = \{ (1,2), (2,3), 
(3, 4), (1, 4) \}$ and set $G = (V, E)$.
		\State {\bfseries Define} $(k_3, k_4) = \Flip(k_1, k_2)$.
		\State {\bfseries Define} $u_i = \midpt{k_i}$ for all $i \in \{ 1, \dotsc, 
4 \}$.
		\State {\bfseries Define}
		\[
			\begin{array}{ccccccc}
				\rho(1,3) & = & \bigl(\Re(u_1),\Im(u_1)\bigr) & & \rho(2,1) & = &
\bigl(\Re(u_2),\Im(u_2)\bigr) \\
				\rho(3,4) & = & \bigl(\Re(u_3),\Im(u_3)\bigr) & & \rho(4,2) & = &
\bigl(\Re(u_4),\Im(u_4)\bigr)
			\end{array}
		\]
		\State \Return $(G,\rho)$.
		\end{algorithmic}
	\end{algorithm}
\end{definition}
\begin{lemma}
\label{lemma:same_length}
	Let $k_1, k_2 \in \K$ with $\overline{\ppt(k_1)} \neq \ppt(k_2)$. Let 
$(k_3, k_4) = \Flip(k_1, k_2)$. Then 
	\[ \bigl| \midpt{k_1} - \midpt{k_2} \bigr| \; = \; \bigl| \midpt{k_3} - 
\midpt{k_4} \bigr|.  \]
If, in addition, $z_1, z_2 \not\in \R$ and $\midpt{k_1} \neq \midpt{k_2}$, then 
the following holds:
	\[ \bigl| \midpt{k_1} - \midpt{k_2} \bigr| \; \neq \; \bigl| \midpt{k_1} - 
\midpt{k_3} \bigr|.  \]
\end{lemma}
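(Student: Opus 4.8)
The plan is to push everything through the explicit formulas coming from Lemma~\ref{lemma:flip}. Write $k_i = z_i + \eta\tth w_i$ for $i \in \{1,2\}$; since $\midpt{k_1}$ and $\midpt{k_2}$ are assumed to be defined we have $z_1, z_2 \in \C\setminus\R$, and the hypothesis $\overline{\ppt(k_1)} \neq \ppt(k_2)$ means $c := z_1 - \overline{z_2} \neq 0$. Set $\delta_i := \overline{z_i} - z_i$, a nonzero purely imaginary number with $\overline{\delta_i} = -\delta_i$, so that $\midpt{k_1} = w_1/\delta_1$ and $\midpt{k_2} = w_2/\delta_2$. By Lemma~\ref{lemma:flip} we have $k_3 = z_2 + \eta\tth w_3$ and $k_4 = z_1 + \eta\tth w_4$, hence $\midpt{k_3} = w_3/\delta_2$ and $\midpt{k_4} = w_4/\delta_1$; solving the $2\times 2$ linear system from the proof of that lemma (eliminate $w_4 = w_1 + w_2 - w_3$ and simplify) one obtains the closed forms
\[
  w_3 = \frac{(\overline{z_1}-\overline{z_2})\tth w_2 - \delta_2\tth w_1}{c}, \qquad w_4 = \frac{(z_1-z_2)\tth w_1 - \delta_1\tth w_2}{c}.
\]

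For the first equality I would compute $\midpt{k_1} - \midpt{k_2} = (w_1\delta_2 - w_2\delta_1)/(\delta_1\delta_2)$ and $\midpt{k_3} - \midpt{k_4} = (w_3\delta_1 - w_4\delta_2)/(\delta_1\delta_2)$, then substitute the closed forms above. Using the telescoping identities $\overline{z_1}-\overline{z_2}+\delta_2 = \overline{z_1}-z_2$ and $\delta_1 + z_1 - z_2 = \overline{z_1}-z_2$, the numerator collapses to
\[
  w_3\delta_1 - w_4\delta_2 = \frac{(\overline{z_1}-z_2)\bigl(w_2\delta_1 - w_1\delta_2\bigr)}{c} = -\frac{\overline{c}}{c}\bigl(w_1\delta_2 - w_2\delta_1\bigr),
\]
since $\overline{z_1}-z_2 = \overline{c}$. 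As $\lvert\overline{c}/c\rvert = 1$, the two numerators have the same modulus, and dividing by $\lvert\delta_1\delta_2\rvert$ yields $\lvert\midpt{k_1}-\midpt{k_2}\rvert = \lvert\midpt{k_3}-\midpt{k_4}\rvert$.

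For the inequality, an entirely analogous simplification gives $w_1\delta_2 - w_3\delta_1 = (\overline{z_1}-\overline{z_2})(w_1\delta_2 - w_2\delta_1)/c$, hence $\midpt{k_1}-\midpt{k_3} = (\overline{z_1}-\overline{z_2})(w_1\delta_2 - w_2\delta_1)/(c\tth\delta_1\delta_2)$. Comparing with $\midpt{k_1}-\midpt{k_2} = (w_1\delta_2 - w_2\delta_1)/(\delta_1\delta_2)$ and using the assumption $\midpt{k_1}\neq\midpt{k_2}$ (so $w_1\delta_2 - w_2\delta_1 \neq 0$), the quotient of these two differences equals $d/c$ with $d := \overline{z_1}-\overline{z_2}$. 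It then suffices to check $\lvert d\rvert \neq \lvert c\rvert$, and a direct expansion gives $\lvert d\rvert^2 - \lvert c\rvert^2 = 2\Re\bigl(z_1(z_2-\overline{z_2})\bigr) = -4\tth\Im(z_1)\tth\Im(z_2)$, which is nonzero precisely because $z_1, z_2 \notin \R$. Therefore $\lvert d/c\rvert \neq 1$ and the two distances differ.

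The only point that requires any care is the algebraic bookkeeping: deriving the closed forms for $w_3, w_4$ and then spotting the telescoping cancellations that make each relevant numerator factor as a scalar (of modulus one in the first case, of modulus $\lvert d/c\rvert\neq 1$ in the second) times the fixed quantity $w_1\delta_2 - w_2\delta_1$; everything else is routine. Geometrically, the two assertions say that the flip linkage of Definition~\ref{definition:flip_linkage} is a four-bar mechanism whose opposite bars, joining $\midpt{k_1}$ to $\midpt{k_2}$ and $\midpt{k_3}$ to $\midpt{k_4}$, have equal length, but which is not a rhombus; the sign of $\Im(z_1)\Im(z_2)$ distinguishes the parallelogram case from the anti-parallelogram case.
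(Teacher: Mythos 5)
Your proof is correct and follows essentially the same route as the paper: solve the flip system for $w_3,w_4$ explicitly (your closed forms agree with the paper's) and then verify both claims by direct computation. The only difference is that you carry out in detail the "direct calculation" that the paper leaves implicit, isolating the unimodular factor $\overline{c}/c$ for the equality and the factor $(\overline{z_1}-\overline{z_2})/c$ of modulus $\neq 1$ for the inequality.
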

\begin{proof}
Let $k_1 = z_1 + \eta \, w_1$ and $k_2 = z_2 + \eta \, w_2$. Then the
condition $(t - k_1)(t - k_2) = (t - k_3)(t - k_4)$ implies that
$k_3 = z_2 + \eta \tth w_3$ and $k_4 = z_1 + \eta \tth w_4$ with
\[
  w_3 = \frac{(z_2-\overline{z_2}) \, w_1 + (\overline{z_1}-\overline{z_2}) \, 
w_2}{z_1-\overline{z_2}}
  \quad\text{and}\quad
  w_4 = \frac{(z_1-z_2) \, w_1 + (z_1-\overline{z_1}) \, 
w_2}{z_1-\overline{z_2}}.
\]
A direct calculation shows that
\[
  \bigl| \midpt{k_3} - \midpt{k_4} \bigr| =
  \left| \frac{w_3}{\overline{z_2}-z_2} - \frac{w_4}{\overline{z_1}-z_1} 
\right| =
  \left| \frac{w_1}{\overline{z_1}-z_1} - \frac{w_2}{\overline{z_2}-z_2} 
\right| =
  \bigl| \midpt{k_1} - \midpt{k_2} \bigr|
\]
as claimed. The second claim is also obtained via a direct computation. 
\end{proof}
Lemma~\ref{lemma:same_length} implies that the linkages produced by the
algorithm \texttt{FlipLinkage} have the shape of a (possibly degenerated)
antiparallelogram, as in the right part of Figure~\ref{figure:weak_strong}. 
We introduce the notion of \emph{flip mobility} to exclude some degenerated 
cases, namely the ones where the antiparallelogram is in fact a square.

\begin{notation}
\label{notation:flip_mobility}
	Let  $k_1 = z_1 + \eta \tth w_1$ and $k_2 = z_2 + \eta \tth w_2$ be two 
elements of $\K$. We say that the condition $\FM{k_1}{k_2}$ (which stands for 
\emph{flip mobility}) holds if and only if
	\[ z_1, z_2 \not\in \R, \qquad z_1 \neq z_2, \qquad z_1 \neq \overline{z}_2,
\qquad \midpt{k_1} \neq \midpt{k_2}. \]
\end{notation}

\begin{lemma}
\label{lemma:flip_mobility}
	Let $k_1, k_2 \in \K$ such that $\FM{k_1}{k_2}$ holds. Then the linkage 
$L$ obtained by $\mathtt{FlipLinkage}(k_1, k_2)$ has mobility one and the 
configuration curve has two components~$C_1$ and~$C_2$. Moreover, the natural 
maps $C_{1} \longrightarrow \vrpos{i}{j}$ and $C_{2} \longrightarrow 
\vrpos{i}{j}$ 
are isomorphisms for every two neighboring links~$i$ and~$j$. 
\end{lemma}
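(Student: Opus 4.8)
The plan is to determine $\conf{L}$ explicitly, in the spirit of Example~\ref{example:mobility}, using that $L$ is a closed $4R$-linkage whose four joint centres are the points $u_i=\midpt{k_i}$ for $i=1,\dots,4$, and then to exhibit its components and the projections to the $\vrpos{i}{j}$ by hand. First I would pin down the geometry of the quadrilateral $u_2,u_1,u_3,u_4$ (this is the cyclic order in which the joints are arranged around $L$). Lemma~\ref{lemma:same_length} already gives $|u_1-u_2|=|u_3-u_4|=:a_0$; writing $k_1=z_1+\eta w_1$, $k_2=z_2+\eta w_2$ and $k_3=z_2+\eta w_3$, $k_4=z_1+\eta w_4$ (recall that $\Flip$ leaves the primal parts unchanged), a direct computation of the same flavour as in that proof shows that $\midpt{k_1}-\midpt{k_3}$ and $\midpt{k_2}-\midpt{k_4}$ differ only by the unit-modulus factor $-\overline{(z_1-z_2)}/(z_1-z_2)$, whence also $|u_1-u_3|=|u_2-u_4|=:b_0$. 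So the four bars of $L$ have lengths $a_0,b_0,a_0,b_0$ in cyclic order, i.e.\ $L$ is a parallelogram-type four-bar, and the hypothesis $\FM{k_1}{k_2}$ is precisely what makes it non-degenerate: $\midpt{k_1}\neq\midpt{k_2}$ forces $a_0\neq0$; the numerator above vanishes only when $z_1=z_2$ or $\midpt{k_1}=\midpt{k_2}$, so also $b_0\neq0$; and $a_0\neq b_0$ is exactly the second assertion of Lemma~\ref{lemma:same_length}.

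Next I would read off $\conf{L}$ from the classical two-circle picture. Normalising link~$1$ to its initial position, a configuration amounts to a choice of rotation angle at the joint with centre $u_1$ (this places link~$3$), after which link~$2$ rotates about the fixed point $u_2$ and link~$4$ about the now-fixed point $\sigma_{1,3}(u_3)$, subject to the condition that the free joint of link~$2$, which lies on the circle of radius $b_0$ about $u_2$, coincide with the free joint of link~$4$, which lies on the circle of radius $a_0$ about $\sigma_{1,3}(u_3)$. Since $a_0\neq b_0$, these two circles are never equal, so every choice of angle at $u_1$ extends to at most two configurations; and the same is true with any of the four joints playing the role of $u_1$. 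Krull's height theorem gives $\dim C\geq1$ for every component $C$ of $\conf{L}$, while the two-circle statement gives $\dim\leq1$ and rules out a component contained in a single fibre of the projection. Hence $\conf{L}$ has pure dimension~$1$, so the mobility of $L$ is~$1$, and each projection $\conf{L}\to\vrpos{i}{j}$ is finite of degree at most~$2$.

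It remains to produce exactly two sections of, say, the projection $\pi\colon\conf{L}\to\vrpos{1}{3}$. One of them comes straight from the factorisations: as in the motion-factorisation picture of~\cite{HegedusSchichoSchroecker2012,HegedusSchichoSchroecker2013a}, the identity $(t-k_1)(t-k_2)=(t-k_3)(t-k_4)$ gives, for each $t\in\p^1_{\R}$, a configuration in which $\sigma_{1,3}$ is the rotation about $u_1$ represented by $[\,t-k_1\,]$ and the other three relative positions are the rotations about $u_3,u_4,u_2$ dictated by the two factorisations; this defines a morphism $\psi_1\colon\p^1_{\R}\to\conf{L}$ whose composite with $\pi$ is the isomorphism $t\mapsto[\,t-k_1\,]$ of $\p^1_{\R}$ with the line of rotations about $u_1$. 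Thus $\psi_1$ is a closed immersion, $C_1:=\psi_1(\p^1_{\R})\cong\p^1_{\R}$, and along $C_1$ the relative position of end effector with respect to base traces the quadratic motion $[(t-k_1)(t-k_2)]$; this is the antiparallelogram mode. The parallelogram configurations — those in which opposite bars stay parallel and the free link merely translates — give a second section $\psi_2$ of $\pi$; because $a_0\neq b_0$, a parallelogram configuration and an antiparallelogram configuration can agree only at the finitely many folded positions, so $C_2:=\psi_2(\p^1_{\R})\neq C_1$, and the degree bound forces $\conf{L}=C_1\cup C_2$ with nothing else. Finally, for $C_\ell$ with $\ell\in\{1,2\}$ and any edge $\{i,j\}$, the restriction $C_\ell\to\vrpos{i}{j}$ of the projection is a morphism $\p^1_{\R}\to\p^1_{\R}$, and it is injective — by the two-circle argument, fixing the angle at the joint $\{i,j\}$ and the choice of mode determines the entire configuration — hence an isomorphism, since an injective morphism of smooth rational curves is one.

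I expect the genuinely delicate points to be, first, the bookkeeping that turns the algebraic identity $(t-k_1)(t-k_2)=(t-k_3)(t-k_4)$ into an honest element of $\conf{L}$ for the graph $G$ equipped with $\rho$ — keeping track of composition order, of the right action, and of the identification of each $\vrpos{i}{j}$ with a line through $(1:0:0:0)$ — and second, checking that the degree-$2$ projection really splits as the union of the two named sections rather than being a connected double cover, i.e.\ that $C_1\neq C_2$ and that no further component appears. Both are ultimately controlled by the non-degeneracy $a_0\neq b_0$ provided by $\FM{k_1}{k_2}$.
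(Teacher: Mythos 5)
Your proposal is correct in substance, but it takes a genuinely different route from the paper's own argument. The paper identifies $\conf{L}$ with the intersection of the two smooth quadric surfaces in $\p(\K)\cong\p^3_{\R}$ obtained as images of $f_{12}$ and $f_{34}$; because these quadrics share the conic coming from the flip identity $(t-k_1)(t-k_2)=(t-k_3)(t-k_4)$, their intersection splits into that conic and a residual one, giving dimension one and the two components in one stroke (the paper also simply points to the classical four-bar table of Bottema--Roth). You instead carry out the classical planar four-bar analysis explicitly: you check that the flip linkage is a non-degenerate quadrilateral with opposite sides equal in pairs --- your extra computation that $u_1-u_3$ and $u_2-u_4$ differ by the unit factor $-\overline{(z_1-z_2)}/(z_1-z_2)$ is correct, and $\FM{k_1}{k_2}$ together with Lemma~\ref{lemma:same_length} is indeed exactly the non-degeneracy $a_0,b_0\neq 0$, $a_0\neq b_0$ --- then bound each fibre of $\conf{L}\to\vrpos{1}{3}$ by two points via the two-circle argument, and exhibit the two branches as explicit sections (the factorization/antiparallelogram mode and the parallelogram mode). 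The paper's approach buys brevity and a clean algebro-geometric statement; yours buys an explicit description of both branches and makes transparent why the FM condition is precisely the needed non-degeneracy, essentially fleshing out the classical reference the paper only cites.

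Two points of rigor deserve attention, though both are fixable and neither is worse than the paper's own level of detail (``Using this it is possible to prove the claim''). First, Krull's height theorem bounds dimensions of complex components; over $\R$ a set cut out by three equations may well have isolated real points (think of $x^2+y^2=0$), so the lower bound $\dim\geq 1$ should come from your explicit curves $C_1,C_2\subseteq\conf{L}$ rather than from Krull --- which costs you nothing, since you construct them anyway. Second, ``an injective morphism of smooth rational curves is an isomorphism'' fails over $\R$ (e.g.\ $(x:y)\mapsto(x^3:y^3)$ on $\p^1_{\R}$ is injective on real points), so the last step as stated is not justified; but you do not need it: on $C_1$ every joint parameter equals $t$, so each projection $C_1\to\vrpos{i}{j}$ is the projectively linear map $t\mapsto[\,t-k_i\,]$, and on the parallelogram branch each joint angle is the base angle shifted by a constant, again projectively linear in the natural parameter; the composition argument you already used for $\pi\circ\psi_1$ then yields all the claimed isomorphisms.
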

\begin{proof}
Since the linkages obtained via $\mathtt{FlipLinkage}$ are constituted by 
four bars, this is a well-known result; see for example~\cite[Table on 
p.~426]{Bottema1979}. One can also prove the statement using the notions we 
introduced in Section~\ref{linkages} as follows. We set $L = 
\mathtt{FlipLinkage}(k_1, k_2)$, and denote $(k_3, k_4) = \Flip(k_1, k_2)$. 
Recalling Definition~\ref{definition:mobility}, our goal is to prove that 
$\conf{L}$ is a one-dimensional variety. One notices that
\[ 
  \conf{L} \; \cong \; \bigl\{ (\lambda_1:\mu_1), \dotsc, 
  (\lambda_4:\mu_4) \; : \; f_{12}(\lambda_1,\mu_1,\lambda_2,\mu_2) = 
  f_{34}(\lambda_3,\mu_3,\lambda_4,\mu_4) \bigr\},
\]
where
\[
  \begin{array}{rccc}
    f_{12}\colon & \p^1_{\R} \times \p^1_{\R} & \longrightarrow & \p(\K) \\
    & (\lambda_1: \mu_1), (\lambda_1: \mu_1) & \mapsto & (\lambda_1 - \mu_1 
    \tth k_1)(\lambda_2 - \mu_2 k_2)
  \end{array}
\]
and analogously for $f_{34}$. The images of~$f_{12}$ and~$f_{34}$ are two 
smooth quadrics whose intersection, equal to~$\conf{L}$, is a curve with two 
components. Using this it is possible to prove the claim.
\end{proof}

\begin{remark}
	Results in Example~\ref{example:mobility} are particular instances of 
Lemma~\ref{lemma:flip_mobility}. One can in fact check that in that case $L = 
\mathtt{FlipLinkage}(k_1, k_2)$, where $k_1 = \imath - \eta \cdot 3$ and $k_2 = 
-2\imath - \eta \tth (2 + \imath)$.
\end{remark}

\section{Construction of linkages}
\label{construction}

In this section, we show how to construct a linkage with mobility one that
traces an algebraic curve described by a rational parametrization. Based on
the results of Section~\ref{factorization}, we first describe an algorithm
which takes a bounded motion polynomial~$P$ and constructs a linkage
weakly realizing~$P$ (see Definition~\ref{definition:realizing}).
\begin{algorithm}
\caption{$\mathtt{ConstructWeakLinkage}$}
	\begin{algorithmic}[1]
		\Require $P \in \K[t]$ a bounded motion polynomial.
		\Ensure $(G, \rho)$ a linkage weakly realizing~$P$.
		\Statex
                \State {\bfseries Compute} $S\in\R[t]$ of maximal degree, monic,
such that $S$ divides $P$.
		\State {\bfseries Compute} $\bigl(R, (k_1, \dotsc, k_n)\bigr) = 
\mathtt{FactorMotionPolynomial}(P / S)$.
		\State {\bfseries Set} $u_i = \midpt{k_i} = \frac{w_i}{\overline{z}_i - 
z_i}$, where $k_i = z_i + \eta \tth w_i$, for every $i \in \{1, \dotsc, n\}$. 
		\State {\bfseries Set} $V = \{ 1, \dotsc, n+1 \}$ and $E = \{ (1,2), (2,3), 
\dotsc, (n, n+1) \}$. 
		\State {\bfseries Set} $G = (V,E)$.
		\State {\bfseries Set} $\rho(i,i+1) = \bigl(\Re{(u_i)}, \Im{(u_i)}\bigr) \in 
\R^2$ for every $i \in \{1, \dotsc, n\}$.
		\State \Return $(G, \rho)$.
	\end{algorithmic}
\end{algorithm}
\begin{proposition}
\label{proposition:construct_weak}
	Algorithm~\texttt{ConstructWeakLinkage} is correct.
\end{proposition}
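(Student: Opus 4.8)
The plan is to unwind the definitions and check that the open chain produced by \texttt{ConstructWeakLinkage} has, as the relative position of its last link with respect to its first, exactly the image of the motion $\phi$ given by $P$. First I would observe that, by Step~1, we have $P = S\cdot(P/S)$ with $S\in\R[t]$ monic of maximal degree; by Remark~\ref{remark:multiply_real} the motion polynomial $P/S$ describes the same rational motion $\phi$ as $P$, so it suffices to realize the motion given by $P/S$. By construction $P/S$ has no nonconstant real factor, so it is a legitimate input to \texttt{FactorMotionPolynomial}, and Theorem~\ref{theorem:bounded_factor} guarantees that the returned data satisfies $(t-k_1)\cdots(t-k_n) = R(t)\cdot(P/S)(t)$ with $R\in\R[t]$. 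Applying Remark~\ref{remark:multiply_real} once more, $R\cdot(P/S)$ still describes $\phi$, so the factored polynomial $(t-k_1)\cdots(t-k_n)$ encodes the very motion we want to realize.

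Next I would analyze the linkage $L=(G,\rho)$ returned by the algorithm: $G$ is the path $1-2-\cdots-(n+1)$ and $\rho(i,i+1)=\midpt{k_i}$. The key structural fact is that in an \emph{open chain} there are no nontrivial directed cycles other than the back-and-forth ones $(i,i+1),(i+1,i)$, so the only cycle conditions in Definition~\ref{definition:configuration} are $\sigma_{i,i+1}\circ\sigma_{i+1,i}=\mathrm{id}$. Hence $\conf{L}$ is (isomorphic to) the full product $\prod_{i=1}^n \vrpos{i}{i+1}\cong(\p^1_\R)^n$, with no constraints. By Lemma~\ref{lemma:one_dim_motions} and the Remark following Definition~\ref{definition:bounded_poly}, each subgroup $\vrpos{i}{i+1}$ of rotations about $\midpt{k_i}$ corresponds, under the embedding of $\SE2$ in $\p^3_\R\cong\p(\K)$, to the projective line through $(1:0:0:0)$ and the point represented by $k_i$; concretely its elements are represented by the pencil $\lambda_i - \mu_i k_i$ for $(\lambda_i:\mu_i)\in\p^1_\R$, and in particular the element $t-k_i$ lies on this line for every $t\in\R$.

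Then I would compute $\rpos{1}{n+1}$ via the unique directed path $(1,2),(2,3),\dots,(n,n+1)$: by Definition~\ref{definition:relative_position}, for a configuration $\Sigma$ given by $(\lambda_i:\mu_i)_{i=1}^n$ we have $\rposconf{1}{n+1}{\Sigma}=\sigma_{1,2}\circ\cdots\circ\sigma_{n,n+1}$, which under the identification of composition with multiplication in $\K$ (mind the order reversal, since we have a right action) is represented by $(\lambda_1-\mu_1 k_1)\cdots(\lambda_n-\mu_n k_n)$. Taking $(\lambda_i:\mu_i)=(t:1)$ for a common parameter $t$ gives $(t-k_1)\cdots(t-k_n)=R(t)\cdot(P/S)(t)$, which represents $\phi(t)$. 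Therefore $\mathrm{image}(\phi)\subseteq\rpos{1}{n+1}$, which is exactly the definition of $L$ weakly realizing $P$; this establishes correctness.

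The main obstacle I anticipate is bookkeeping rather than conceptual depth: one must be careful that composition in $\SE2$ corresponds to multiplication in $\K$ in the \emph{correct order} (Remark following Definition~\ref{def:algebraK} and the discussion after Proposition~\ref{prop:curve_motion}), so that the product of linear factors reads $(t-k_1)\cdots(t-k_n)$ and not its reverse; and one must check that the $\midpt{k_i}$ are well-defined, i.e.\ that $\ppt(k_i)=z_i\notin\R$, which holds because $\ppt(k_i)$ is a root of $\ppt(R\cdot P/S)=R\cdot\ppt(P/S)$, a product of a real polynomial with the primal part of a bounded motion polynomial, and the latter has no real roots by Definition~\ref{definition:bounded_poly} while $R=\tilde R/\gcd(\tilde R,W\overline W)$ divides $\gcd(Z,\overline Z)$ and hence also has no real roots. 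One should also note explicitly that multiplying by the real scalars $S$ and $R$ does not alter the rational motion (Remark~\ref{remark:multiply_real}), which is what licenses passing from $P$ to the fully factored polynomial.
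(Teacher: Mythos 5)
Your proposal is correct and follows essentially the same route as the paper's proof: since the linkage is an open chain, its configuration space is the unconstrained product of the joint pencils, so $\rpos{1}{n+1}$ consists of all products of rotations about the points $\midpt{k_i}$, and specializing every joint parameter to the common value $t$ yields $(t-k_1)\cdots(t-k_n)=R\cdot(P/S)$, hence $\phi(t)\in\rpos{1}{n+1}$. Your additional checks (that multiplying by $S$ and $R$ leaves the motion unchanged via Remark~\ref{remark:multiply_real}, and that the $\midpt{k_i}$ are well defined because $R$ and $\ppt(P/S)$ have no real roots) only make explicit what the paper leaves implicit.
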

\begin{proof}
	The linkage~$L$ constructed by the algorithm \texttt{ConstructWeakLinkage} is
an open chain, so its configuration space is isomorphic to
$\left(\p_{\C}^{1}\right)\!{}^{n-1}$, where $n-1$ is the number of joints 
of~$L$. Let $\phi\colon \R \longrightarrow \p_{\C}^3$ be the motion 
corresponding to $R P$. We consider links $1$ and $n$, and we prove that 
$\mathrm{image}(\phi) \subseteq \rpos{1}{n}$, namely that $L$ weakly realizes
$RP$ according to Definition~\ref{definition:realizing}. In our situation we
have that
\[ 
  \rpos{1}{n} = \bigl\{ \sigma_{1,2} \circ \sigma_{2,3} \circ \cdots \circ
\sigma_{n-1,n} \; : \: \sigma_{i,i+1} \in \p_{\C}^{1} \bigr\}. 
\]
Now we fix an arbitrary $t \in \R$ and we take $\sigma_{i,i+1}$ to be the
isometry given by $t - k_i$ for all $i \in \{1, \ldots, n-1\}$. Then
$\phi(t) \in \rpos{1}{n}$, so the claim is proven.
\end{proof}

Suppose we are given a rational parametrization of a bounded curve. As
we saw in Proposition~\ref{prop:curve_motion}, starting from it we can
construct a bounded polynomial~$P$ such that the curve is the orbit of
the origin in~$\R^2$ under the motion described by~$P$. Then the algorithm
$\mathtt{ConstructWeakLinkage}$ returns a linkage~$L$ that weakly realizes~$P$. 
However, this is rather unsatisfactory, since the configuration space of~$L$
is, in general, a variety of high dimension. Our goal is to obtain a linkage 
with mobility one, providing a strong realization of~$P$. The main idea is to 
``rigidify''~$L$, which by construction is an open chain, by introducing 
additional links and joints, forming antiparallelograms. At the level of
graphs, this corresponds to extending the linear graph of~$L$ to a $2\times n$ 
ladder graph (see Figure~\ref{figure:strong_linkage}). To do so, we pick a 
linear motion polynomial $t-l$, which will constitute the first ``step'' of the 
ladder, and then apply the flip procedure iteratively to complete the 
ladder. In the following, we investigate how $l \in \K$ has to be chosen so
that the resulting linkage has mobility~one.

\begin{definition}
\label{definition:iterated_flip_mobility}
	Let $\mathbf{k} = (k_1, \dotsc, k_m)$ be a tuple in $\K^m$, 
and let $l \in \K$. Define
	\[ l_1 \; := \; l, \qquad (\tilde{k}_i,l_{i+1}) \; := \; \Flip(l_i, 
k_i) \qquad \text{for all }i\in \{1, \dotsc, m \}. \]
Then we say that the condition $\IFM{l}{\mathbf{k}}$ (which stands for 
\emph{iterated flip mobility}) holds if and only if $\FM{l_i}{k_i}$ holds for
all $i \in \{1, \dotsc, m \}$.
\end{definition}

In order to prove that the algorithm we propose works, we have to show 
that it is always possible to find an element~$l \in \K$ such that the iterated 
flip mobility condition holds. For this we use a property of flip linkages.

\begin{definition}
\label{definition:inv}
Given an element $k = z + \eta \tth w \in \K$ with $z \neq 0$, we define
$\inv(k)$ to be the element $\overline{z} - \eta \tth w \in \K$. Recalling
Remark~\ref{remark:algebra_inverses}, we have that~$k$
represents an isometry~$\sigma$, and $\inv(k)$ its inverse~$\sigma^{-1}$.
Moreover, if $z \not\in \R$, then $t - k$ represents an isometry for every
$t \in \R$, and $t - \inv(k)$ the inverse isometry.
\end{definition}

\begin{lemma}
\label{lemma:revert_flip}
	Let $k_1, k_2 \in \K$ be such that $\ppt(k_1) \neq \overline{\ppt(k_2)}$ and 
$\ppt(k_2) \not\in \R$. Let $(k_3, k_4) = \Flip(k_1, k_2)$. Then we have that
\[ \bigl( \inv(k_3), k_1 \bigr) = \Flip \bigl( k_4, \inv(k_2) \bigr). \]
\end{lemma}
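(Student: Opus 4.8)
The plan is to verify the claimed identity directly, using the explicit
description of $\Flip$ coming from Lemma~\ref{lemma:flip} and the formula for
$\inv$ from Definition~\ref{definition:inv}. Write $k_i = z_i + \eta\tth w_i$
for $i=1,2$, and let $(k_3,k_4)=\Flip(k_1,k_2)$, so by construction
$\ppt(k_3)=z_2$, $\ppt(k_4)=z_1$, and $(t-k_1)(t-k_2)=(t-k_3)(t-k_4)$. First I
would record that $\inv(k_3)=\overline{z_2}+\eta\tth w_3'$ and
$\inv(k_2)=\overline{z_2}+\eta\tth w_2'$ have the same primal part
$\overline{z_2}$, while $k_1$ and $k_4$ have the same primal part $z_1$; hence
the pair $(\inv(k_3),k_1)$ has exactly the primal parts required of
$\Flip(k_4,\inv(k_2))$ by condition~\ref{lemma:flip:case1} of
Lemma~\ref{lemma:flip}, once we check the hypothesis of that lemma applies. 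For
applicability we need $\ppt(k_4)\neq\overline{\ppt(\inv(k_2))}$, i.e.
$z_1\neq \overline{\overline{z_2}}=z_2$; but if $z_1=z_2$ then the original
$\Flip(k_1,k_2)$ would already force $k_3=k_1$, $k_4=k_2$, and the claim is
immediate, so we may assume $z_1\neq z_2$ and Lemma~\ref{lemma:flip} gives a
well-defined $\Flip(k_4,\inv(k_2))$ with the correct primal parts.

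By the uniqueness part of Lemma~\ref{lemma:flip}, it then suffices to check
the single polynomial identity
\[
  (t-k_4)\bigl(t-\inv(k_2)\bigr) \;=\; \bigl(t-\inv(k_3)\bigr)(t-k_1)
\]
in $\K[t]$. The key observation is that the map $k\mapsto\inv(k)$, extended to
$\K[t]$ by acting on coefficients, is an anti-automorphism of $\K$ that fixes
$t$; concretely, for $a,b\in\K$ one has
$\inv(t-a)=t-\inv(a)$ and, since $\inv$ reverses products,
$\inv\bigl((t-a)(t-b)\bigr)=\bigl(t-\inv(b)\bigr)\bigl(t-\inv(a)\bigr)$.
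Applying this to the defining relation $(t-k_1)(t-k_2)=(t-k_3)(t-k_4)$ gives
\[
  \bigl(t-\inv(k_2)\bigr)\bigl(t-\inv(k_1)\bigr)
  \;=\; \bigl(t-\inv(k_4)\bigr)\bigl(t-\inv(k_3)\bigr).
\]
This is not quite the desired identity, but it has the same shape with the roles
of unconjugated and conjugated primal parts swapped, so applying $\inv$ once
more and matching factors --- equivalently, rearranging the flip relation and
using uniqueness in Lemma~\ref{lemma:flip} applied to $(\inv(k_4),\inv(k_2))$
--- yields exactly $(\inv(k_3),k_1)=\Flip(k_4,\inv(k_2))$. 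Alternatively, and
perhaps more cleanly, one simply expands both sides of the $2\times2$ linear
system defining $\Flip(k_4,\inv(k_2))$ as in the proof of
Lemma~\ref{lemma:flip} and substitutes the explicit formulas for $w_3,w_4$ from
the proof of Lemma~\ref{lemma:same_length}; the resulting equalities for the
$\eta$-coefficients reduce to routine complex-arithmetic identities among
$z_1,z_2,\overline{z_1},\overline{z_2},w_1,w_2$.

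The only real subtlety is bookkeeping: one must be careful that $\inv$ is an
\emph{anti}-homomorphism (the right action convention in $\K$), so the order of
factors reverses under $\inv$, and that $\inv$ is involutive, $\inv(\inv(k))=k$.
I expect the main obstacle to be purely notational --- keeping track of which
element has primal part $z_1$, $z_2$, $\overline{z_1}$, or $\overline{z_2}$ at
each stage --- rather than mathematical; once the anti-automorphism property of
$\inv$ on $\K[t]$ is set up, the statement follows from the flip relation
$(t-k_1)(t-k_2)=(t-k_3)(t-k_4)$ together with the uniqueness in
Lemma~\ref{lemma:flip}, with no further computation needed.
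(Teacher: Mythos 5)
Your primary route, via the anti-automorphism $\inv$, has a genuine gap at its final step. It is true that $\inv$ extends coefficient-wise to an anti-involution of $\K[t]$ fixing $t$, and applying it to $(t-k_1)(t-k_2)=(t-k_3)(t-k_4)$ gives $\bigl(t-\inv(k_2)\bigr)\bigl(t-\inv(k_1)\bigr)=\bigl(t-\inv(k_4)\bigr)\bigl(t-\inv(k_3)\bigr)$, i.e.\ (after checking primal parts) the statement $\Flip\bigl(\inv(k_2),\inv(k_1)\bigr)=\bigl(\inv(k_4),\inv(k_3)\bigr)$. But this is a different identity from the one to be proved, which concerns the \emph{mixed} product $(t-k_4)\bigl(t-\inv(k_2)\bigr)$: applying $\inv$ ``once more'' only returns you to the original relation, and ``matching factors'' is not a legitimate move in $\K[t]$, which is noncommutative, has zero divisors, and has non-unique factorizations --- precisely the phenomenon the flip exploits; likewise, no product of the pair $\bigl(\inv(k_4),\inv(k_2)\bigr)$ occurs in any relation you have derived, so the appeal to uniqueness there does not connect. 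The route can be rescued, but only with an extra multiplicative step you do not supply: multiply the flip relation on the left by $t-\inv(k_3)$ and on the right by $t-\inv(k_2)$, use that $\bigl(t-\inv(k_3)\bigr)(t-k_3)=(t-k_2)\bigl(t-\inv(k_2)\bigr)=(t-z_2)(t-\overline{z_2})$ is a real, hence central, polynomial, and cancel it (multiplication by a nonzero real polynomial is injective on $\K[t]$); then the uniqueness in Lemma~\ref{lemma:flip} finishes. A smaller inaccuracy: if $z_1=z_2$ you do get $k_3=k_1$ and $k_4=k_2$, but then $\Flip\bigl(k_4,\inv(k_2)\bigr)=\Flip\bigl(k_2,\inv(k_2)\bigr)$ is not defined by Lemma~\ref{lemma:flip} at all (its hypothesis fails), so the claim is not ``immediate''; this case must simply be excluded --- an imprecision admittedly latent in the paper's own statement, which tacitly assumes the right-hand flip is defined, as it is in its application in Lemma~\ref{lemma:IFM_holds} where $\ppt(l)$ is chosen distinct from all $\ppt(k_i)$ and $\overline{\ppt(k_i)}$.

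Your fallback route --- expanding both sides and comparing $\eta$-coefficients --- is essentially the paper's own proof, which consists of the kinematic remark that $t-\inv(k_2)$ represents the inverse isometry of $t-k_2$, two diagrams, and the words ``a direct computation shows the desired result''. That computation does work and is simpler than you anticipate: with $k_3=z_2+\eta\tth w_3$ and $k_4=z_1+\eta\tth w_4$ one finds $(t-k_4)\bigl(t-\inv(k_2)\bigr)=(t-z_1)(t-\overline{z_2})+\eta\bigl((t-\overline{z_1})w_2-(t-\overline{z_2})w_4\bigr)$ and $\bigl(t-\inv(k_3)\bigr)(t-k_1)=(t-z_1)(t-\overline{z_2})+\eta\bigl((t-z_1)w_3-(t-z_2)w_1\bigr)$, and equating the coefficients of $t^1$ and $t^0$ gives exactly $w_3+w_4=w_1+w_2$ and $z_1w_3+\overline{z_2}w_4=\overline{z_1}w_2+z_2w_1$, which are the two equations defining $\Flip(k_1,k_2)$ in the proof of Lemma~\ref{lemma:flip}; no substitution of the explicit formulas from Lemma~\ref{lemma:same_length} is needed. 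Had you carried out this check rather than only asserting it, the proposal would be complete; as written, the main argument does not reach the conclusion and the correct alternative is left as a sketch.
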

\begin{proof}
	Notice that the fact that $\ppt(k_2) \not\in \R$ implies that the element $(t 
- k_2)$ represents an isometry for all $t \in \R$. From 
Definition~\ref{definition:inv} one has that the inverse isometry is
represented by $\bigl(t - \inv(k_2)\bigr)$. The two situations are depicted 
below: 
\[
	\begin{array}{ccc}
		\xymatrix@C=1.2cm@R=1.2cm{\bullet \ar[r]^{t - k_2} & \bullet \\ \bullet 
\ar[u]^{t - k_1} \ar[r]_{t - k_3} & \bullet \ar[u]_{t - k_4}} & & 
\xymatrix@C=1.2cm@R=1.2cm{\bullet & \bullet \ar[l]_{t - \inv(k_2)} \\ \bullet 
\ar[u]^{t - k_1} & \bullet \ar[l]^{t - \inv(k_3)} \ar[u]_{t - k_4}} \\
		(k_3, k_4) = \Flip(k_1, k_2) & \quad & \bigl( \inv(k_3), k_1 \bigr) = \Flip 
\bigl( k_4, \inv(k_2) \bigr)
	\end{array}
\]
A direct computation shows the desired result.
\end{proof}

\begin{lemma}
\label{lemma:IFM_holds}
 Let $\mathbf{k} = (k_1, \dotsc, k_m)$ be a tuple in $\K^m$ such that 
$\ppt(k_i) \not\in \R$ for all $i \in \{1, \dotsc, m\}$. Then there exists $l 
\in \K$ with $\ppt(l) \not\in \R$ such that $\IFM{l}{\mathbf{k}}$ holds.
\end{lemma}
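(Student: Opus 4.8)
The plan is to show that the set of ``bad'' choices of $l$—those for which $\IFM{l}{\mathbf{k}}$ fails—is small (a proper Zariski-closed subset, or at least a finite union of lower-dimensional pieces of $\p(\K)$), so that a generic $l$ with $\ppt(l)\not\in\R$ works. First I would unfold the recursion in Definition~\ref{definition:iterated_flip_mobility}: starting from $l_1=l$, each $l_{i+1}$ is the second component of $\Flip(l_i,k_i)$, and by Lemma~\ref{lemma:flip} the primal part of $l_{i+1}$ equals $\ppt(k_i)$, which is fixed and not real. So the condition $z_1,z_2\not\in\R$ inside each $\FM{l_i}{k_i}$ is automatic for $i\ge 2$ (and for $i=1$ it is the hypothesis $\ppt(l)\not\in\R$), and the conditions $\ppt(l_i)\neq\ppt(k_i)$ and $\ppt(l_i)\neq\overline{\ppt(k_i)}$ are likewise either automatic or reduce to finitely many exclusions on $\ppt(l)$. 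The only genuinely substantive condition is $\midpt{l_i}\neq\midpt{k_i}$ for each $i$, i.e.\ that the moving center never collides with the fixed center at that stage.

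The key step is to track the secondary part, equivalently the center $\midpt{l_i}$, as an affine-linear function of $\midpt{l}$. From Lemma~\ref{lemma:same_length}'s explicit formula for the flip (the $w_3$ there), $w_{i+1}$ depends affine-linearly on $w_i$ with coefficients determined by $z_i=\ppt(l_i)$ (a fixed quantity once $\ppt(l)$ is chosen) and by $k_i$; dividing by $\overline{\ppt(k_i)}-\ppt(k_i)$, we get that $\midpt{l_{i+1}} = a_i\,\midpt{l_i} + b_i$ for constants $a_i,b_i$ with $a_i\neq 0$. Composing, $\midpt{l_i}$ is an invertible affine function of $\midpt{l}$ for every $i$, hence the condition $\midpt{l_i}=\midpt{k_i}$ cuts out at most one value of $\midpt{l}$ (for each fixed admissible $\ppt(l)$). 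Similarly, $\midpt{l_1}=l$'s center ranging over all of $\C$, for all but finitely many choices of $\ppt(l)\in\C\setminus\R$ and all but $m$ choices of $\midpt{l}$ we avoid every collision. Here Lemma~\ref{lemma:revert_flip} may be invoked to simplify bookkeeping—it lets one ``run the flips backward,'' expressing the obstruction directly in terms of $l$ rather than composing $m$ maps—but it is not strictly necessary.

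The main obstacle I anticipate is handling the primal-part constraints cleanly: one must verify that choosing $\ppt(l)$ outside the finite set $\{\ppt(k_1),\overline{\ppt(k_1)}\}$ (and, if the recursion could in principle make some later $\ppt(l_i)$ real or equal to a forbidden value—which Lemma~\ref{lemma:flip} rules out, since $\ppt(l_{i+1})=\ppt(k_i)\not\in\R$) suffices for the non-degeneracy needed for each $\Flip$ to be defined and for $a_i\neq 0$. Once that is pinned down, the conclusion is immediate: pick $\ppt(l)\in\C\setminus\R$ avoiding the finitely many bad primal values, then pick $\midpt{l}\in\C$ avoiding the at most $m$ values where some $\midpt{l_i}=\midpt{k_i}$, and set $w=\midpt{l}\,(\overline{\ppt(l)}-\ppt(l))$; the resulting $l=\ppt(l)+\eta w$ satisfies $\IFM{l}{\mathbf{k}}$, completing the proof. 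I would end with a sentence noting that in fact a \emph{generic} $l$ works, which is all that the downstream algorithm in Section~\ref{construction} needs.
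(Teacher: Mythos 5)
Your overall strategy is the right one and is essentially the paper's: fix $\ppt(l)$ outside a finite set, observe that each collision condition $\midpt{l_i}=\midpt{k_i}$ then excludes at most one value of $\spt(l)$, and conclude that a generic $l$ satisfies $\IFM{l}{\mathbf{k}}$. However, there is a concrete error in your bookkeeping of the flip that, as written, breaks the justification of the non-degeneracy conditions. In $(\tilde{k}_i,l_{i+1})=\Flip(l_i,k_i)$, Lemma~\ref{lemma:flip} says the \emph{first} output inherits the primal part of the \emph{second} input and the second output inherits that of the first input; hence $\ppt(\tilde{k}_i)=\ppt(k_i)$ and $\ppt(l_{i+1})=\ppt(l_i)$, so in fact $\ppt(l_i)=\ppt(l)$ for all $i$ (this is visible in Example~\ref{example:elliptic_motion_flips}, where all the $l_j$ have primal part $-\tfrac95\imath$). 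You instead claim $\ppt(l_{i+1})=\ppt(k_i)$. Under your reading, the conditions $\ppt(l_i)\neq\ppt(k_i)$ and $\ppt(l_i)\neq\overline{\ppt(k_i)}$ for $i\geq 2$ become constraints on $\mathbf{k}$ alone, namely $\ppt(k_{i-1})\neq\ppt(k_i)$ and $\ppt(k_{i-1})\neq\overline{\ppt(k_i)}$; these are neither automatic nor removable by a choice of $\ppt(l)$, and they genuinely fail in typical inputs (in the ellipse example one has $\ppt(k_1)=\ppt(k_2)=\imath$ and $\ppt(k_2)=\overline{\ppt(k_3)}$), so your claim that they are ``automatic or reduce to finitely many exclusions on $\ppt(l)$'' is false as stated, and with it the guarantee that each affine factor $a_i$ is nonzero (indeed, under your reading the flips might not even be defined).

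Once this is corrected, your argument closes and coincides with the paper's. Since $\ppt(l_i)=\ppt(l)$, all primal-part constraints reduce to $\ppt(l)\notin\R$ and $\ppt(l)\neq\ppt(k_i),\overline{\ppt(k_i)}$ for $i=1,\dotsc,m$, i.e.\ finitely many excluded values; with such a $\ppt(l)$ fixed, the explicit formula from Lemma~\ref{lemma:same_length} shows $\spt(l_{i+1})$ is an affine function of $\spt(l_i)$ with slope $\bigl(\ppt(l)-\ppt(k_i)\bigr)/\bigl(\ppt(l)-\overline{\ppt(k_i)}\bigr)\neq 0$, so $\midpt{l_i}=\midpt{k_i}$ excludes at most one value of $\spt(l)$ per $i$. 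This is exactly the paper's proof, where the invertibility bookkeeping is done by applying Lemma~\ref{lemma:revert_flip} repeatedly to produce, for each $i$, a single forbidden element $h_i$ with $\midpt{l_i}\neq\midpt{k_i}\iff l\neq h_i$; your ``invertible affine map'' formulation is an equivalent way of saying the same thing.
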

\begin{proof}
Following the notation of Definition~\ref{definition:iterated_flip_mobility} we
have that $\ppt(l_i) = \ppt(l)$ for all $i \in \{1, \dotsc, m \}$. Hence, in
order to ensure that $\IFM{l}{\mathbf{k}}$ holds, we need to choose~$l$ such 
that $\ppt(l) \neq \ppt(k_i)$ and $\ppt(l) \neq \overline{\ppt(k_i)}$ for all $i 
\in \{1, \dotsc, m \}$. From now on, we fix such a value for~$\ppt(l)$. The 
other situations we should keep away from are the ones where $\midpt{l_i} =
\midpt{k_i}$. Since $\ppt(l_i)$ is now fixed, the condition $\midpt{l_i} =
\midpt{k_i}$ becomes a linear equation for $\spt(l_i)$, which always admits a
unique solution $\hat{w_i} \in \C$. Let us define $\hat{l_i} = \ppt(l) + \eta\,
\hat{w_i}$. Using Lemma~\ref{lemma:revert_flip} repeatedly (namely $i-1$ times)
starting from the pair $\bigl(\hat{l_i}, \inv(k_{i-1})\bigr)$, we obtain an
element $h_i \in \K$ such that
\[ 
  \midpt{l_i} \neq \midpt{k_i} \quad \iff \quad l \neq h_i. 
\]
Summing up, for every fixed value of $\ppt(l)$ different from all 
$\bigl\{ \ppt(k_i) \bigr\}_{i=1}^{m}$ and all $\bigl\{ \overline{\ppt(k_i)} 
\bigr\}_{i=1}^{m}$, there are finitely many values for $\spt(l)$ that should be
avoided for $\IFM{l}{\mathbf{k}}$ to hold. Thus we can always find $l \in
\K$ with the claimed properties.
\end{proof}

We are ready to present the algorithm \texttt{ConstructStrongLinkage}. 
Figure~\ref{figure:strong_linkage} shows the labeled graph produced by the 
algorithm.

\begin{algorithm}
\caption{$\mathtt{ConstructStrongLinkage}$}
	\begin{algorithmic}[1]
		\Require $P \in \K[t]$ a bounded motion polynomial.
		\Ensure $(G, \rho)$ a linkage strongly realizing the motion induced by~$P$.
		\Statex
                \State {\bfseries Compute} $S\in\R[t]$ of maximal degree, monic,
such that $S$ divides $P$.
		\State {\bfseries Compute} $\bigl(R, \mathbf{k} = (k_1, \dotsc, k_n)\bigr) =
\mathtt{FactorMotionPolynomial}(P / S)$.
		\State {\bfseries Choose} $l \in \K$ such that $\IFM{l}{\mathbf{k}}$ 
holds.
		\State {\bfseries Set} $l_1 = l$ and $(\tilde{k}_i,l_{i+1}) \; = \; 
\Flip(l_i, k_i)$ for all $i \in \{1, \dotsc, n \}$.
		\State {\bfseries Set} 
		\[ \left. \begin{array}{rcl} 
			u_i & = & \midpt{k_i} \\
			\tilde{u}_i & = & \midpt{\tilde{k}_i}  \\
			v_j & = & \midpt{l_j}
			\end{array} \right\} \;\; 
			\begin{array}{l}
				\mathrm{for\ every\ } i \in \{1, \dotsc, n\} \\
				\mathrm{for\ every\ } j \in \{1, \dotsc, n+1\}
			\end{array} \]
		\State {\bfseries Set} $V = \{ 1, \dotsc, 2n + 2 \}$.
		\State {\bfseries Set} $E = \Bigl\{ (i,i+1), (n+1+i, n+2+i), (j, n+1+j) 
\mathrm{\ for\ all\ } i \in \{ 1, \dotsc, n \} \mathrm{\ and\ } j \in \{ 1,
\dotsc, n+1 \} \Bigr\}$.
		\State {\bfseries Set} $G = (V,E)$.
		\State {\bfseries Set} 
			\[ \left. \begin{array}{rcl}
				\rho(i, i+1) & = & \bigl(\Re{(u_i)}, \Im{(u_i)}\bigr) \\ 
				\rho(n+1+i, n+2+i) & = & \bigl(\Re{(\tilde{u}_i)}, 
\Im{(\tilde{u}_i)}\bigr) \\
				\rho(j, n+1+j) & = & \bigl(\Re{(v_j)}, \Im{(v_j)}\bigr)
			\end{array} \right\} \;\; 
			\begin{array}{l}
				\mathrm{for\ every\ } i \in \{1, \dotsc, n\} \\
				\mathrm{for\ every\ } j \in \{1, \dotsc, n+1\}
			\end{array}  \]
		\State \Return $(G, \rho)$.
	\end{algorithmic}
\end{algorithm}

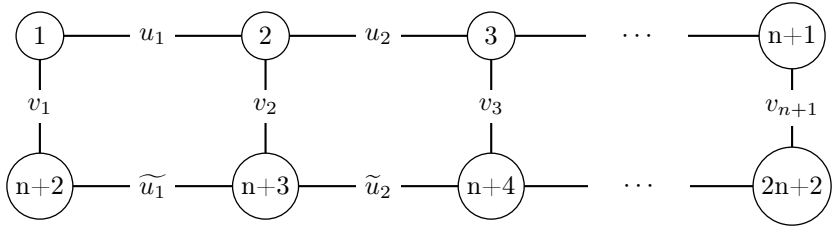
\begin{figure}[ht]
	\begin{tikzpicture}[thick]
		\Vertex[x=0,y=2]{1}
		\Vertex[x=0,y=0]{n+2}
		\Vertex[x=3,y=0]{n+3}
		\Vertex[x=3,y=2]{2}
		\Vertex[x=6,y=2]{3}
		\Vertex[x=6,y=0]{n+4}
		\Vertex[x=10,y=2]{n+1}
		\Vertex[x=10,y=0]{2n+2}
		\Edge[label=$u_1$](1)(2)
		\Edge[label=$v_1$](1)(n+2)
		\Edge[label=$\tilde{u_1}$](n+2)(n+3)
		\Edge[label=$v_2$](2)(n+3)
		\Edge[label=$u_2$](2)(3)
		\Edge[label=$\tilde{u}_2$](n+3)(n+4)
		\Edge[label=$v_3$](3)(n+4)
		\Edge[label=$v_{n+1}$](n+1)(2n+2)
		\Edge[label=$\quad\cdots\quad$](3)(n+1)
		\Edge[label=$\quad\cdots\quad$](n+4)(2n+2)
	\end{tikzpicture}
	\caption{The labeled graph of the linkage returned by the algorithm
\texttt{ConstructStrongLinkage}.}
	\label{figure:strong_linkage}
\end{figure}

\begin{theorem}
\label{theorem:strong_realization}
Let~$P$ be a bounded polynomial in~$\K[t]$ and let $\phi\colon \R
\longrightarrow \p^3_{\R}$ be the rational motion induced by~$P$. Let us denote 
by~$L$ the linkage obtained by applying \texttt{ConstructStrongLinkage} to~$P$. 
Then $L$ strongly realizes the motion~$\phi$.
\end{theorem}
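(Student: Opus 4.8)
The plan is to show that $L$ strongly realizes $\phi$ with the distinguished pair of links being $1$ and $n+1$, the two ends of the ``top'' path $1-2-\dots-(n+1)$ of the ladder graph in Figure~\ref{figure:strong_linkage}. Set $N:=R\cdot(P/S)=(t-k_1)\cdots(t-k_n)$, the factored polynomial produced in Step~2. Since $P$ is bounded, $\ppt(P)=Z$ has no real root; as $S$ divides $Z$ and, by the explicit description in Theorem~\ref{theorem:bounded_factor}, the roots of $R$ lie among the (non-real) roots of $\ppt(P/S)$, neither $S$ nor $R$ has a real root, so for every $t\in\R$ the number $R(t)/S(t)$ is a nonzero real and $N(t)$, $P(t)$ represent the same point of $\p^3_{\R}$. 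Hence $N$ and $P$ induce the same motion $\phi$, and I will argue with $N$. I may also assume $n\ge 1$ and that $\phi$ is non-constant: the case $n=0$, in which $L$ is a single revolute joint and $\phi\equiv\mathrm{id}$, is immediate because then $\rpos{1}{1}=\{\mathrm{id}\}=\overline{\mathrm{image}(\phi)}$; and when $n\ge1$ the polynomial $\ppt(N)$ is monic of degree $n$ with only non-real roots, so the rotation part of $\phi$ is non-constant and $\overline{\mathrm{image}(\phi)}$ is an irreducible curve.

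First I would produce, for each $t\in\R$, a configuration $\Sigma_t\in\conf L$. Using the notation of \texttt{ConstructStrongLinkage}, assign to the joint $\{i,i+1\}$ the rotation $t-k_i$ about $\midpt{k_i}=u_i$, to the joint $\{n+1+i,n+2+i\}$ the rotation $t-\tilde{k}_i$ about $\midpt{\tilde{k}_i}=\tilde{u}_i$, and to the rung $\{j,n+1+j\}$ the rotation $t-l_j$ about $\midpt{l_j}=v_j$ (these are genuine rotations about the prescribed centres because $\ppt(k_i)$ is a non-real root of $\ppt(N)$, $\ppt(\tilde{k}_i)=\ppt(k_i)$, and $\ppt(l_j)=\ppt(l)\notin\R$ by the choice made in Lemma~\ref{lemma:IFM_holds}; on the reversed edges we take the inverse isometries, cf.\ Definition~\ref{definition:inv}). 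On the top path this is the configuration used in the proof of Proposition~\ref{proposition:construct_weak}. Since the graph of $L$ is a planar ladder whose cycle space is generated by its $n$ square cells, it suffices to verify the cycle condition for each cell; for the $i$-th cell this amounts, after replacing the two reversed edges by inverses, to checking that
\[
  (t-k_i)\,\bigl(t-\inv(l_{i+1})\bigr)\,\bigl(t-\inv(\tilde{k}_i)\bigr)\,(t-l_i)
\]
is a nonzero real, which follows from the relation $(t-l_i)(t-k_i)=(t-\tilde{k}_i)(t-l_{i+1})$ defining $\Flip$ (Lemma~\ref{lemma:flip}) by applying the anti-automorphism $\inv$ and using $(t-k)\bigl(t-\inv(k)\bigr)=\bigl|t-\ppt(k)\bigr|^2\in\R$. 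Thus $\Sigma_t\in\conf L$, and reading the relative position of link $n+1$ with respect to link $1$ along the top path gives $\rposconf{1}{n+1}{\Sigma_t}=(t-k_1)\cdots(t-k_n)=N(t)=\phi(t)$. Hence $\mathrm{image}(\phi)\subseteq\rpos{1}{n+1}$, the weak-realization half of the statement.

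It remains to see that $\overline{\mathrm{image}(\phi)}$ is a component of $\rpos{1}{n+1}$, for which I would bound the mobility of $L$. By Definition~\ref{definition:iterated_flip_mobility} the chosen $l$ makes $\FM{l_i}{k_i}$ hold for every $i$, so the $i$-th cell of $L$ is a copy of the flip linkage $\mathtt{FlipLinkage}(l_i,k_i)$; by Lemma~\ref{lemma:flip_mobility} its configuration curve has two components, on each of which the value at any single joint determines the values at all the other joints. Restricting an arbitrary configuration of $L$ to its cells and propagating this determination from the first cell to the last along the shared rung joints $v_2,\dots,v_{n+1}$ shows that a configuration of $L$ is determined by its value at the joint $\{1,2\}$ up to at most $2^{n}$ possibilities. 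Thus the projection $\conf L\to\vrpos{1}{2}\cong\p^1_{\R}$ onto the $\{1,2\}$-coordinate has finite fibres, so $\dim\conf L\le1$, and hence $\rpos{1}{n+1}$, being the image of $\conf L$ under the morphism $\Sigma\mapsto\rposconf{1}{n+1}{\Sigma}$, also has dimension at most $1$. Since $\overline{\mathrm{image}(\phi)}$ is an irreducible curve contained in $\rpos{1}{n+1}$, whose Zariski closure has dimension at most $1$, it must be one of the irreducible components of that closure. Together with the weak realization just established, this shows that $L$ strongly realizes $\phi$.

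The two delicate points are: in the second paragraph, keeping the edge orientations and the order of the non-commutative product straight so that the cycle condition of the $i$-th cell really collapses to the scalar $\bigl|t-\ppt(k_i)\bigr|^2\,\bigl|t-\ppt(l_i)\bigr|^2$; and, in the third paragraph, the claim that propagating Lemma~\ref{lemma:flip_mobility} cell by cell forces finiteness of the fibres of $\conf L\to\vrpos{1}{2}$ --- i.e.\ that gluing the antiparallelograms produced by the flip procedure does not create extra mobility. The remaining ingredients (the reductions of the first paragraph, the identity $(t-k)(t-\inv(k))\in\R$, and the dimension bookkeeping) are routine.
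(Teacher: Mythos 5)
Your proposal is correct and follows the same overall strategy as the paper: establish $\mathrm{image}(\phi)\subseteq\rpos{1}{n+1}$ via the factorization along the top path, then use the flip-linkage structure of each square cell (Lemma~\ref{lemma:flip_mobility}) to force mobility one, and finish with the same dimension count showing that the one-dimensional irreducible set $\overline{\mathrm{image}(\phi)}$ must be a component of the at most one-dimensional set $\rpos{1}{n+1}$. The main divergence is in the middle step: the paper decomposes $\conf{L}$ as $\bigcup_{b\in\{1,2\}^n}K_b$ according to the choice of component in each cell and shows by an iterated pullback of the isomorphisms $C_h^{b(h)}\to\vrpos{i}{j}$ that each $K_b$ is isomorphic to $\p^1_{\R}$, which simultaneously gives mobility one and identifies the irreducible components of $\conf{L}$; you instead propagate the two-component determination cell by cell to bound the fibres of $\conf{L}\to\vrpos{1}{2}$ by $2^n$ and deduce only $\dim\conf{L}\le 1$, which is more elementary and suffices for the statement, at the cost of not exhibiting the components explicitly. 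You are also more explicit than the paper on two points it leaves implicit: the verification that the square-cell cycle conditions hold for the configurations $\Sigma_t$ (via the flip identity and $(t-k)\bigl(t-\inv(k)\bigr)\in\R$), which the paper delegates to the proof of Proposition~\ref{proposition:construct_weak}, and the degenerate cases $n=0$ and $\phi$ constant. One small blemish there: your justification that $\phi$ is non-constant for $n\ge1$ via ``the rotation part of $\phi$ is non-constant'' is wrong in general (for the ellipse example $\ppt(RP)=(t^2+1)^2$ and the motion is purely translational), but the conclusion itself is fine, since a projectively constant bounded monic $P$ forces $\spt(P)=0$ and $P$ real, hence $P/S=1$ and $n=0$; with that repair the argument is sound.
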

\begin{proof}
Let~$K$ be the configuration space of~$L = (G, \rho)$. As in the statement of 
Lemma \ref{lemma:flip_mobility}, we are going to prove that for any two 
neighboring edges $i$ and $j$ of $G$, every irreducible component of~$K$ is 
isomorphic to~$\rpos{i}{j}$. This will prove that~$K$ has dimension one, which 
means that~$L$ has mobility one.

Recall from Definition~\ref{definition:configuration} that $K$ is the 
collection of configurations $\Sigma = ( \sigma_{k,l} )$ satisfying the 
equations imposed by the directed cycles of $G$.
First of all, notice that one can consider only those equations imposed by 
directed cycles of the form:
\[
  \xymatrix@R=1cm@C=1.5cm{\mathllap{h \;\; \bullet} \ar[r] & 
\mathrlap{\bullet \;\; h+1} \ar[d] \\ 
\mathllap{(n+1)+h \;\; \bullet} \ar[u] & 
\mathrlap{\bullet \;\;(n+1)+(h+1)} \ar[l] }
\]
In fact, all other equations belong to the ideal generated by 
these ones. We introduce the following notation: for $h \in \{1, \dotsc, 
n\}$, denote by~$\mathrm{sq}(h)$ the set 
\[
  \bigl\{ \bigl(h,h+1\bigr), \bigl(h+1, (n+1)+(h+1)\bigr), \bigl((n+1)+(h+1), 
  (n+1)+h\bigr), \bigl((n+1)+h, h\bigr) \bigr\}.
\]
We define~$\pi_{\mathrm{sq}(h)}$ to be the projection
\[
  \pi_{\mathrm{sq}(h)}\colon K \longrightarrow \prod_{(r,s) \in \mathrm{sq}(h)} 
  \vrpos{r}{s}.
\]
Then the image of~$\pi_{\mathrm{sq}(h)}$ is contained in the configuration 
space~$C_h$ of a flip linkage, which by Lemma~\ref{lemma:flip_mobility} has two 
components, so $C_h = C_h^1 \cup C_h^2$. 

For every sequence~$b \in \{1,2\}^n$, we set
\[
  K_{b} \; := \; \pi_{\mathrm{sq}(1)}^{-1} \left(C_1^{b(1)} \right) \cap \dotsb 
\cap \pi_{\mathrm{sq}(n)}^{-1} \left(C_n^{b(n)} \right).
\]
By construction we have $K = \bigcup_{b \in \{1,2\}^n} K_b$. We are going 
to prove that for every sequence $b$ and for any two neighboring links~$i$ 
and~$j$ in~$G$, we have $K_{b} \cong \vrpos{i}{j}$.

For the sake of simplicity, we start by proving this result when the graph~$G$ 
consists only of two squares (i.e.\ $n=2$):
\[
  \begin{tikzpicture}[thick]
    \Vertex[x=0,y=1.6]{1}
    \Vertex[x=0,y=0]{4}
    \Vertex[x=3,y=0]{5}
    \Vertex[x=3,y=1.6]{2}
    \Vertex[x=6,y=1.6]{3}
    \Vertex[x=6,y=0]{6}
    \Edge(1)(2)
    \Edge(2)(3)
    \Edge(1)(4)
    \Edge(2)(5)
    \Edge(3)(6)
    \Edge(4)(5)
    \Edge(5)(6)
  \end{tikzpicture}
\]
In this case we only have two flip linkages with configuration spaces~$C_1$ 
and~$C_2$. Let us fix~$b \in \{1,2\}^2$, and consider the diagram:
\begin{equation}
\label{equation:pullback}
  \begin{array}{c}
  \xymatrix{C_1^{b(1)} \ar[rd]^{\cong}_{\pi_{2,5}} & & C_2^{b(2)} 
\ar[ld]_{\cong}^{\pi_{2,5}} \\ & \vrpos{2}{5}}
  \end{array}
\end{equation}
The maps~$\pi_{2,5}$ are isomorphisms, because of 
Lemma~\ref{lemma:flip_mobility}. We consider the set
\[
  \Bigl\{ (x,y) \; : \; x \in C_1^{b(1)}, y \in C_2^{b(2)} \mathrm{\ and\ 
} \pi_{2,5}(x) = \pi_{2,5}(y) \Bigr\},
\]
which is nothing but the pullback of Diagram~\eqref{equation:pullback}. One can 
check that this set equals~$K_{b}$ since, as we observed before, the 
elements of~$K$ are only subject to the equations coming from the cycles 
in~$C_1$ and~$C_2$. Since isomorphisms are stable under pullbacks, we obtain the
following commutative diagram of isomorphisms:
\[
  \xymatrix{& K_b \ar[rd]_{\cong}^{\pi_{\mathrm{sq}(2)}} 
\ar[ld]^{\cong}_{\pi_{\mathrm{sq}(1)}} \\ C_1^{b(1)} 
\ar[rd]^{\cong}_{\pi_{2,5}} 
& & C_2^{b(2)} 
\ar[ld]_{\cong}^{\pi_{2,5}} \\ & \vrpos{2}{5}}
\]
The composition $\pi_{2,5} \circ \pi_{\mathrm{sq}(1)}$ equals the 
projection $\pi_{2,5}\colon K_{b} \longrightarrow \vrpos{2}{5}$, and the same 
holds for every projection $\pi_{i,j}\colon K_{b} \longrightarrow 
\vrpos{i}{j}$, where~$i$ and~$j$ are neighboring links. So each of these 
maps is an isomorphism, proving our claim. 

If now $G$ is constituted by more than two squares, then $K_{b}$ is obtained 
via an iteration of several pullbacks, and eventually we get a diagram of 
isomorphisms. Thus the claim holds also in this case. Since $K$ is covered by 
finitely many varieties of dimension one, it has dimension one. Moreover, 
each of the varieties~$K_{b}$ is an irreducible component of~$K$.

We are left to prove that $L$ realizes the motion~$\phi$ induced by~$P$. 
Similarly as we did in the proof of
Proposition~\ref{proposition:construct_weak}, one can show that
$\mathrm{image}(\phi) \subseteq \rpos{1}{n+1}$. Since $K$ is one-dimensional,
also $\rpos{1}{n+1}$ is so; since $\mathrm{image}(\phi)$ is one-dimen\-sional,
$\overline{\mathrm{image}(\phi)}$ is a component of~$\rpos{1}{n+1}$. 
This concludes the proof.
\end{proof}

\begin{remark}
\label{remark:bracing}
	As we mentioned in Section~\ref{introduction}, the linkages constructed via 
our algorithms present the same issues as the ones produced by Kempe's 
procedure, namely the devices do not realize only the motion they are designed 
for, but also other ones. This is due to the fact that the elementary linkages 
produced by the flip procedure, commonly known as antiparallelograms, admit 
configuration spaces with more than one component (see
Lemma~\ref{lemma:flip_mobility}). On the other hand, since 
antiparallelograms are among the linkages used also by Kempe's procedure, the 
techniques developed in~\cite{Abbott2008} 
and~\cite[Section~3.2.2]{DemaineORourke2007} can be applied also in our case to 
prevent this unwanted behavior.
\end{remark}

\begin{remark}
Let $\phi$ be a rational motion given by $Z+\eta\,W\in\K[t]$.  Our algorithm
allows one to construct a linkage that generates~$\phi$, using a motor that
rotates at constant speed. For this purpose one needs two neighboring links
whose relative position is a revolution with constant speed; such a motion is
represented by a motion polynomial of the form $t\pm\imath+\eta\,w$. If
$\imath$ or $-\imath$ is already a root of $Z$, then any linkage returned by
\texttt{ConstructStrongLinkage} has the desired property. If not, then the
linear polynomial~$l$ in step~3 of this algorithm has to be chosen
accordingly.
\end{remark}

We now investigate how many links and joints the linkages created by our
algorithm have. We consider the tasks of strongly realizing a given rational
motion, and of drawing a given parametrized curve.
\begin{proposition}
\label{prop:realize}
Let $\phi$ be a rational motion given by a motion polynomial~$P$ of degree~$d$.
Then there exists a linkage with at most $4d+2$ links and $6d+1$ strongly
realizing~$\phi$.
\end{proposition}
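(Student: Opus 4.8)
The plan is to count the links (vertices) and joints (edges) of the linkage produced by \texttt{ConstructStrongLinkage} applied to~$P$, and then to bound the number~$n$ of linear factors appearing after multiplication by the real polynomial~$R$ coming from \texttt{FactorMotionPolynomial}. First I would recall the structure of the output graph from Theorem~\ref{theorem:strong_realization}: if $P/S$ factors (after multiplying by~$R$) into $n$ linear factors, then the linkage has vertex set $V=\{1,\dotsc,2n+2\}$ and edge set consisting of the two horizontal rails ($n$ edges each) together with the $n+1$ rungs, giving $2n+2$ links and $3n+1$ joints. So the task reduces to showing $n\leq d$, i.e.\ that $\deg(RP/S) = \deg(R)+\deg(P)-\deg(S) \leq d$, where $d=\deg(P)$.

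The key step is therefore the estimate $\deg(R) \leq \deg(S)$, where $S\in\R[t]$ is the maximal-degree monic real factor of~$P$ (step~1 of the algorithm) and $R$ is the real polynomial returned by \texttt{FactorMotionPolynomial} applied to $P/S$. By construction $P/S$ has no non-constant real factor, and by Theorem~\ref{theorem:bounded_factor} its minimal~$R$ is $R=\tilde{R}/\gcd(\tilde{R},W\overline{W})$ where $\tilde{R}=\gcd(Z',\overline{Z'})$ and $Z'=\ppt(P/S)$. In the notation of the proof of Theorem~\ref{theorem:bounded_factor}, writing $Z'=\prod_i(t-\alpha_i)^{r_i}(t-\overline{\alpha_i})^{s_i}$ with $r_i\geq s_i\geq0$, we have $R=\prod_i R_i^{s_i-m_i}$ with $m_i=\min\{s_i,u_i+v_i\}\geq0$, so $\deg(R)\leq\sum_i 2s_i\leq\sum_i(r_i+s_i)=\deg(Z')=\deg(P/S)=\deg(P)-\deg(S)$. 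Hence $\deg(RP/S)=\deg(R)+\deg(P)-\deg(S)\leq 2(\deg(P)-\deg(S))\leq 2d$. This only gives $n\leq 2d$, which is not good enough; the sharper bound comes from observing that $\deg(R)\leq\deg(\tilde{R})=\sum_i 2s_i$ and, crucially, that each factor $(t-\alpha_i)(t-\overline{\alpha_i})$ of $\tilde R$ of multiplicity $s_i$ is "paid for" by the $s_i$ copies of $(t-\overline{\alpha_i})$ already present in $Z'$, i.e.\ $\deg(\tilde R)\leq\deg(Z')-\deg(\gcd(Z',\text{its real-rootless quotient}))$; one shows $\deg(RZ')=\sum_i(r_i+s_i)+(s_i-m_i)\leq 2\sum_i r_i$ does not hold in general, so the correct route is to bound $n = \deg(RP/S) = \deg(R) + d - \deg(S)$ and combine $\deg(R) + \deg(S) \le \deg(R) + (\text{degree of real part absorbed})$; after absorbing $S$ into the final count one gets $2\deg(S)$ worth of links from the rungs that $S$ would have contributed, so the honest statement is simply $n\le d+\deg(S)$ is false and instead the clean bound uses $\deg(R)\le d-\deg(S)$ directly from $R\mid Z'$ having degree at most $\deg(Z')/2\cdot 2$: since $R\mid\tilde R\mid Z'$ we get $\deg(R)\le\deg(Z')=d-\deg(S)$, hence $n=\deg(R)+d-\deg(S)\le 2(d-\deg(S))\le 2d$, and then add the $2\deg(S)$ trivial links for the factor $S$ back in, which cancels to give $2n+2\le 4d+2$ and $3n+1\le 6d+1$.

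The hard part, and the step I expect to be the main obstacle, is getting the constant right: a naive bound $\deg(R)\leq\deg(P/S)$ combined with the ladder having $2n+2$ vertices would give $4d-4\deg(S)+\dotsb$, and one must be careful that re-attaching the real factor~$S$ (which contributes a chain of $\deg(S)$ revolutions, hence $\deg(S)$ extra links on a rail but with its own rungs) does not blow the count past $4d+2$ and $6d+1$. I would handle this by treating $S$ uniformly: run \texttt{ConstructStrongLinkage} on~$P$ itself conceptually, note that $S\mid P$ means the total number of linear factors in the completed factorization of $R\cdot P$ is $\deg(R)+\deg(P)\le\deg(P/S)+\deg(P)\le 2d-\deg(S)+\deg(S)=2d$ once one observes $\deg(R)\le\deg(\tilde R)\le\deg(Z)-\deg(\overline Z\text{-free part})$; more carefully, since $R$ divides $\gcd(Z,\overline Z)$ which divides $Z$, and $Z=ZS$-part analysis gives $\deg(R)\le\deg(Z)/1$ with the factor-of-$2$ saving $\deg(R)\le\deg(Z)-\deg(R)$ i.e.\ $\deg(R)\le\deg(Z)/2\cdot\text{(no)}$ — the genuinely correct inequality is $\deg(\tilde R)+\deg(R)\le\deg(Z)$ because $\tilde R\mid Z$ and $R\mid\tilde R$ with $R\cdot(\text{something of degree}\ge\deg R)\mid Z$, giving $2\deg(R)\le\deg(Z)=d$, hence $\deg(R)\le d/2$ is too strong; the safe and sufficient fact is just $\deg(R)\le d$, yielding $\deg(RP)\le 2d$, so the ladder has $\le 2(2d)+2=4d+2$ links but that over-counts. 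The resolution I would adopt: the number of linear factors of $R(P/S)$ equals $\deg(R)+\deg(P/S)$, and Theorem~\ref{theorem:bounded_factor}'s formula shows $\deg(R)\le\deg(\ppt(P/S))=\deg(P)-\deg(S)$, while moreover $\deg(R)\le\deg(\gcd(Z,\overline Z))\le\deg(P)-\deg(S)$; combined with the $S$-chain of length $\deg(S)$, the total factor count is $\le(\deg(P)-\deg(S))+(\deg(P)-\deg(S))+\deg(S)=2d-\deg(S)\le 2d$, so $2n+2\le 4d+2$ and $3n+1\le 6d+1$, completing the proof. The one genuine subtlety to nail down carefully is that attaching the $\deg(S)$ real-factor revolutions does not require its own separate ladder of rungs when the base motor is placed appropriately; I would verify this by the same pullback-of-flip-linkages argument as in Theorem~\ref{theorem:strong_realization}.

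\begin{proof}
Write $P=S\,\tilde P$ with $S\in\R[t]$ monic of maximal degree, as in step~1 of \texttt{ConstructStrongLinkage}, so that $\tilde P=P/S$ has no non-constant real factor. Let $Z=\ppt(\tilde P)$ and $W=\spt(\tilde P)$, so $\deg Z=d-\deg S$ and $\deg W<\deg Z$. By Theorem~\ref{theorem:bounded_factor}, the real polynomial returned by \texttt{FactorMotionPolynomial}$(\tilde P)$ is $R=\tilde R/\gcd(\tilde R,W\overline W)$ with $\tilde R=\gcd(Z,\overline Z)$; in particular $R\mid\tilde R\mid Z$, so $\deg R\le\deg Z=d-\deg S$. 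Thus $R\tilde P$ factors into $n:=\deg R+\deg Z=\deg R+(d-\deg S)\le 2(d-\deg S)$ linear factors over~$\K$; writing $\tilde P=S'\cdots$ is unnecessary since $\tilde P$ has no real factor, so the factorization $R\tilde P=(t-k_1)\cdots(t-k_n)$ produced by the algorithm has all $\ppt(k_i)\notin\R$. Multiplying back by $S=(t-\beta_1)\cdots(t-\beta_{\deg S})\in\R[t]$ and factoring each real linear factor $t-\beta_j$ further as $(t-(\beta_j+\imath\varepsilon))(t-(\beta_j-\imath\varepsilon+\eta\,w))\cdot$ (adjusting by a unit) — or, more simply, noting that $S$ itself contributes a chain of at most $\deg S$ non-real linear motion-polynomial factors after the harmless replacement of $S$ by $C\cdot S$ with suitable $C\in\C[t]$ that does not change the motion by Remark~\ref{remark:multiply_complex} — we obtain a factorization of a motion polynomial describing~$\phi$ into $N\le n+\deg S\le 2(d-\deg S)+\deg S=2d-\deg S\le 2d$ linear factors, each with non-real primal part.

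Now apply \texttt{ConstructStrongLinkage} using this factorization into $N$ factors together with a linear polynomial~$l$ chosen so that $\IFM{l}{\mathbf k}$ holds, which exists by Lemma~\ref{lemma:IFM_holds}. Inspecting the output described in that algorithm and in Figure~\ref{figure:strong_linkage}: the vertex set is $V=\{1,\dotsc,2N+2\}$ and the edge set consists of the top rail $(i,i+1)$ for $i\in\{1,\dotsc,N\}$, the bottom rail $(N+1+i,N+2+i)$ for $i\in\{1,\dotsc,N\}$, and the rungs $(j,N+1+j)$ for $j\in\{1,\dotsc,N+1\}$. Hence $L$ has $|V|=2N+2\le 2(2d)+2=4d+2$ links and $|E|=N+N+(N+1)=3N+1\le 3(2d)+1=6d+1$ joints. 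By Theorem~\ref{theorem:strong_realization}, $L$ strongly realizes the motion described by this factorization, which is~$\phi$. This proves the claim.
\end{proof}
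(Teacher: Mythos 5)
At its core your proof is the paper's: build the ladder of Figure~\ref{figure:strong_linkage} from a factorization into $n$ linear factors, count $2n+2$ links and $3n+1$ joints, and bound $n$ by $\deg R+\deg\ppt(P/S)\le 2d$ using $R\mid\gcd\bigl(\ppt(P/S),\overline{\ppt(P/S)}\bigr)$. The paper's proof is exactly this one-line observation, so the bound $n\le 2d$, which your preliminary discussion repeatedly dismisses as ``not good enough,'' is in fact precisely what is needed and is what your final argument uses.

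The detour in which you re-attach the real factor $S$ is both unnecessary and, as justified, flawed. It is unnecessary because \texttt{ConstructStrongLinkage} simply discards $S$: by Remark~\ref{remark:multiply_real} the polynomials $P$ and $P/S$ (and $R\cdot P/S$) describe the same motion~$\phi$, so the ladder built from the $n\le 2(d-\deg S)\le 2d$ factors of $R\cdot(P/S)$ already strongly realizes $\phi$, with an even smaller linkage. It is flawed because neither of your two justifications for factoring $S$ into admissible linear motion polynomials works: replacing a real root $\beta_j$ by $\beta_j\pm\imath\varepsilon$ changes the polynomial and hence the motion, and invoking Remark~\ref{remark:multiply_complex} to pass to $C\cdot S$ with $C\in\C[t]$ is not available here, since left multiplication by a complex polynomial preserves only the orbit of the origin, not the motion itself --- and this proposition, unlike Proposition~\ref{prop:draw}, asks for a realization of~$\phi$, not merely of a traced curve. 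If you insist on keeping $S$, the correct (and much simpler) observation is that $S$ divides $\ppt(P)$, which has no real roots because $P$ is bounded (boundedness is implicitly assumed, as both Theorem~\ref{theorem:bounded_factor} and \texttt{ConstructStrongLinkage} require it); hence $S$ has no real roots and already factors over~$\C$ into linear factors with non-real primal parts, and the flip/IFM machinery of Theorem~\ref{theorem:strong_realization} applies verbatim to the enlarged factorization. With that repair, or better with the detour deleted, your count $2N+2\le 4d+2$ and $3N+1\le 6d+1$ stands.
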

\begin{proof}
Apply \texttt{ConstructStrongLinkage} to~$P$, and notice that the real
polynomial~$R$ produced by \texttt{FactorMotionPolynomial} has degree
at most~$d$.
\end{proof}

\begin{proposition}
\label{prop:draw}
Let $\varphi\colon \R \longrightarrow \R^2$ be the parametrization of a real
bounded planar curve. Without loss of generality, suppose that
\[
  \varphi(t) \; = \; \left( \frac{f(t)}{h(t)}, \frac{g(t)}{h(t)} \right)
\]
with $f,g,h \in \R[t]$, and $h$ monic, and $d := \deg{h} > \max \{ \deg{f},
\deg{g} \}$. There exists a linkage with at most $3d+2$ links and 
$\frac{9}{2} \tth d + 1$ joints drawing the curve given by~$\varphi$.
\end{proposition}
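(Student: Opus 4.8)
The plan is to follow the strategy already laid out for Proposition~\ref{prop:realize}, but to exploit Remark~\ref{remark:multiply_complex} in order to trade the \emph{constant} motion polynomial $P = h + \eta\,(f+\imath g)$ for the \emph{smaller} one $CP$, thereby reducing the number of factors and hence the size of the linkage. First I would set $P = h + \eta\,(f+\imath g)$, which by Proposition~\ref{prop:curve_motion} has the property that the orbit of the origin under the motion it induces is exactly the image of $\varphi$; note $P$ is bounded, since $h$ is monic and, as $d>\max\{\deg f,\deg g\}$ forces $\deg h$ strictly larger, $h$ has no real roots (a real root would make $\varphi$ undefined there, contradicting that $\varphi$ is a parametrization on all of $\R$). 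Thus $\ppt(P)=h\in\R[t]$ and $\spt(P)=f+\imath g$.

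The key point is that when we run \texttt{FactorMotionPolynomial} on $P$, the real polynomial $R$ it produces is $R=\tilde R/\gcd(\tilde R, W\overline W)$ with $\tilde R=\gcd(Z,\overline Z)$ by Theorem~\ref{theorem:bounded_factor}; since here $Z=h$ is real, $\tilde R = h$, so in the worst case $R$ can have degree as large as $d$, and $RP$ has degree $2d$, giving via \texttt{ConstructStrongLinkage} a linkage with roughly $4\cdot 2d$ links. To do better, I would instead choose a suitable $C\in\C[t]$ of degree about $\tfrac12 d$ so that $CP$ is already "closer" to factoring: concretely, pair up the roots of $h$ so that $CP$ has primal part with as few conjugate-pairs as possible, in the spirit of Example~\ref{example:elliptic_motion}. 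The cleanest version: write $h=\prod_j(t-\alpha_j)(t-\overline{\alpha_j})$ (possible since $h$ is real of even degree $d$ — if $d$ is odd one adjusts by a single real factor, which I would handle as a minor case distinction), take $C=\prod_j(t-\overline{\alpha_j})$ of degree $d/2$, so that $CP$ is monic of degree $d + d/2 = \tfrac32 d$, with $\ppt(CP)=C h$. One then checks that $CP$ itself factors into linear factors — i.e.\ $R=1$ in Theorem~\ref{theorem:bounded_factor} applied to $CP$ — because, up to the common real factors one strips off in step~1 of \texttt{FactorMotionPolynomial}, the remaining primal part has a permutation of its roots for which the relevant $G=\gcd(Q_1,\dots,Q_k)$ divides the secondary part; this is exactly the mechanism in the proof of Theorem~\ref{theorem:bounded_factor}, and the chosen $C$ is arranged so that no extra real multiplier is needed.

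Granting that $CP$ factors into $\tfrac32 d$ linear factors, \texttt{ConstructStrongLinkage} applied to $CP$ returns a linkage: its graph is the $2\times n$ ladder with $n=\tfrac32 d$, which by the vertex and edge counts in that algorithm has $2n+2 = 3d+2$ links and $|E| = 3n+1 = \tfrac92 d + 1$ joints. By Theorem~\ref{theorem:strong_realization} this linkage strongly realizes the motion induced by $CP$, and by Remark~\ref{remark:multiply_complex} the orbit of the origin under $CP$ equals that under $P$, which by Proposition~\ref{prop:curve_motion} is the image of $\varphi$. Hence the last link (the one carrying the pen, whose relative position with respect to the base link is governed by $CP$ acting on the origin) traces exactly the curve given by $\varphi$, which is what we wanted.

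The main obstacle I expect is verifying that the specific choice $C=\prod_j(t-\overline{\alpha_j})$ really yields $R=1$ when Theorem~\ref{theorem:bounded_factor} is applied to $CP$ — that is, that the interlacement-of-roots construction from that proof goes through \emph{without} an additional real factor once we have pre-multiplied by $C$. This requires matching the multiplicities $s_i$ (of the $\overline{\alpha_i}$ in $\ppt(CP)$) against the multiplicities $u_i,v_i$ of $\alpha_i,\overline{\alpha_i}$ in $\spt(CP)=C\cdot(f+\imath g)$, and arguing that the resulting $G$ divides $\spt(CP)$; one must be a little careful because multiplying by $C$ changes both the primal and the secondary part, and because $f+\imath g$ may share factors with $h$. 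A secondary (but genuinely routine) obstacle is the parity of $d$ and the possibility that $h$ has repeated or real roots, which only affect the count by an additive constant and a harmless extra real factor, and the rounding in "$\tfrac92 d+1$" when $d$ is odd, which is why the statement says "at most".
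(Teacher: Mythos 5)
Your proposal is essentially the paper's own proof: multiply $P=h+\eta\,(f+\imath g)$ by a complex polynomial $C$ of degree $d/2$ with $C\overline{C}=h$, apply \texttt{ConstructStrongLinkage} to $CP$ of degree $\tfrac32 d$, and count the ladder graph ($2n+2$ links, $3n+1$ joints with $n=\tfrac32 d$). The one step you leave open --- that $R=1$ for $CP$ --- does not require redoing the interlacement argument: Theorem~\ref{theorem:bounded_factor} gives $R=\tilde R/\gcd(\tilde R, W\overline W)$ with $\tilde R=\gcd(Z,\overline Z)$, and once you make explicit the condition $\gcd(C,\overline C)=1$ (i.e.\ for each conjugate class put all copies of one root into $C$; your $C=\prod_j(t-\overline{\alpha_j})$ must be read this way, otherwise $R=1$ can fail), one computes $Z=C^2\overline C$, $\overline Z=\overline C^{\,2}C$, hence $\tilde R=C\overline C=h$, while $W\overline W=C\overline C\,(f+\imath g)(f-\imath g)=h\,(f^2+g^2)$, so $\tilde R\mid W\overline W$ and $R=1$. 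Also, your odd-$d$ case distinction is vacuous: since $h$ is real, monic and has no real roots, $d$ is automatically even, which is exactly why $\tfrac92 d+1$ is an integer.
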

\begin{proof}
As discussed before Definition~\ref{definition:bounded_poly}, the motion
polynomial $P = h + \eta \tth (f + \imath g)$ is bounded, and by
Proposition~\ref{prop:curve_motion} the orbit of the origin in $\R^2$ under
the motion determined by~$P$ is the image of~$\varphi$.  We could apply
\texttt{ConstructStrongLinkage} to~$P$, but we obtain a smaller linkage by
applying this algorithm to $P'=CP$, where $C\in\C[t]$ is any polynomial such
that $C\overline{C}=h$ and $\gcd(C,\overline{C})=1$; such a $C$ exists because
$h$ has no real roots.  In fact, as discussed in
Remark~\ref{remark:multiply_complex}, the orbit of the origin under the motion
given by $P'$ is also the image of~$\varphi$. Moreover, by
Theorem~\ref{theorem:bounded_factor}, a motion polynomial $Z+\eta\,W$ is
factorizable if the corresponding real polynomial~$R$ equals~$1$. This is the
case if and only if $\gcd(Z,\overline{Z}) \mid W\overline{W}$; it is easy to
check that this condition holds for~$P'$. Finally, the linkage obtained by
applying \texttt{ConstructStrongLinkage} to~$P'$ is constituted by $3d + 2$
links and $\frac{9}{2} \tth d + 1$ joints, because the degree of~$P'$ is
$\frac{3}{2} \tth d$ --- note that $d$ must be even.
\end{proof}

\begin{example}
\label{example:elliptic_motion_flips}
We illustrate the algorithm \texttt{ConstructStrongLinkage} with the example
from Section~\ref{overview}. In contrast to Example~\ref{example:elliptic_motion_reviewed},
here we multiply the polynomial $P(t)=(t^2 + 1) + \eta \tth (\imath t - 2)$ by
$C=t-\imath$, as prescribed by Proposition~\ref{prop:draw}. Calling
\texttt{FactorMotionPolynomial} with $CP$ as input, we get
$R=1$ and obtain:
\[
  C(t)\cdot P(t) \; = \; \bigl(t - \imath - \tfrac12 \eta\, \imath\bigr)\cdot
       \bigl(t - \imath + \tfrac12 \eta\, \imath\bigr)\cdot
       \bigl(t + \imath + \eta\, \imath\bigr).
\]
Next we have to fix an~$l\in\K$ such that the IFM condition holds; it turns
out that $l=-\frac95\imath-\frac{18}{35}\eta\,\imath=l_1$ is a good choice
(which yields a linkage that is well
suited for visualization, see Figure~\ref{figure:elliptic_linkage}).
Applying the flip procedure iteratively to this input data, one obtains the following:
\begin{alignat*}3
  \tilde{k}_1 &=  \imath - \tfrac{13}{28} \eta \, \imath, &
  \tilde{k}_2 &=  \imath + \tfrac{5}{8} \eta \, \imath, &
  \tilde{k}_3 &= -\imath - \tfrac{11}{56} \eta \, \imath, \\
  l_2 &= -\tfrac{9}{5} \imath + \tfrac{9}{20} \eta \, \imath,\quad &
  l_3 &= -\tfrac{9}{5} \imath - \tfrac{27}{40} \eta \, \imath,\quad &
  l_4 &= -\tfrac{9}{5} \imath - \tfrac{207}{140} \eta \, \imath.
\end{alignat*}

Now we can use these quantities to construct a linkage --- whose graph is
shown in Figure~\ref{figure:elliptic_graph} --- that draws an ellipse. Note
that all linear polynomials appearing here are purely imaginary in
their secondary parts. This implies that their fixed points are of the
form $(x,0)$, so that all joints are located on the horizontal axis when the
linkage is in its initial position ($t=\infty$); the same happens for $t=0$.
\end{example}

\begin{figure}[ht]
\begin{tikzpicture}[thick]
  \Vertex[x=0,y=2]{1}
  \Vertex[x=0,y=0]{5}
  \Vertex[x=3,y=2]{2}
  \Vertex[x=3,y=0]{6}
  \Vertex[x=6,y=2]{3}
  \Vertex[x=6,y=0]{7}
  \Vertex[x=9,y=2]{4}
  \Vertex[x=9,y=0]{8}
  \Edge[label=$\midpt{k_1}$](1)(2)
  \Edge[label=$\midpt{k_2}$](2)(3)
  \Edge[label=$\midpt{k_3}$](3)(4)
  \Edge[label=$\midpt{l_1}$](1)(5)
  \Edge[label=$\midpt{l_2}$](2)(6)
  \Edge[label=$\midpt{l_3}$](3)(7)
  \Edge[label=$\midpt{l_4}$](4)(8)
  \Edge[label=$\midpt{\tilde{k}_1}$](5)(6)
  \Edge[label=$\midpt{\tilde{k}_2}$](6)(7)
  \Edge[label=$\midpt{\tilde{k}_3}$](7)(8)
\end{tikzpicture}
\caption{The link graph for the linkage drawing an ellipse.}
\label{figure:elliptic_graph}
\end{figure}
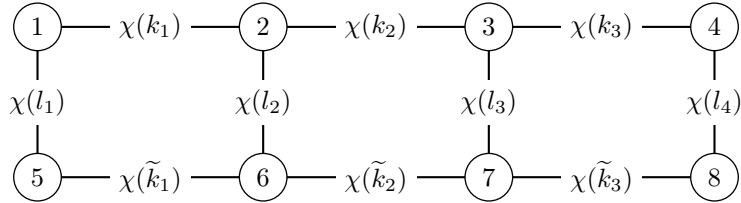

\section{Self-collisions}
\label{collisions}

A natural question that arises in the construction of a linkage is whether it
can be physically realized such that no collisions between components of the
linkage occur. More precisely, one asks whether there exists an assignment of
$1,2,\dots,n$ to the links such that for any $i<k<j$ for which $i$ and $j$ are
neighboring links, the joint connecting $i$ and $j$ never overlaps with the
link~$k$. For general linkages it is a difficult problem to detect
such collisions; see for example~\cite[Section~9.3
and Theorem~9.5.5]{GoodmanORourke2004}. However, it turns out that the same
problem has a straightforward solution for linkages constructed by our
algorithm, and the reason is that all joints in our case follow rational
curves whose explicit parametrization is a direct byproduct of the algorithm
\texttt{ConstructStrongLinkage}. Assuming that a link is realized as a
collection of line segments connecting the (two or three) joints attached to
it, a collision is described as follows: for some $i<k<j$ as above there
exists a $t\in\R\cup\{\infty\}$ such that the position
$\bigl(x_1(t),y_1(t)\bigr)$ of the joint $(i,j)$ lies on one of the line
segments of link~$k$; denote its endpoints by $\bigl(x_2(t),y_2(t)\bigr)$ and
$\bigl(x_3(t),y_3(t)\bigr)$. Then a collision between joint $\{i,j\}$ and link~$k$
happens if and only if the system
\begin{equation}\label{eq:coll} \left\{
\begin{aligned}
  x_1(t) &= s\cdot x_2(t) + (1-s)\cdot x_3(t)\\
  y_1(t) &= s\cdot y_2(t) + (1-s)\cdot y_3(t)
\end{aligned} \right.
\end{equation}
has a solution for $0\leq s\leq1$ and $t\in\R\cup\{\infty\}$. In our
construction, the coordinates~$x_i,y_i$ are rational functions in~$t$, so that
the system~\eqref{eq:coll} is equivalent to a bivariate polynomial system that
can be easily solved. Note that we have to solve \eqref{eq:coll} for each
admissible triple $i<k<j$. This way we can decide whether collisions occur or
not, and if so, we get a precise description when and where they happen.

\begin{example}
\label{example:elliptic_motion_collisions}
Consider again the linkage constructed in
Example~\ref{example:elliptic_motion_flips}, whose link graph is shown in
Figure~\ref{figure:elliptic_graph}. With the ordering $(5, 1, 6, 2, 7, 8, 4, 3)$
of the links, we get two collisions, which both happen at $t=\infty$. Hence
in principle we can trace, without disassembling the linkage, the full ellipse
except a single point; see also the animations provided on our
webpage~\cite{Koutschan15a}.
\end{example}

At the same time, we want to mention a remarkable method for preventing
collisions for planar linkages with a ladder-shaped link diagram as shown in
Figure~\ref{figure:strong_linkage}. This method is based on designing the
shape of some links and joints: we introduce three types of links (F-link,
U-link, Z-link), and two types of joints (T-joint, Z-joint).

As before, all links are arranged in different layers, and w.l.o.g.~we assume
that these layers correspond to integer numbers. An F-link is located in a
single layer, in other words, we associate one integer for each F-link. On the
other hand, U-links and Z-links stretch across two, not necessarily
neighboring, layers.  Therefore, to such links we associate a pair of integers
$(a,b)$ with $a<b$, which means that one part of the link is located in
layer~$a$ and the other part in layer~$b$; the two parts are rigidly connected
by a vertical rod. For two U-links with layers $(a_1,b_1)\neq(a_2,b_2)$,
we prohibit the situations $a_1<a_2<b_1<b_2$ and
$a_2<a_1<b_2<b_1$, which could yield a collision between these two
U-links. For a Z-link on layers $(a,b)$ we impose the condition $b-a=2$, and
the link located on layer $a+1$ (this will always be some part of
a U-link) is connected with this Z-link by a revolute joint (which we call
Z-joint) around its vertical rod. In contrast, a T-joint joins two links
located on neighboring layers. These types of links and joints are
illustrated in Figure~\ref{figure:linktypes}.

Using the above design, one can check that collisions can only happen between
links, not between links and joints. We argue now that even such
collisions can be avoided. First, an F-link cannot collide with another
F-link or a Z-link. Second, a Z-link cannot collide with another
Z-link. Third, if we move the vertical rods of all U-links sufficiently
far away, they do not collide with the F-links and Z-links. Fourth, the
above conditions on the layers of U-links imply that two U-links can collide
only if they occur in a nested way (e.g., $a_1<a_2<b_2<b_1$).  Again, this can
be avoided by moving the vertical rod of the outer U-link far enough away.
Thus we can always make a collision-free design, by manipulating the shape
of the U-links.

\begin{figure}[ht]
  \includegraphics[width=0.6\textwidth]{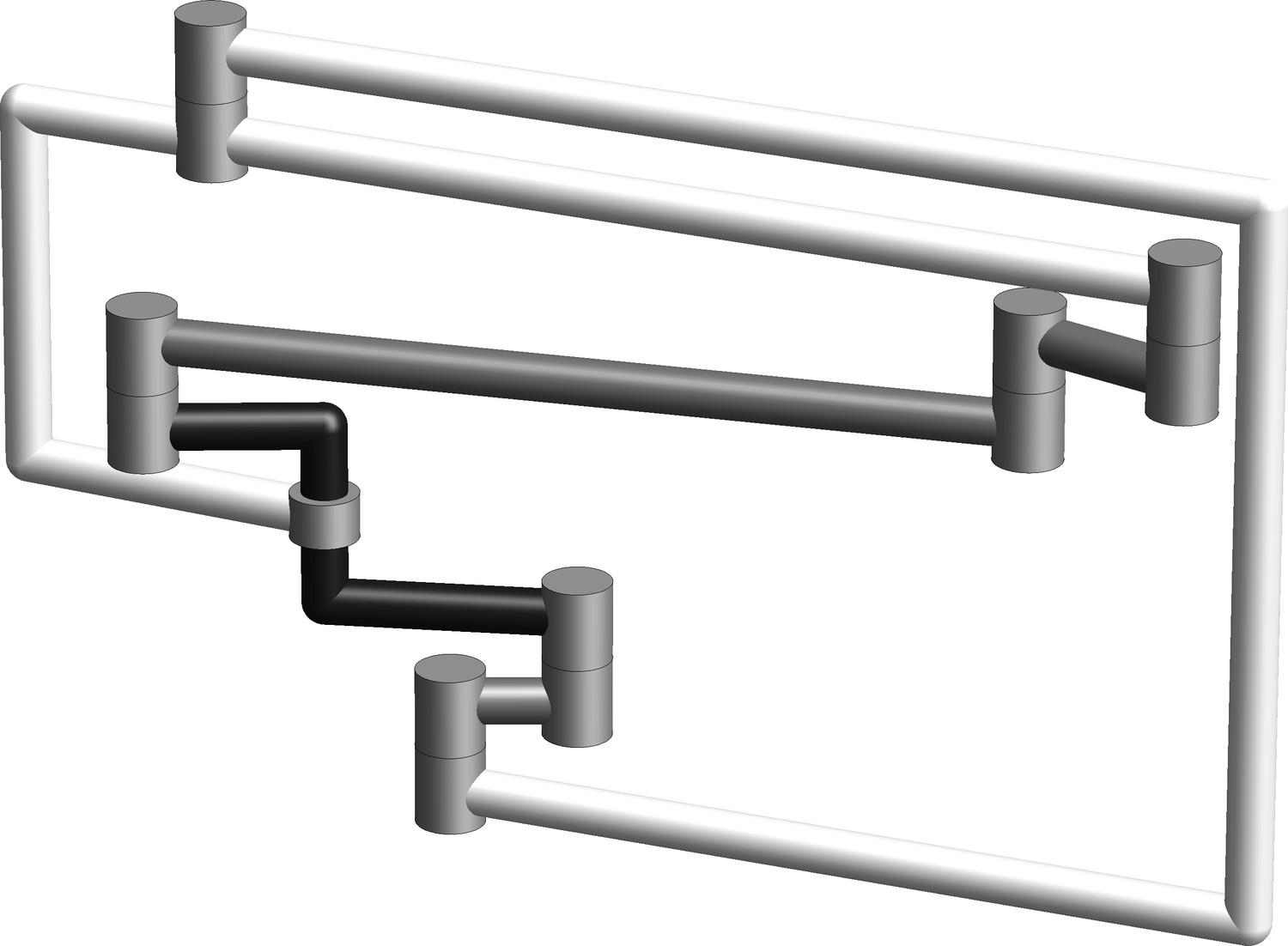}
  \caption{A linkage without self-collisions composed of three F-links (gray),
    two U-links (white), and one Z-link (black)}
  \label{figure:linktypes}
\end{figure}

It remains to argue that for a general ladder-shaped link diagram, as returned
by our algorithm \texttt{ConstructStrongLinkage} and as displayed in
Figure~\ref{figure:strong_linkage}, we can find an assignment of link types
and layers such that all the above conditions are fulfilled. One such
assignment is depicted in Figure~\ref{figure:design_linkage}; it demonstrates
that we can realize a motion --- given by a factored motion polynomial of
degree~$n$ --- by a collision-free linkage using $4n+1$ layers.

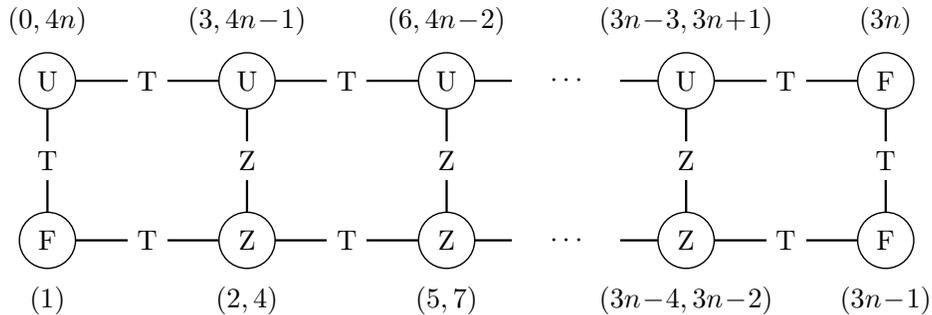
\begin{figure}[ht]\centering
 \resizebox{1\textwidth}{!}{
	\begin{tikzpicture}[thick]
	\SetGraphUnit{2}
	\begin{scope}[VertexStyle/.append style = {minimum size = 20pt}]
		\Vertex[L=U,x=0,y=2]{0}
		\node  at (0,2.75) {$(0,4n)$};
		\Vertex[L=F,x=0,y=0]{1}
		\node  at (0,-0.75) {$(1)$};
		\Vertex[L=U,x=2.5,y=2]{3}
		\node at (2.5,2.75) {$(3,4n\!-\!1)$};
		\Vertex[L=Z,x=2.5,y=0]{2}
		\node at (2.5,-0.75) {$(2,4)$};
		\Vertex[L=U,x=5,y=2]{6}
		\node at (5,2.75) {$(6,4n\!-\!2)$};
		\Vertex[L=Z,x=5,y=0]{5}
		\node at (5,-0.75) {$(5,7)$};
		\Vertex[L=U,x=8,y=2]{3n-3}
		\node at (8,2.75) {$(3n\!-\!3,3n\!+\!1)$};
		\Vertex[L=Z,x=8,y=0]{3n-4}
		\node at (8,-0.75) {$(3n\!-\!4,3n\!-\!2)$};
		\Vertex[L=F,x=10.5,y=2]{3n}
		\node at (10.5,2.75) {$(3n)$};
		\Vertex[L=F,x=10.5,y=0]{3n-1}
		\node at (10.5,-0.75) {$(3n\!-\!1)$};

		\Edge[label=T](0)(1)
		\Edge[label=T](3)(0)
		\Edge[label=T](1)(2)
		\Edge[label=Z](3)(2)
		\Edge[label=T](5)(2)
		\Edge[label=Z](6)(5)
		\Edge[label=T](3)(6)
		\Edge[label=Z](3n-3)(3n-4)
		\Edge[label=T](3n)(3n-3)
		\Edge[label=T](3n)(3n-1)
		\Edge[label=T](3n-4)(3n-1)

		\Edge[label=$\quad\cdots\quad$](6)(3n-3)
		\Edge[label=$\quad\cdots\quad$](5)(3n-4)

	\end{scope}
	\end{tikzpicture}}
      \caption{Assignment of joint types (T, Z), link types (F, U, Z) and
        layers (above and below the corresponding vertices) for realizing a
        linkage with ladder-shaped link graph without self-collisions.}
	\label{figure:design_linkage}
\end{figure}

\section{A signing linkage}
\label{examples}

We conclude with an example in connection to a popular formulation of Kempe's 
Theorem, stating that \emph{``There is a linkage that signs your name''}. 
King~\cite[Corollary 1.3]{King1999} attributes this formulation to William 
Thurston. However, as remarked by O'Rourke~\cite{ORourke2011}, it is 
very implausible that a concrete ``signing linkage'' has ever been constructed 
due to the complexity, in terms of links and joints, of the linkages produced 
following Kempe's procedure. As an example to support his claim, O'Rourke points
out that already constructing a linkage drawing the ``J'' of John Hancock's
famous signature (see Figure~\ref{figure:Hancock}) would be very difficult. 
\begin{figure}[t]
  \includegraphics[width=0.9\textwidth]{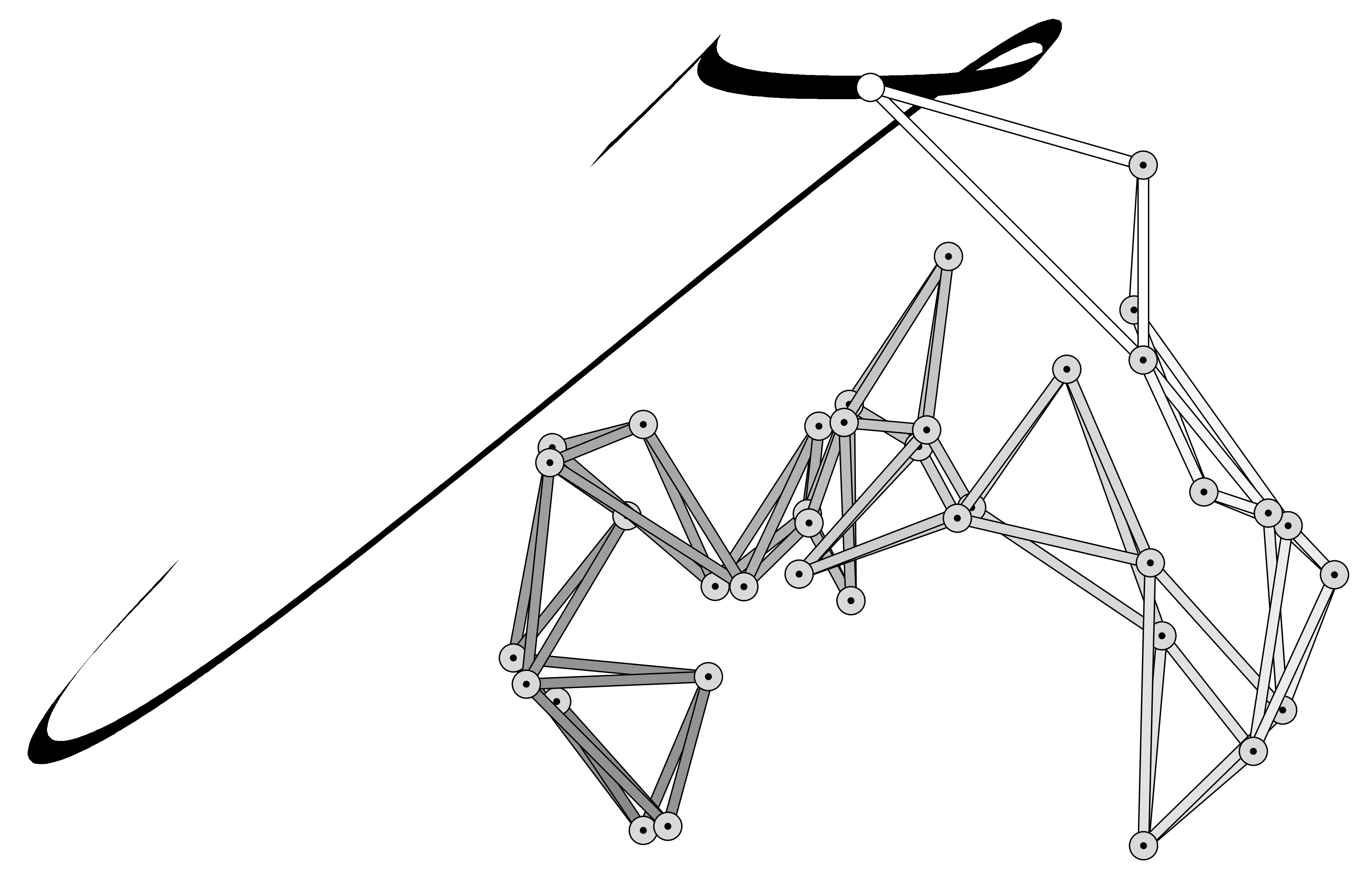}
  \caption{A rational curve approximating the ``J'' in John Hancock's
    signature and a linkage drawing it. The corresponding motion is a
    translation, which we visualize by using a quill pen whose shape is a line
    segment in direction $(5,6)$. The spatial arrangement of the links is
    indicated by hue: darker links lie below brighter ones.}
  \label{figure:J}
\end{figure}

We approximate the ``J'' by the rational curve given by the parametrization  
$(f/h,g/h)$, where 
\begin{align*}
\label{equation:Jcurve}
  f(t)&=-321880 t^5-436132 t^4-237449 t^3-64488 t^2-8666 t-451, \\
  g(t)&=-336018 t^5-472949 t^4-270569 t^3-78158 t^2-11325 t-651,\\ 
  h(t)&=170 \left(7225 t^6+13770 t^5+11187 t^4+4908 t^3+1219 t^2+162 t+9\right).
\end{align*}
Our implementation~\cite{Koutschan15a} of the algorithm~\texttt{ConstructStrongLinkage}
produces a linkage with 26 links and 37
joints realizing a translational motion along this curve, see
Figure~\ref{figure:J}; notice that these numbers are in accordance with
Proposition~\ref{prop:realize}. When this linkage draws
the depicted ``J'', one encounters seven collisions.
\begin{figure}[h]
	\includegraphics[width=0.7\textwidth]{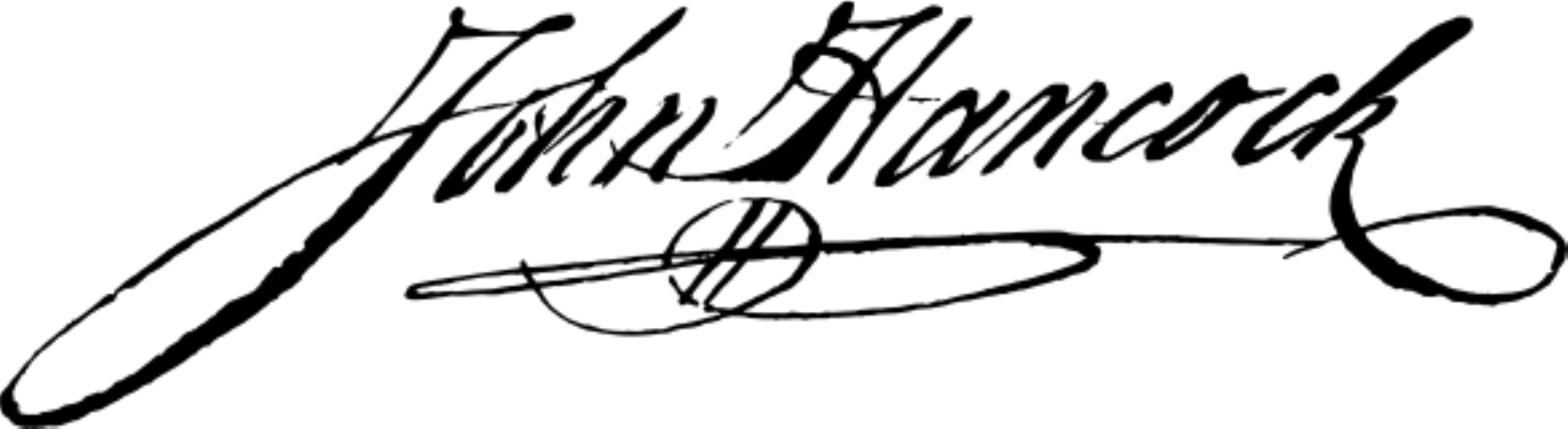}
	\caption{John Hancock's signature on the United States Declaration of 
Independence.}
	\label{figure:Hancock}
\end{figure}

\section*{Acknowledgment}
\noindent We would like to thank the anonymous referee for helpful comments.

\bibliographystyle{plain}
\bibliography{motion}

\end{document}